\DeclareMathOperator*{\argmax}{arg\,max}
\DeclareMathOperator{\KL}{d_{\rm KL}}
\newcommand\wt{\widetilde}
\newcommand\wh{\widehat}
\def\m{\mathcal}
\def\dd{{\rm d}}
\def\mb{\mathbb}
\def\mf{\mathfrak}
\def\wt{\widetilde}
\def\wh{\widehat}
\newcommand\iid{\overset{\text{iid}}{\sim}}
\newcommand{\bx}{{\boldsymbol x}}
\newcommand{\by}{{\boldsymbol y}}
\newcommand{\mN}{\mathcal{N}}
\newcommand{\mfT}{\mathfrak{T}}
\newcommand{\F}{\mathcal{T}}
\newcommand{\E}{\mathrm{E}}
\newcommand{\R}{\mathbb{R}}
\newcommand{\spt}{\mathrm{spt}}
\newcommand{\abs}[1]{\left\vert#1\right\vert}
\newcommand{\norm}[1]{\|#1\|}
\newcommand{\dprod}[1]{\langle#1\rangle}
\theoremstyle{thmstyleone}%
\newtheorem{theorem}{Theorem}%
\newtheorem{lemma}{Lemma}
\theoremstyle{thmstyletwo}%
\newtheorem{remark}{Remark}%
\theoremstyle{thmstylethree}%
\newcommand\blfootnote[1]{%
  \begingroup
  \renewcommand\thefootnote{}\footnote{#1}%
  \addtocounter{footnote}{-1}%
  \endgroup
}
\begin{document}

	\def\spacingset#1{\renewcommand{\baselinestretch}%
		{#1}\small\normalsize} \spacingset{1}

	\title{\bf {An Optimal Transport-Based Generative Model for Bayesian Posterior Sampling}}

		\author{Ke Li,\hspace{.2cm}Wei Han,\hspace{.2cm}Yuexi Wang\hspace{.2cm}and Yun Yang\footnote{Corresponding author.}
			\blfootnote{Ke Li and Wei Han are Ph.D. students at Department of Statistics, University of Illinois at Urbana-Champaign, USA. Yuexi Wang is Assistant Professor at Department of Statistics, University of Illinois at Urbana-Champaign, USA. Yun Yang is Associate Professor at Department of Mathematics, University of Maryland at College Park, USA.   Emails: {\tt kelife94@gmail.com}, {\tt weih2@illinois.edu}, {\tt yxwang99@illinois.edu} and {\tt yy84@umd.edu}. Wang's research is partially supported by NSF grant DMS-2515542. Yang's research is partially supported by NSF grant DMS-2210717. }\\
		}
        \date{\vspace{-1.6cm}}
		\maketitle

	\bigskip
	\begin{abstract}
We investigate the problem of sampling from posterior distributions with intractable normalizing constants in Bayesian inference. Our solution is a new generative modeling approach based on optimal transport (OT) that learns a deterministic map from a reference distribution to the target posterior through constrained optimization. The method uses structural constraints from OT theory to ensure uniqueness of the solution and allows efficient generation of many independent, high-quality posterior samples. The framework supports both continuous and mixed discrete-continuous parameter spaces, with specific adaptations for latent variable models and near-Gaussian posteriors. Beyond computational benefits, it also enables new inferential tools based on OT-derived multivariate ranks and quantiles for Bayesian exploratory analysis and visualization. We demonstrate the effectiveness of our approach through multiple simulation studies and a real-world data analysis.
	\end{abstract}

	\noindent%
	{\it Keywords:} Bayesian Inference, Generative Models, Latent Variable Models, Multivariate Quantiles, Optimal Transport, Posterior Sampling.

	\spacingset{1.65} %

\vspace{-1em}
\section{Introduction}
\label{sec:intro}

\vspace{-0.5em}
A central problem in modern Bayesian statistics is how to conduct valid statistical inference on model parameters in a computationally efficient manner. In a standard Bayesian model, the posterior distribution of the model parameter $\theta$
\vspace{-0.5em}
\begin{align}\label{eq:posterior}
\pi_n(\theta) :\,= p(\theta \,|\, Z^{(n)}) = \frac{\pi(\theta)\, p(Z^{(n)}\,|\, \theta)}{\int_{\Theta} \pi(\theta)\, p(Z^{(n)}\,| \,\theta) \,\dd \theta},\qquad\forall \theta\in\Theta\\[-3em]\notag
\end{align}
is obtained by applying the Bayes rule, where $\pi(\theta)$ denotes the prior distribution (density) over parameter space $\Theta\subset \mb R^p$, $Z^{(n)}=(Z_1,\,Z_2,\ldots,Z_n)$ is a sample of $n$ observations, and $p(Z^{(n)}\,|\, \theta)$ represents the likelihood function specified by the statistical model.
The main hurdle in implementing Bayesian inference is the multi-dimensional integral as the normalization constant in equation~\eqref{eq:posterior}, i.e., the marginal data density  $p(Z^n) = \int_{\Theta} \pi(\theta)\, p(Z^{(n)}\,| \,\theta) \,\dd \theta$. This integral is often mathematically intractable unless a conjugate prior is employed, which however significantly restricts the applicability of the Bayesian procedure to complicated models where a conjugate prior is either difficult to identify or not flexible enough to use. 

To circumvent this issue, two approaches are widely used to enable Bayesian inference without explicit knowledge of the normalizing constant. The first approach is Markov Chain Monte Carlo (MCMC) {\citep{metropolis1953equation, hastings1970monte, geman1984stochastic, robert1999monte,gelman2013bayesian}}, which generates approximate posterior samples by constructing a Markov chain whose equilibrium distribution matches the target. The second approach employs variational approximation {\citep{jordan1999introduction,wainwright2008graphical,blei2017variational}}, which reformulates the integration problem as an optimization problem. Yet, both approaches have notable limitations. MCMC algorithms often suffer from slow mixing, especially in high-dimensional or highly structured models, leading to high correlation among successive draws. On the other hand, while variational approximation scales well to large datasets, it does not guarantee exact sampling from the target density, potentially resulting in biased estimation and unreliable uncertainty quantification.

{This paper builds on the framework of transport map-based generative models for posterior sampling~\citep{el2012bayesian,marzouk2016sampling} and proposes a new class of transport maps based on optimal transport. The goal is to learn a map
$T^*$ that pushes forward i.i.d.\ samples from a simple reference
distribution $\mu$ (e.g., a standard Gaussian or the prior $\pi$) to the
target posterior distribution $\pi_n$. Once learned, $T^*$ enables the efficient generation of an arbitrary number of independent posterior draws, thereby avoiding Markov chain simulation and producing independent samples from an approximation of the target distribution, with computational efficiency comparable to variational approaches. While several classes of transport maps have been explored in \cite{el2012bayesian,hoffman2019neutra,duan2021transport,katzfuss2021scalable},
our contribution lies in the specific structural restriction imposed on the
map. In contrast to prior approaches that allow non-unique transport maps
and may suffer from computational instability and inefficiency,} we constrain $T^*$ to be the optimal transport (OT) map from $\mu$ to $\pi_n$ in the Monge formulation
\citep{santambrogio2015optimal,villani2008optimal}, namely
\vspace{-0.5em}
\begin{align}
T^* = \arg\min_{T\in\mathcal{T}}
\mathbb{E}_{X\sim\mu}\big[\|T(X)-X\|_2^2\big],\label{eqn:OPT_map}\\[-3em]\notag
\end{align}
where $\mathcal{T}=\{T:\,\pi_n=T_\#\mu\}$ denotes the class of all maps
that push forward $\mu$ to $\pi_n$.

From a modeling perspective, this OT constraint produces the \emph{most parsimonious} map that transports $\mu$ to $\pi_n$ up to a measure-preserving transformation (see Remark~\ref{rmk:brenier} and \cite{brenier1991polar}).
From a practical perspective, the estimated transport map is \emph{more interpretable}. For example, {when the prior $\pi$ is used as the reference distribution $\mu$,} the collection of transportation maps $\big\{[(1-t) id + t T^\ast]_\# \mu:\,t\in[0,1]\big\}$, where $id$ denotes the identity map, describes the shortest --- or geodesic --- path connecting the prior $\pi$ and posterior $\pi_n$ in the space of all probability distributions. In particular, the $2$-Wasserstein distance $W_2(\pi, \pi_n)$, which quantifies the discrepancy between the prior and posterior and reflects the amount of information contributed by the data, can be efficiently computed as $\big\{ \mathbb{E}_{\theta\sim\pi} \big[\|T^\ast(\theta) - \theta\|_2^2 \big] \big\}^{1/2}$, since $T^\ast$ provides the optimal transport map. Moreover, since $T^\ast$ is rank-preserving (c.f.~\Cref{sec:center_outward}), it enables new posterior summaries and inference through the computation of OT-derived multivariate ranks, as well as the construction of center-outward quantile contours and sign curves as proposed by \cite{hallin2021distribution} (see Section~\ref{sec:center_outward}), providing powerful new tools for Bayesian exploratory analysis and visualization.

Unfortunately, directly computing the optimal transport map $T^\ast$ from definition~\eqref{eqn:OPT_map} is intractable in a Bayesian setting, because the constraint $\pi_n = T_\# \mu$ is difficult to enforce analytically or computationally (see \Cref{app:constraint} for a detailed discussion). As an alternative,  {we adopt the following variational formulation,
also considered in \citep{marzouk2016sampling}:
\vspace{-1em}
\begin{align}\label{eqn:KL_Form}
\min_{T \in \F} \KL(T_\# \mu \,\| \,\pi_n),\\[-3em]\notag
\end{align}
where $\F$ is a pre-specified class of transport maps satisfying
certain structural conditions. In this work, we introduce several choices
of $\F$ derived from optimal transport theory, which ensure that
$T^\ast$ is uniquely determined. Here, $\KL(p \,\|\,q)$ denotes the Kullback–Leibler (KL) divergence between two generic distributions $p$ and $q$. Our new proposal of the class $\F$ offers several key advantages.}

First, the class $\F$ is defined independently of the target
distribution $\pi_n$ and is derived purely from a primal--dual analysis of
the OT problem \eqref{eqn:OPT_map}. This structural independence provides
both theoretical clarity and practical convenience. When $\theta$ contains
only continuous variables (Section~\ref{sec:continuous}), $\F$ can
be taken as the set of gradients of convex functions, in which case the
inverse map $(T^\ast)^{-1}$ can be computed efficiently via convex
optimization, a property useful for constructing center-outward ranks (see
\Cref{sec:center_outward}). When $\theta$ is mixed, containing both
continuous and discrete components as in Bayesian latent variable models,
$\F$ admits a more intricate but still analytically tractable
characterization (cf.~\Cref{thm:discrete_case} in
Section~\ref{sec:mixed_case}). Second, although an equivalent formulation of \eqref{eqn:KL_Form} avoids
explicit evaluation of the density of $T_\# \mu$ and the normalization
constant $p(Z^n)$, it relies on the invariance of the KL divergence under
invertible transformations (cf.~Lemma~\ref{prop:equiv}). Our structural
conditions on $\F$ ensure that every $T\in\F$ is
invertible, so the equivalent formulation remains valid throughout the
optimization.

Building on this characterization, we adapt the OT map $T^\ast$ to the
Bayesian inference setting in several ways. We give explicit forms of
$T^\ast$ for mixed discrete--continuous parameters, extending the classical
semi-discrete OT framework. For regular parametric models with
approximately Gaussian posteriors, restricting $\F$ to affine maps
preserves statistical accuracy while greatly reducing computation. For
Bayesian latent variable models, we further incorporate mean-field ideas
from variational inference \citep{blei2017variational,zhang2024bayesian} to
impose additional structure on $\F$, improving scalability without
sacrificing modeling flexibility or accuracy.

The paper is organized as follows. \Cref{sec:background} provides a general overview of generative model learning and some optimal transport theory that forms the foundation of our approach. In \Cref{sec:method}, we introduce our methodology for posteriors over continuous and mixed parameters. \Cref{sec:applications} presents three applications of our method: the first two involve specialized approximation classes, while the third highlights the inferential capabilities enabled by our framework. We evaluate the performance of our method against existing transport map approaches using simulated examples in \Cref{sec:simulation}, and conduct a real-data analysis of the \textit{yeast} dataset in \Cref{sec:real_data}. A concluding discussion is provided in \Cref{sec:discussion}.

\noindent
{\bf Notation.} We write $\spt(P)$ to denote the support of a probability distribution $P$. The inner product $\langle \cdot, \cdot \rangle$ is defined by $\langle a, b \rangle = \sum_{i=1}^d a_i b_i$ for vectors $a, b \in \mathbb{R}^d$, {and the Euclidean norm $\norm{\cdot}_2$ is defined by $\norm{a}_2 = \sqrt{\langle a, a \rangle}$}. The abbreviation i.i.d.~stands for ``independently and identically distributed".
We denote the local H\"{o}lder space of smoothness level $k + \beta$ over a bounded open set $A \subset \mathbb{R}^p$ by $C_{\text{loc}}^{k,\beta}(A)$, where $\beta \in (0,1)$ and $k$ is a nonnegative integer. That is, $u \in C_{\text{loc}}^{k,\beta}(A)$ if, for every compact subset $V \subset A$, we have $u \in C^{k,\beta}(V)$.
The effective domain of a function $u$ (often referred to simply as its domain) is defined as $\mathrm{Dom}(u) := \big\{x \in \mathbb{R}^p \mid |u(x)| < \infty\big\}$. The closure of a set $\Omega$ is denoted by $\overline{\Omega}$. We use $T_\# \mu$ to denote the pushforward measure of $\mu$ under $T$, defined by $T_\# \mu(B) = \mu(T^{-1}(B))$ for any measurable set $B$. A set $S$ is called a $(p-1)$-dimensional surfaces of class $C^2$ if for every point $x \in S$, there exists an opening neighborhood $U \subset \R^p $ of $x$ and a function $g \in C^2(U)$ such that $S \cap U = \{y \in U : g(y) = 0\}$ and $\nabla g(y) \neq 0$ for all $y \in S \cap U$.

\vspace{-0.5cm}
\section{Generative Models and Optimal Transport Theory}\label{sec:background}
In this section, we begin by introducing the concept of generative models and discuss how they can be adapted for posterior approximation. We then provide a brief overview of optimal transport theory and present the formal definition of the optimal transport map.

\vspace{-0.2cm}
\subsection{Generative model approaches for Bayesian sampling}\label{sec:generative_model}
{In the machine learning literature, generative models, such as Generative Adversarial Network (GAN,~\cite{goodfellow2014generative, li2015generative,10.1214/19-AOS1858}), Wasserstein GAN (WGAN,~\cite{arjovsky2017wasserstein}) and Wasserstein Auto-Encoder (WAE,~\cite{tolstikhin2019wasserstein, zhao2018infovae}), have received great success in generating synthetic realistic-looking images and texts~\citep{brock2018large,oord2016wavenet} by implicit distribution estimation over complex data space bypassing the need of evaluating probability densities. Generative model based approaches for statistical inference~\citep{mohamed2016learning,huszar2017variational} avoid explicitly specifying the likelihood function while utilizing the expressive power of neural networks to capture underlying probabilistic structures. In our setting of posterior sampling, the generative model operates over the parameter space $\Theta \subset \mathbb{R}^p$, which may also include latent variables.}

{In this paper, we adopt the transport map-based generative model framework of
\citep{el2012bayesian,marzouk2016sampling} for Bayesian posterior sampling, which aims to learn a transport map $T^\ast$ by solving the
variational problem~\eqref{eqn:KL_Form}.
In contrast to standard variational inference \citep{blei2017variational}, directly solving \eqref{eqn:KL_Form} is generally computationally infeasible, since evaluating
the KL divergence requires the explicit density of the pushforward
distribution $T_\#\mu$, which typically depends on the inverse map
$T^{-1}$ when it exists and is tractable to compute.
A key observation is that the KL divergence in \eqref{eqn:KL_Form} is invariant
under common invertible transformations applied to both $\mu$ and $\pi_n$.
Exploiting this property leads to an equivalent objective that avoids
computing $T^{-1}$. This idea also appears earlier in the training of
normalizing flows \citep{rezende2015variational} and was later adopted in the
transport map framework in~\cite{marzouk2016sampling}. We restate the result in the following lemma and provide a proof in \Cref{sec:equiv_pf} for completeness.}

\begin{lemma}\label{prop:equiv}
Assume that each element $T\in\m T$ is differentiable and invertible on the support of $\mu$. Then optimization problem in \eqref{eqn:KL_Form} is equivalent to 
\vspace{-0.5em}
\begin{align}\label{eqn:new_obj}
\max_{T \in \m T} \E_\mu \big[ \log ( \pi_n \circ T)  + \log |\det(J_T)|\big],\\[-3em]\notag
\end{align}
where $J_T\in\mb R^{p\times p}$ denotes the Jacobian matrix associated with map $T$ from $\mb R^p$ to $\mb R^p$ and $\det (A)$ denotes the determinant of a square matrix $A$. %
\end{lemma}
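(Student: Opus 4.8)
The plan is to prove the identity by a direct change-of-variables computation, expanding the KL divergence and then rewriting the density of $T_\#\mu$ in terms of $\mu$, $T$, and the Jacobian $J_T$. Let $\mu$ also denote the density of the reference distribution and let $q_T$ denote the density of the pushforward $T_\#\mu$. By definition,
\begin{align*}
\KL(T_\#\mu \,\|\, \pi_n) = \int_{\Theta} q_T(\theta)\, \log\frac{q_T(\theta)}{\pi_n(\theta)}\,\dd\theta = \E_{Y\sim T_\#\mu}\big[\log q_T(Y) - \log \pi_n(Y)\big].
\end{align*}
Since every $T\in\m T$ is differentiable and invertible on $\spt(\mu)$, the standard change-of-variables formula gives $q_T(\theta) = \mu\big(T^{-1}(\theta)\big)\,\big|\det J_{T^{-1}}(\theta)\big| = \mu\big(T^{-1}(\theta)\big) \,/\, \big|\det J_T(T^{-1}(\theta))\big|$, using that $J_{T^{-1}}(\theta) = \big[J_T(T^{-1}(\theta))\big]^{-1}$ and $\det(A^{-1}) = 1/\det(A)$.

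Next I would push everything back to the latent space. Writing $Y = T(X)$ with $X\sim\mu$, the law of the unconscious statistician turns expectations under $T_\#\mu$ into expectations under $\mu$: for any integrable $g$, $\E_{Y\sim T_\#\mu}[g(Y)] = \E_{X\sim\mu}[g(T(X))]$. Applying this with $g(\theta) = \log q_T(\theta) - \log\pi_n(\theta)$ and substituting the expression for $q_T$ from the change-of-variables step, the term $\log q_T(T(X)) = \log\mu(X) - \log|\det J_T(X)|$ appears, so
\begin{align*}
\KL(T_\#\mu \,\|\, \pi_n) = \E_{X\sim\mu}\big[\log \mu(X) - \log|\det J_T(X)| - \log \pi_n(T(X))\big] = -H(\mu) - \E_\mu\big[\log(\pi_n\circ T) + \log|\det J_T|\big],
\end{align*}
where $H(\mu) = -\E_\mu[\log\mu(X)]$ is the differential entropy of $\mu$, which does not depend on $T$. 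Minimizing the left-hand side over $T\in\m T$ is therefore equivalent to maximizing $\E_\mu\big[\log(\pi_n\circ T) + \log|\det J_T|\big]$, which is exactly \eqref{eqn:new_obj}.

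The computation itself is routine; the main points requiring care are the regularity/measurability conditions that justify each step. Specifically, I need $q_T$ to exist as a genuine density (absolute continuity of $T_\#\mu$), which follows from differentiability and invertibility of $T$ together with $\mu$ being absolutely continuous; and I need the change-of-variables Jacobian formula to apply, which holds under the stated differentiability-and-invertibility assumption (one may invoke the $C^1$ inverse function theorem, or more generally an area-formula argument for Lipschitz homeomorphisms). I would also note that the two optimization problems are equivalent in the sense that they share the same set of maximizers/minimizers, even if the entropy term $H(\mu)$ (or the KL value) happens to be infinite, since it enters only as an additive constant independent of $T$. The only genuine obstacle is being slightly careful that the manipulations are valid without assuming finiteness of the objective a priori; this is handled by first restricting to the case where the relevant integrals are finite and then treating the degenerate case separately by the additive-constant observation. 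Since the excerpt states the proof "follows from a standard change-of-variables argument," I would keep the exposition brief and relegate the measure-theoretic bookkeeping to a remark.
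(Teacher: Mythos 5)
Your proof is correct and follows essentially the same route as the paper: expand the KL divergence, apply the change-of-variables formula to express the pushforward density in terms of $\mu$ and $|\det J_T|$, transfer the expectation back to $\mu$, and observe that the residual entropy term $\E_\mu[\log\mu]$ is a constant independent of $T$. The only difference is that you devote a bit more attention to the measure-theoretic bookkeeping (absolute continuity of $T_\#\mu$, the degenerate infinite-value case), which the paper leaves implicit; the substantive computation coincides.
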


\vspace{-0.3cm}
There is an intuitive interpretation of the objective function in \eqref{eqn:new_obj} of \Cref{prop:equiv}. The first term, $\log(\pi_n \circ T)$, can be viewed as a goodness-of-fit term that encourages alignment with the target distribution $\pi_n$, while the second term serves as a regularization term that penalizes the ``roughness'' of the transport map $T$. More specifically, for each $X \sim \mu$, the quantity $\lvert \det(J_T(X)) \rvert$ measures the local volume distortion induced by the Jacobian map $J_T(X): \mathbb{R}^p \to \mathbb{R}^p$ around $X$. As shown later through the Monge--Amp\`ere equation \eqref{eq:MA} (see \Cref{lem:regularity_OT}), the optimal transport map $T^*$ satisfies $\lvert \det(J_{T^*}) \rvert = \frac{\mu}{\pi_n \circ T^*}$. Substituting this relation into the objective in \eqref{eqn:new_obj} shows that the optimal value coincides with the negative differential entropy of the reference distribution $\mu$, namely $-h(\mu) = \mathbb{E}_{\mu}[\log \mu]$.

\subsection{Review of optimal transport theory}
In this subsection, we briefly review some theoretical background and fundamental results of optimal transport (OT) theory that are necessary before introducing our development. We focus on the OT problem with quadratic cost, commonly known as the Kantorovich Problem (KP), defined as
\vspace{-0.5em}
\begin{align*}
   \mbox{(KP)}\qquad \inf_{\gamma \in \Pi(\nu_1, \nu_2)} \int \|x - y\|^2\, \dd \gamma(x, y),\\[-3em]\notag
\end{align*}
where $\Pi(\nu_1, \nu_2)$ is the set of all couplings between two probability distributions, $\nu_1$ over $\m X\subset\mb R^p$ and $\nu_2$ over $\m Y\subset \mb R^p$. That is, any probability measure $\gamma$ belongs to $\Pi(\nu_1, \nu_2)$ if and only if its marginal distributions are  $\nu_1$ and $\nu_2$, respectively.
The Kantorovich problem always admits a solution \citep[Theorem 1.7]{santambrogio2015optimal}, referred as the \emph{optimal transport plan}. The optimal objective value of KP is defined as the squared 2-Wasserstein distance between $\nu_1$ and $\nu_2$, denoted as $W_2^2(\nu_1,\,\nu_2)$. In particular, when one of the distributions, say $\nu_1$, is absolutely continuous with respect to the Lebesgue measure on $\mb R^p$, the first part of the lemma below \citep[Theorem 1.22]{santambrogio2015optimal} shows that, under some mild conditions, the optimal transport plan is unique and takes the form of $(id,\, T^*)_\# \nu_1$, implying $\nu_2= T^*_\#\nu_1$~\citep[also see, e.g.,][]{brenier1991polar,mccann1995existence}.

\begin{lemma}[OT map existence and regularity]\label{lem:regularity_OT}
Suppose $\nu_1$ and $\nu_2$ both have finite second moments, and $\nu_1$ gives no mass to $(p-1)$ surfaces of class $C^2$, then we have:
\newline\noindent
{\bf 1.} An optimal transport plan solving (KP) exists and is unique. It admits the form of $(id,\, T^*)_\# \nu_1$ with $T^* = \nabla u^*$, where $u^*$ is a convex function. The converse is also true: if there exists a convex function $u$ such that $[\nabla u^*]_\# \nu_1=\nu_2$, then $(id,\, T^*)_\# \nu_1$ with $T^* = \nabla u^*$ is the unique solution of  (KP).
\newline\noindent
{\bf 2.}  If both $\nu_1$ and $\nu_2$ are absolutely continuous with respect to the Lebesgue measure of $\mb R^p$, with density functions denoted by $\nu_1$ and $\nu_2$ as well, then $T^*$ is differentiable $\nu_1$-a.e., and solves the following Monge-Ampere equation:
\vspace{-0.5em}
\begin{equation}\label{eq:MA}
|\det ( J_{T^*})| = \frac{\nu_1}{\nu_2\circ T^*}\,,
\qquad \mu\text{-a.e.}\\[-2.5em]\notag
\end{equation}
\newline\noindent
{\bf 3.} Assume that $\nu_1$ and $\nu_2$ are supported on bounded open sets $S_1$ and $S_2$ in $\mathbb{R}^d$ respectively, with $S_2$ being convex. Then the smoothness of $T^*$ is always one degree higher than the density functions, in the sense that if $\mu \in C_{\text{loc}}^{k, \beta}(S_1)$ and $\pi_n \in C_{\text{loc}}^{k, \beta}(S_2)$, then $T^* \in C_{\text{loc}}^{k + 1, \beta}(S_1)$. This regularity extends to the closure $\bar{S}_1$ and $\bar{S}_2$.
\end{lemma}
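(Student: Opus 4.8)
The statement to prove is Lemma~\ref{lem:regularity_OT}, which bundles together three classical results from optimal transport theory (Brenier's theorem, the Monge--Amp\`ere equation satisfied by the Brenier map, and Caffarelli's interior regularity theory). Since the lemma is explicitly attributed to \citet{santambrogio2015optimal} via specific theorem numbers, the natural plan is not to reprove these deep results from scratch but to assemble them, verify that the hypotheses stated in the lemma match the hypotheses of the cited theorems, and fill in the short gluing arguments.

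\textbf{Part 1 (existence, uniqueness, Brenier form).} The plan is to invoke Brenier's theorem in the form of \citep[Theorem 1.22]{santambrogio2015optimal}. The hypothesis there is that $\nu_1$ and $\nu_2$ have finite second moments and that $\nu_1$ gives no mass to sets that are $(p-1)$-rectifiable, or more precisely to $(p-1)$-surfaces of class $C^2$ --- which is exactly the hypothesis we have stated. I would first recall the Kantorovich duality: the optimal cost equals $\sup \{\int \varphi\, d\nu_1 + \int \psi\, d\nu_2\}$ over $c$-conjugate pairs with $c(x,y)=\|x-y\|^2$, and that for quadratic cost the $c$-concavity of the potentials translates (after the standard substitution $u(x) = \tfrac12\|x\|^2 - \varphi(x)$) into convexity of $u$. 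The negligibility hypothesis on $\nu_1$ guarantees that the convex potential $u^*$ is differentiable $\nu_1$-a.e.\ (convex functions are differentiable off a $(p-1)$-rectifiable set), so the subdifferential map is single-valued a.e.\ and equals $\nabla u^*$; this forces any optimal plan to be concentrated on the graph of $\nabla u^*$, giving both the form $(id, \nabla u^*)_\#\nu_1$ and uniqueness. For the converse direction, if $[\nabla u]_\#\nu_1 = \nu_2$ for some convex $u$, then the plan $(id,\nabla u)_\#\nu_1$ is admissible, and a cyclical-monotonicity / duality argument (the gradient of a convex function is cyclically monotone, hence supported on a $c$-monotone set, which for quadratic cost certifies optimality) shows it is optimal; combined with uniqueness this identifies it as \emph{the} solution.

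\textbf{Part 2 (Monge--Amp\`ere).} Here I would add the hypothesis that both measures are absolutely continuous with densities $\nu_1, \nu_2$ and invoke \citep[Section 1.7.6]{santambrogio2015optimal}. The argument is a change of variables: for any test function $\zeta$, $\int \zeta(y)\,\nu_2(y)\,dy = \int \zeta(T^*(x))\,\nu_1(x)\,dx$ by the pushforward property; applying the area formula with $y = T^*(x) = \nabla u^*(x)$, using that $T^*$ is differentiable $\nu_1$-a.e.\ (Alexandrov's theorem: a convex function is twice differentiable Lebesgue-a.e., so $D^2 u^*$ exists a.e.\ and is positive semidefinite, hence $\det J_{T^*} = \det D^2 u^* \ge 0$ so the absolute value is harmless) and that $\nu_1$ is mutually absolutely continuous with Lebesgue measure, yields $\nu_2(T^*(x))\,|\det J_{T^*}(x)| = \nu_1(x)$ a.e., which rearranges to \eqref{eq:MA}. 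One should note this is the Monge--Amp\`ere equation in the a.e./Alexandrov sense, not the classical pointwise sense, unless Part 3 applies.

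\textbf{Part 3 (Caffarelli regularity).} Finally, under the bounded-support hypothesis with $S_2$ convex, I would cite Caffarelli's regularity theory (as presented in \citealt{santambrogio2015optimal}, or alternatively \citealt{villani2008optimal}) in Schauder form: convexity of the target domain $S_2$ is the crucial structural assumption that rules out the known counterexamples where the Brenier map is discontinuous, and it gives $u^* \in C^{2,\beta}_{\mathrm{loc}}$ when both densities are bounded away from $0$ and $\infty$ and are $C^{0,\beta}$; bootstrapping the Monge--Amp\`ere equation with Schauder estimates upgrades $u^* \in C^{k+2,\beta}_{\mathrm{loc}}$ when the densities are $C^{k,\beta}_{\mathrm{loc}}$, hence $T^* = \nabla u^* \in C^{k+1,\beta}_{\mathrm{loc}}(S_1)$. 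The statement about extension to $\bar S_1, \bar S_2$ requires the global version of Caffarelli's theorem (smoothness of both domains plus convexity of $S_2$), which I would cite in the same breath.

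\textbf{Main obstacle.} The genuinely hard mathematics --- Brenier's theorem and especially Caffarelli's regularity estimates --- is entirely imported; the only real work is bookkeeping. The subtle points to get right are: (i) matching the precise negligibility hypothesis (``no mass to $(p-1)$-surfaces of class $C^2$'' versus the weaker ``no mass to $(p-1)$-rectifiable sets'' versus absolute continuity) so that the differentiability-a.e.\ of $u^*$ is legitimately available; (ii) being careful that Part 2's Monge--Amp\`ere equation holds only $\nu_1$-a.e.\ in general and that the $|\det(\cdot)|$ is well-defined via Alexandrov second differentiability; and (iii) correctly stating that it is \emph{convexity of the target} $S_2$ (not the source) that powers Part 3, since this is the standard pitfall. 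I expect the write-up to consist essentially of three short paragraphs, one per item, each ending in a citation, with the change-of-variables computation in Part 2 being the only displayed calculation.
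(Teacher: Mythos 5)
Your plan matches what the paper actually does: Lemma~\ref{lem:regularity_OT} is presented as a review lemma with no proof supplied, only citations embedded in the statement (Brenier via \citet[Theorem 1.22]{santambrogio2015optimal} for Part 1, \citet[Section 1.7.6]{santambrogio2015optimal} for Part 2, and Caffarelli's theory for Part 3), plus a remark immediately afterward pointing to the specific formulation in \citet[Theorems 3.1--3.3]{de2014monge} for the regularity statement. Your assembly is correct and if anything more explicit than the paper's own treatment; the only small mismatch is that for Part 3 the paper pins the exact hypotheses to the De Philippis--Figalli survey rather than the Santambrogio or Villani texts you name, so you would want to swap in that reference to match the precise smoothness claim (including the boundary extension).
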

{\noindent The function $u^*$ in the lemma is referred to as the optimal potential function. The first half of item~1 follows from Theorem~1.22 of~\citep{santambrogio2015optimal}; the converse in the second half is not stated explicitly in the literature, but follows from keeping track of the duality gap in the proof of Theorem~1.22 (see also the proof of Theorem~\ref{thm:discrete_case} in Appendix~\ref{sec:proof_theorem1}).
Item~2 is taken from Section~1.7.6 of the same reference. Item~3 follows from Caffarelli’s regularity theory~\citep{villani2003topics, caffarelli1992boundary, caffarelli1992regularity}; several versions of this result exist, and here we state a convenient sufficient condition from~\cite[Theorem~3.1--3.3]{de2014monge}.}

Another commonly used formulation of the optimal transport is the Monge problem (MP), which seeks a transport map $T^*$ that minimizes the expected quadratic transportation cost while exactly pushing forward $\nu_1$ to $\nu_2$:
\begin{align*}
   \mbox{(MP)}\qquad \inf_{T} \int \|x - T(x)\|^2 \,\dd \nu_1(x), \quad \mbox{s.t.}\quad T_\#\nu_1=\nu_2.
\end{align*}
Under the regularity conditions stated in \Cref{lem:regularity_OT}, problem (MP) admits a unique solution, referred to as the \emph{optimal transport map}. However, it is important to note that (MP) may have no solution if these regularity conditions are not satisfied; see, for example, Section 1.4 in \cite{santambrogio2015optimal}.
Moreover, if $\nu_2$ assigns no mass to any $(p-1)$-dimensional $C^2$ surface, then the optimal transport map $T^*$ is invertible, and there exists a unique convex function $u^{\dagger}$ such that $[\nabla u^{\dagger}]_\# \nu_2 = \nu_1$. This function $u^{\dagger}$ is related to the potential $u^*$ through the relation $\nabla u^{\dagger} = (\nabla u^*)^{-1}$, $\mu$-almost everywhere. We review this duality and the role of convex conjugates in \Cref{sec:potential_conjugate}.

\begin{remark}[Multivariate extension of monotonicity] In the one-dimensional case, the condition that a transport map $T$ is the gradient of a convex function is equivalent to $T$ being non-decreasing. In the multi-dimensional setting, the characterization of $T$ as the gradient of a convex function generalizes this notion of monotonicity. Specifically, it ensures that a certain multivariate extension of quantiles~\citep{chernozhukov2017monge} is well-defined (with the quantile ordering preserved through the mapping), and the transportation paths $\big\{\gamma_x(t) = (1 - t)x + t\, T(x) :\, t \in [0,1]\big\}$ are non-crossing for different $x \in \mathcal{X}$, a property known in optimal transport theory as the Monge–Mather shortening principle \citep[Chapter 8]{villani2008optimal}.
Leveraging this property, we show in \Cref{sec:center_outward} that when $\mu$ is chosen to be the standard multivariate Gaussian distribution $\m N(0,I_p)$, the center-outward ranks, quantile contours, and sign curves proposed by \cite{hallin2021distribution} can be constructed by mapping quantile spheres of $\m N(0,I_p)$ through the optimal transport map.
\end{remark}

\begin{remark}[OT map is the most parsimonious map]\label{rmk:brenier} Brenier’s polar factorization theorem \citep{brenier1991polar} states that any transport map $T$ that pushforwards $\nu_1$ to $\nu_2$, that is, $\nu_2 = T_\# \nu_1$, can be factorized as $T = T^* \circ R$, where $T^*$ is the optimal transport map described in \Cref{lem:regularity_OT}, and $R$ is a measure-preserving map with respect to $\nu_1$, meaning $R_\# \nu_1 = \nu_1$.
This decomposition highlights that the optimal transport map $T^*$ is the \emph{most parsimonious} way to transport $\nu_1$ to $\nu_2$, corresponding to the special case where $R = \mathrm{id}$ is the identity map. While any transport map can be expressed as the composition of $T^*$ with some measure-preserving map $R$, the converse does not necessarily hold, as $R$ may not be invertible.
Consequently, the optimal transport map can also be seen as the ``minimal” transport map, analogous to the role of a minimal sufficient statistic in statistical inference, capturing only the essential transformation required to map $\nu_1$ to $\nu_2$.
\end{remark}

\Cref{lem:regularity_OT} can be tailored to a special setting in which $\nu_2$ is supported on a union of disjoint sets that are strongly separable, that is, each pair can be separated by a hyperplane. In this case, one can derive a more specialized characterization of the optimal potential function. For example, \cite[Theorem 5.1]{kitagawa2019free} show that when $\mathrm{spt}(\nu_2) = \overline{\Omega}_1 \cup \overline{\Omega}_2$, where $\Omega_k \subset \mathbb{R}^p$ ($k = 1, 2$) are two disjoint convex subsets, the optimal convex potential $u^*$ associated with the OT map $T^*$ takes the form $u^* = \max\{u^*_1, u^*_2\}$. Here, each $u^*_k$ is a convex function whose gradient $\nabla u^*_k$ maps points to $\overline{\Omega}_k$, for $k \in \{1, 2\}$.
Moreover, this result can be further generalized to the case where $\mathrm{spt}(\nu_2)$ consists of finitely many disjoint and strongly separable regions. Specifically, \cite[equation (5.5)]{kitagawa2019free} provide such an extension, which we summarize in the following lemma.

\begin{lemma}[Target distribution with disconnected support from \cite{kitagawa2019free}]\label{lem:regularity_OT_2}
Under the same conditions as in \Cref{lem:regularity_OT}, and additionally assuming that $\mathrm{spt}(\nu_2) = \bigcup_{k \in I} \overline{\Omega}_k$ for a finite index set $I$, where each $\overline{\Omega}_k$ can be separated from every other $\overline{\Omega}_\ell$ ($\ell \neq k$) by a hyperplane, the optimal potential function $u^*$ associated with the OT map $T^\ast$ from $\nu_1$ to $\nu_2$ can be expressed as
\vspace{-0.7em}
\begin{align*}
u^*(x) = \max_{k\in I} u^*_k(x), \quad \mathrm{with } \quad \nabla u^*_k(x) \in \overline{\Omega}_k, \quad \forall x \in \mathrm{Dom}(\nabla u^*).
\end{align*}
\end{lemma}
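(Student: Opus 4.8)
The plan is to read the decomposition off the Legendre (convex conjugate) representation of the Brenier potential, exploiting only that $\spt(\nu_2)$ is a \emph{finite} union of the convex pieces $\overline{\Omega}_k$. First I would set up the dual side: by the first part of \Cref{lem:regularity_OT}, the OT map from $\nu_1$ to $\nu_2$ is $T^\ast = \nabla u^\ast$ for a convex $u^\ast$, and Kantorovich duality together with the standard reduction of $c$-concave potentials to convex ones for the quadratic cost (reviewed in \Cref{sec:potential_conjugate}) shows that $u^\ast$ is the Legendre transform of a proper, lower semicontinuous, convex target potential $v^\ast$, that is $u^\ast(x) = \sup_{y\in\R^p}\big[\langle x,y\rangle - v^\ast(y)\big]$, with the supremum attained at $y = T^\ast(x)$ for $\nu_1$-a.e.\ $x$, in particular for every $x\in\mathrm{Dom}(\nabla u^\ast)$.

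Next I would split the supremum. Define $u^\ast_k(x) := \sup_{y\in\overline{\Omega}_k}\big[\langle x,y\rangle - v^\ast(y)\big]$, equivalently $u^\ast_k = (v^\ast + \iota_{\overline{\Omega}_k})^\ast$ with $\iota_A$ the convex indicator of $A$; each $u^\ast_k$ is a supremum of affine functions of $x$, hence convex on $\R^p$, and is proper because $\nu_2(\overline{\Omega}_k) > 0$ forces $\mathrm{Dom}(v^\ast)\cap\overline{\Omega}_k\neq\emptyset$. For any $x\in\mathrm{Dom}(\nabla u^\ast)$ the maximizer $T^\ast(x)$ lies in $\spt(\nu_2) = \bigcup_{k\in I}\overline{\Omega}_k$, so, using that the supremum over a finite union equals the maximum of the suprema over the pieces,
\begin{align*}
u^\ast(x) \;=\; \sup_{y\in\bigcup_{k\in I}\overline{\Omega}_k}\big[\langle x,y\rangle - v^\ast(y)\big] \;=\; \max_{k\in I}\,\sup_{y\in\overline{\Omega}_k}\big[\langle x,y\rangle - v^\ast(y)\big] \;=\; \max_{k\in I} u^\ast_k(x).
\end{align*}
Moreover the subdifferential of $u^\ast_k$ at any point, being the set of maximizers $y\in\overline{\Omega}_k$ in the definition of $u^\ast_k$ (a closed convex subset of $\overline{\Omega}_k$ by convexity of $\Omega_k$), is contained in $\overline{\Omega}_k$; hence $\nabla u^\ast_k(x)\in\overline{\Omega}_k$ whenever $u^\ast_k$ is differentiable at $x$.

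It remains to see that this differentiability is not an issue on $\mathrm{Dom}(\nabla u^\ast)$: for a finite maximum of convex functions one has $\partial u^\ast(x) = \mathrm{conv}\big(\bigcup_{k\in K(x)}\partial u^\ast_k(x)\big)$ with $K(x) = \{k : u^\ast_k(x) = u^\ast(x)\}$, and since $\partial u^\ast(x) = \{\nabla u^\ast(x)\}$ is a singleton at such $x$, each active $\partial u^\ast_k(x)$ must equal $\{\nabla u^\ast(x)\}$, so $u^\ast_k$ is differentiable there with $\nabla u^\ast_k(x) = \nabla u^\ast(x)\in\overline{\Omega}_k$ (which, incidentally, re-proves $T^\ast(x)\in\spt(\nu_2)$). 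This gives the stated representation; I would then cite \citet[eq.~(5.5)]{kitagawa2019free} for the original result and for the role of the hyperplane-separability hypothesis in its surrounding regularity theory. I expect the only genuinely delicate step to be the dual-side setup of the first paragraph --- guaranteeing that $v^\ast$ is a proper convex function whose conjugate is exactly $u^\ast$, with the maximizer landing in $\spt(\nu_2)$ at the relevant points; granting that, the rest is routine convex analysis: splitting a supremum over a finite union and tracking subdifferentials of a finite maximum.
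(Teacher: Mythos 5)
The paper does not prove \Cref{lem:regularity_OT_2}; it states it as a direct citation of~\citet[equation (5.5) and Theorem 5.1]{kitagawa2019free} and gives no argument in the appendix. So your proposal is, by necessity, a genuinely different route --- a self-contained, convex-analytic derivation rather than the free-boundary/regularity machinery of Kitagawa--Pass. The overall strategy --- write $u^\ast$ as a Legendre transform, split the supremum over the finite union $\bigcup_k \overline\Omega_k$, and then read off $\partial u^\ast_k \subset \overline\Omega_k$ from the convexity of each piece --- is sound and attractive: it buys a short, elementary argument that only uses Brenier duality and classical subdifferential calculus for maxima of finitely many convex functions, whereas the original Kitagawa--Pass proof goes through a detailed regularity and free-boundary analysis.

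There is, however, one step that you should not pass off as automatic, and you flag it yourself: the inference from ``the maximizer lies in $\spt(\nu_2)$ for $\nu_1$-a.e.\ $x$'' to ``in particular for every $x \in \mathrm{Dom}(\nabla u^\ast)$.'' As written, this is a non sequitur --- a full-measure property does not transfer to every differentiability point. With $v^\ast = (u^\ast)^\ast$ convex, $\mathrm{Dom}(v^\ast)$ contains the convex hull of $\spt(\nu_2)$, which strictly exceeds $\spt(\nu_2)$ when the support is disconnected, so you genuinely must argue that the unique maximizer $\nabla u^\ast(x)$ stays inside $\spt(\nu_2)$ rather than wandering into the gaps. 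The fix is standard and worth spelling out: the set $A = \{x : \nabla u^\ast(x) \in \spt(\nu_2)\}$ has full $\nu_1$-measure, hence is dense in $\spt(\nu_1)$; the gradient of a convex function is continuous on its set of differentiability; and $\spt(\nu_2)$ is closed. Combining these, for any $x_0 \in \mathrm{Dom}(\nabla u^\ast) \cap \spt(\nu_1)$ pick $x_n \in A \to x_0$ and pass to the limit. (If $\nu_1$ has full support, as with the Gaussian reference in this paper, this covers all of $\mathrm{Dom}(\nabla u^\ast)$; otherwise you only obtain the representation on $\spt(\nu_1)$, which is all that the OT map is defined for anyway.) Two minor points to make explicit rather than tacit: you need each $\Omega_k$ to be convex (not merely pairwise hyperplane-separable) for the containment $\mathrm{Dom}(\mathrm{clco}(v^\ast + \iota_{\overline\Omega_k})) \subset \overline\Omega_k$, and hence $\partial u^\ast_k \subset \overline\Omega_k$, to go through --- this is consistent with the Kitagawa--Pass hypotheses but should be stated; and your final subdifferential argument only shows $\nabla u^\ast_k(x) = \nabla u^\ast(x)$ for the \emph{active} indices $k \in K(x)$, which suffices for the lemma but should be said rather than implied.
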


\noindent
The setting in which $\nu_2$ is supported on a union of disjoint sets is particularly relevant in Bayesian contexts where the posterior distribution may exhibit multimodality with well-separated local modes \citep[e.g.,~label switching in Bayesian mixture models,][]{10.1214/088342305000000016}, where each component potential $u^*_k$ can be interpreted as the local potential function corresponding to a local mode or region. 
Under this setting, for each draw $X$ from the reference distribution $\mu$, the optimal map effectively selects a local region via the index $k(X) = \argmax_{k \in I} u^*_k(X)$, and then maps $X$ to $T^*(X) = \nabla u^*_{k(X)}(X)$ using the corresponding local potential $u^*_{k(X)}$ (c.f.~Lemma~\ref{lem:max_jacobian}). Even when the target posterior $\pi_n$ is not multimodal, one may still choose to represent the global potential $u^*$ as the maximum over several convex functions, as motivated by Lemma~\ref{lem:regularity_OT_2} in Appendix~\ref{sec:computation_J}, to enhance modeling flexibility and approximation capacity. This leads to the construction of the transport map class $\mathcal{T}$ used in Section~\ref{sec:continuous}.

\section{Optimal Transport for Bayesian Sampling}\label{sec:method}
In this section, we study how to learn the optimal transport map $T^*$ for Bayesian sampling, where $T^*$ solves the Monge problem (MP) with $\nu_1=\mu$ and $\nu_2=\pi_n$. Once $T^*$ is learned, samples $\{\theta_i\}_{i=1}^N$ from the posterior $\pi_n$ are generated by drawing i.i.d.\ samples $X_i\sim\mu$ and setting $\theta_i=T^*(X_i)$. A key distinction from existing transport-based methods is that we approximate only the \emph{optimal} transport map $T^*$, rather than an arbitrary map $T\in\mathcal T$ satisfying $T_\#\mu=\pi_n$. This removes the non-uniqueness of general transport maps and leads to a solution that is stable, efficient, and interpretable. As noted in \Cref{rmk:brenier}, any other transport map can be recovered from $T^*$, so no flexibility is lost. To achieve this, we adopt a constrained optimization approach by imposing structural restrictions on the transport class $\F$ when minimizing the KL objective~\eqref{eqn:KL_Form}. We exploit structural properties of $T^*$ given in Lemma~\ref{lem:regularity_OT} when $\theta$ is continuous, and extend this characterization to mixed settings where $\theta$ contains discrete components, as in many Bayesian latent variable models. In such cases, although item~1 of Lemma~\ref{lem:regularity_OT} remains valid, Lemma~\ref{prop:equiv}, which makes problem~\eqref{eqn:KL_Form} computationally tractable, no longer applies because $T^*$ may not be invertible (see Section~\ref{sec:mixed_case}).

\vspace{-0.3cm}

\subsection{Optimal transport map with continuous parameters}\label{sec:continuous}
We choose a reference distribution $\mu$ that is absolutely continuous with respect to the Lebesgue measure on $\R^p$, for example, $\mathcal N(0, I_p)$. Under this choice, Lemma~\ref{lem:regularity_OT} ensures that the optimal transport map $T^*$ is unique and admits the representation $T^*=\nabla u^*$ for a convex function $u^*$. The converse also holds: if a convex function $u^*$ satisfies that $T^*=\nabla u^*$ pushes $\mu$ forward to $\pi_n$, then $T^*$ is the unique optimal transport map.

When $\theta$ is continuous in $\R^p$, the posterior $\pi_n$ has a continuous density. By items~2 and~3 of Lemma~\ref{lem:regularity_OT}, the optimal map $T^*$ is differentiable and invertible, implying that the convex potential $u^*$ is at least twice differentiable. Consequently, $u$ can be characterized through its Hessian $\nabla^2 u = J_T$, which must be positive semidefinite. Moreover, the Monge--Amp\`ere equation in Lemma~\ref{lem:regularity_OT} implies that $J_T$ is non-singular on $\mathrm{supp}(\mu)$, so $\nabla^2 u$ is in fact positive definite, meaning that $u$ is strongly convex and $T$ is globally invertible. These observations motivate the specification of the transport map class in~\eqref{eqn:new_obj} as
\vspace{-0.5em}
\begin{equation}\label{eq:invertible_class}
\mathcal T_{\rm conv}
=
\big\{\nabla u \,\big|\,
u:\mathrm{spt}(\mu)\to\Theta,\;
u\in C^2(\mathrm{spt}(\mu)),\;
\nabla^2 u \in \R^{p\times p}\ \text{is p.d.}
\big\},\\[-0.5em]\notag
\end{equation}
which ensures that $T^*$ is the unique element in $\mathcal T_{\rm conv}$ satisfying $T_\#\mu=\pi_n$.

Since the transport class $\mathcal T_{\rm conv}$ satisfies the assumptions of \Cref{prop:equiv}, Lemma~\ref{lem:regularity_OT} implies that $T^*$ is the unique solution to the computationally tractable formulation~\eqref{eqn:new_obj} within this class. 
In practice, we adopt a Monte Carlo approximation and choose a parametric class $\mathfrak T$ that approximates $\mathcal T_{\rm conv}$, leading to the practical estimator
\begin{equation}\label{eq:That}
\widehat T
=
\arg\min_{T\in \mathfrak T}
\frac{1}{M}\sum_{m=1}^M
\Big[
\log \widetilde{\pi}_n\big(T(X_m)\big)
+
\log \det \big(J_T(X_m)\big)
\Big],
\end{equation}
where $\{X_m\}_{m=1}^M$ are i.i.d.\ samples from $\mu$, and $\widetilde{\pi}_n$ denotes a tractable but unnormalized form of the posterior density $\pi_n$.

\noindent {\bf Construction of approximating class $\mathfrak T$.}
{We adopt an approximating family motivated by the structural properties of optimal transport maps for multimodal distributions described in Lemma~\ref{lem:regularity_OT_2}. In \Cref{sec:review_other_methods} and \Cref{sec:additional_quantile}, we further discuss and compare this choice with other popular transport maps, including those based on input convex neural networks (ICNNs; \cite{amos2017input}) and triangular maps~\cite{el2012bayesian,marzouk2016sampling}.}
Specifically, to construct the approximating class $\mathfrak{T}$, we first introduce a class of convex building blocks, referred to as convex units, motivated by the following lemma.
\vspace{-0.3cm}
\begin{lemma}\label{lem:splineConvexity}
For constants $\alpha, \beta \in \mathbb{R}^{p}$ and $w, v \in \mathbb{R}$, and an increasing and bounded univariate function $\varphi$ defined on $\mathbb{R}$, the following function defined by
\begin{align}\label{eq:activation}
f(x;\,\alpha, \beta, w, v) = \int^{\langle \alpha, x \rangle + w}_{-\infty} \varphi(u)\, \dd u  + \langle \beta, x \rangle + v
\end{align}
is a convex function of $x\in\R^p$.
\end{lemma}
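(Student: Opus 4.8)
The plan is to reduce the statement to two elementary facts: (i) the composition of a convex univariate function with an affine map $\mathbb{R}^p\to\mathbb{R}$ is convex, and adding an affine function preserves convexity; and (ii) the antiderivative of an increasing function is convex. First I would peel off the affine contribution by writing
\[
f(x;\,\alpha,\beta,w,v) \;=\; \Phi\big(\langle\alpha,x\rangle+w\big) \;+\; \langle\beta,x\rangle + v,
\qquad
\Phi(t) := \int_{-\infty}^{t}\varphi(u)\,\dd u ,
\]
where $\Phi$ is well defined (as a finite quantity up to an additive constant, which is all that matters for convexity) since a bounded monotone $\varphi$ is locally integrable. Because $\langle\beta,x\rangle+v$ is affine, it suffices to show that $x\mapsto\Phi(\langle\alpha,x\rangle+w)$ is convex; and since $x\mapsto\langle\alpha,x\rangle+w$ is affine from $\mathbb{R}^p$ to $\mathbb{R}$, this in turn reduces to proving that the univariate function $\Phi$ is convex on $\mathbb{R}$.

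For the convexity of $\Phi$ I would argue directly from the monotonicity of $\varphi$, without assuming differentiability or continuity (so that the only hypothesis used on $\varphi$ is that it is increasing). Fix $t_1<t_2$ and $\lambda\in(0,1)$, and set $t_\lambda=\lambda t_1+(1-\lambda)t_2$. Since $\varphi(u)\le\varphi(t_\lambda)$ for $u\in[t_1,t_\lambda]$ and $\varphi(u)\ge\varphi(t_\lambda)$ for $u\in[t_\lambda,t_2]$,
\[
\Phi(t_\lambda)-\Phi(t_1)=\int_{t_1}^{t_\lambda}\varphi(u)\,\dd u\le\varphi(t_\lambda)\,(t_\lambda-t_1),
\qquad
\Phi(t_2)-\Phi(t_\lambda)=\int_{t_\lambda}^{t_2}\varphi(u)\,\dd u\ge\varphi(t_\lambda)\,(t_2-t_\lambda).
\]
Using the elementary identities $t_\lambda-t_1=(1-\lambda)(t_2-t_1)$ and $t_2-t_\lambda=\lambda(t_2-t_1)$, one gets $\lambda\,(t_\lambda-t_1)=(1-\lambda)\,(t_2-t_\lambda)=\lambda(1-\lambda)(t_2-t_1)$, and combining the two displayed inequalities yields $\lambda\Phi(t_1)+(1-\lambda)\Phi(t_2)-\Phi(t_\lambda)\ge 0$, i.e.\ $\Phi$ is convex. (If one is content to assume $\varphi$ continuous, the fundamental theorem of calculus gives $\Phi'=\varphi$, which is increasing, so $\Phi$ is convex immediately; I would mention this as the quick route and keep the inequality argument as the general one.) Assembling the pieces from the first paragraph then gives convexity of $f$.

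I do not expect a genuine obstacle here; the statement is essentially a packaging of standard convexity facts. The only points meriting a line of care are the well-definedness of $\int_{-\infty}^{t}\varphi$ when $\varphi$ does not vanish at $-\infty$ (resolved by viewing $\Phi$ as defined up to an additive constant, which is irrelevant to convexity), and the choice to derive convexity of $\Phi$ from the monotone-integrand inequality rather than from a second-derivative computation, so that no smoothness of $\varphi$ is needed.
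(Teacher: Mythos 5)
Your proof is correct and follows essentially the same route as the paper's: peel off the affine part, then argue that the composition of an affine map with the antiderivative $\Phi$ is convex. The paper's proof is terser — it simply asserts that $u\mapsto\int_0^u\varphi(s)\,\dd s$ is convex because $\varphi$ is increasing and bounded, then cites the composition rule — whereas you supply the midpoint-inequality argument establishing convexity of $\Phi$ from monotonicity alone, which is a worthwhile addition since it avoids any appeal to differentiability or the fundamental theorem of calculus. You also flag a genuine wrinkle the paper glosses over: the displayed formula in the lemma uses $\int_{-\infty}^{\langle\alpha,x\rangle+w}$, which need not converge for a bounded increasing $\varphi$ (e.g.\ if $\varphi$ tends to a negative limit at $-\infty$); the paper's own proof quietly switches the lower limit to $0$. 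Your resolution — that $\Phi$ is defined only up to an additive constant, which is irrelevant to convexity — is the right way to reconcile the statement with the argument, and is cleaner than what appears in the paper.
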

\vspace{-0.3cm}
\noindent The proof is provided in \Cref{sec:splineConvexity_pf}. In our implementation, we adopt the following choices for $\varphi$: 1.~the Tanh function: $\varphi(x) =\frac{e^x -e^{-x}}{e^{x}+e^{-x}}$; 2.~the Softsign function: $\varphi(x) = \frac{x}{1 + |x|}$; 3.~the Square Nonlinearity (SQNL) function: $\varphi(x) = \text{sign} (x) 1_{\abs{x} >2} +\big(x-\text{sign} (x) \frac{x^{2}}{4}\big) 1_{\abs{x} \leq 2}$.
In particular, the parameter $\alpha > 0$ in the convex unit controls the degree of nonlinearity to enhance flexibility: over any compact set $\mathcal{C} \subset \mathbb{R}^p$, a smaller $\alpha$ makes the convex unit closely resemble an affine transformation for $x \in \mathcal{C}$. Adding an affine term $\langle \beta, x \rangle + v$ preserves the convexity of the function $g(x)$, while further increasing its expressiveness. %

With the convex unit, we can now define our family $\m U_{L,M}$ of convex functions $u$ to approximate the optimal convex potential $u^*$ as 
\begin{align}
    \m U_{L,M}=\bigg\{u:&\,\R^p \to \R\,\bigg|\, u(x)= \max_{k\in[L]} u_k(x), \, u_k(x)=\sum_{m=1}^Mf\big(x;\,\alpha^{(k)}_{m}, \beta^{(k)}_{m}, w^{(k)}_{m}, v^{(k)}_{m}\big), \label{eq:max_of_convex}\\
    &\alpha^{(k)}_{m}\in\R^p, \beta^{(k)}_{m}\in\R^p, w^{(k)}_{m}\in\R, v^{(k)}_{m}\in\R, \ \mbox{for all } k\in[L] \mbox{ and } m\in[M]  \bigg\},\notag
\end{align}
where $L$ is the number of local potential functions $u_k$ in the maximum, and $M$ is the number of convex units used to define each $u_k$. The corresponding transport map class $\mfT$ used in \eqref{eq:That} is then defined as $\mfT = \big\{T:\,\R^p\to\R^p\,\big|\,T=\nabla u(x),\, u\in \m U_{L,M}\big\}$.

A few observations are in order. First, since sums and maxima of finitely many convex functions remain convex, every function in $\mathcal{U}_{L,M}$ is convex. Second, any convex function on $\mathbb{R}^p$ can be approximated arbitrarily well by elements of $\mathcal{U}_{L,M}$ for sufficiently large $L$ and $M$, because the union $\bigcup_{L\ge 1,\,M\ge 1}\mathcal{U}_{L,M}$ contains all max-affine functions, which are known to be universal approximators of convex functions \citep{magnani2009convex}. Third, the use of the maximum operator is motivated by Lemma~\ref{lem:regularity_OT_2}: when the target distribution has multiple well-separated clusters, the optimal potential $u^*$ naturally takes the form of a maximum of several local potentials, each associated with a cluster. We discuss a data-driven strategy for selecting $L$ in \Cref{sec:choice_L}. Without the max operation, the number of convex units required to approximate $u^*$ may grow exponentially with the number of clusters. Finally, Lemma~\ref{lem:max_jacobian} in Appendix~\ref{sec:computation_J} provides an efficient way to compute the transport map $T=\nabla u=\nabla \max_{k\in[L]} u_k$ and its Jacobian, both required for implementing gradient-based algorithm for computing~\eqref{eq:That}.

\vspace{-0.3cm}

\subsection{Optimal transport map with mixed parameters}\label{sec:mixed_case}
In a more general setting where $\theta$ consists of both continuous and discrete components, referred to as mixed parameters or variables, we decompose $\theta=(\tau,\,\zeta)$, where $\tau\in[K]:\,=\{1,2,\ldots,K\}$ denotes the discrete component with $K$ levels, and $\zeta\in\mb R^p$ represents the continuous component of dimension $p$. 

{\begin{remark}[Embedding of discrete components]\label{rmk:embedding}
Since $\tau$ is discrete, we introduce embeddings $\{b_k\}_{k=1}^K \subset \mathbb{R}^r$ to encode the categorical information. The choice depends on the nature of $\tau$. For a nominal variable, a common choice is one-hot encoding, where each category $k$ is represented by the standard basis vector $b_k \in \mathbb{R}^K$ with a $1$ in the $k$-th coordinate and $0$ elsewhere ($r=K$). If $\tau$ is ordinal, we choose $\{b_k\}_{k=1}^K$ as an increasing sequence in $\mathbb{R}$ ($r=1$), where $b_{k+1}-b_k$ reflects the gap between levels $k$ and $k+1$. A detailed discussion of embedding choices is given in \Cref{sec:embedding_choice}.
\end{remark}}

With the discrete embedding into $\mathbb{R}^r$, we accordingly decompose $x\sim \mu$ as $x=(x^{(1)},x^{(2)})\in\mathbb{R}^r\times\mathbb{R}^p$ and write the transport map as $T=(T^{(1)},T^{(2)})$, where $T^{(1)}:\mathbb{R}^r\times\mathbb{R}^p\to [K]$ maps to the discrete component and $T^{(2)}:\mathbb{R}^r\times\mathbb{R}^p\to\mathbb{R}^p$ maps to the continuous component. When the reference distribution $\mu$ admits a density (e.g.~$\mathcal{N}(0,I_{r+p})$), Lemma~\ref{lem:regularity_OT} guarantees that the optimal transport (OT) problem is equivalent to the Monge problem with mixed variables, formulated as follows:
\vspace{-0.5em}
\begin{align*}
     \mbox{(MPm)}\quad  &\inf_{T=(T^{(1)},T^{(2)})} \int_{\mb R^r\times \mb R^p} \Big[ \ \ \kappa  \underbrace{\|x^{(1)} - b_{T^{(1)}(x)}\|^2}_{\mbox{discrete part cost}} \ + \underbrace{\|x^{(2)} - T^{(2)}(x)\|^2}_{\mbox{continuous part cost}}\Big]\ \ \dd \mu(x), \\
     & \mbox{s.t.}\quad T_\#\mu=\pi_n \quad \mbox{and}\quad T^{(1)}(x) \in[K],  \quad \mbox{for all } x=(x^{(1)},x^{(2)}) \in\mb R^r\times \mb R^p,\\[-3em]\notag
\end{align*}
where $\kappa>0$ controls the relative weight of the discrete cost; throughout the paper we take $\kappa=1$. Because the posterior $\pi_n$ has singular components induced by discreteness, items~2 and~3 of Lemma~\ref{lem:regularity_OT} no longer apply. In particular, for each fixed $x^{(2)}\in\mathbb{R}^p$, the convex function $u(\cdot,x^{(2)})$ becomes piecewise linear over $\mathbb{R}^r$ (see Theorem~\ref{thm:discrete_case}), so the optimal map $T=\nabla u$ is no longer invertible. As a result, \Cref{prop:equiv} is not applicable. We therefore seek a finer characterization of $T^*$ and its potential $u^*$ that preserves computational tractability. The next lemma provides such a characterization and suggests modeling the discrete and continuous components separately.

\vspace{-0.3cm}

\begin{theorem}\label{thm:discrete_case}
Suppose the reference distribution $\mu$ is absolutely continuous on $\mathbb{R}^r\times\mathbb{R}^p$, and the target distribution $\pi_n$ is defined over $\theta=(\tau,\zeta)$ with $\tau\in[K]$ and $\zeta\in\mathbb{R}^p$. Then the optimal potential function $u^*$ associated with $T^*=(T^{*(1)},T^{*(2)})$ in problem (MPm) takes the following form:
\vspace{-0.5em}
\begin{align*}
    u^*\big(x^{(1)},x^{(2)}\big) = \kappa \max_{k\in [K]} \big\{\langle x^{(1)},\, b_k\rangle + \phi_k(x^{(2)}) \big\},
\end{align*}
where $\kappa$ is the weighting parameter in (MPm), $b_k\in\mathbb{R}^r$ is the embedding vector for category $k$ defined in \Cref{rmk:embedding}, and $\phi_k:\mathbb{R}^p\to\mathbb{R}$ are convex functions. The corresponding OT map $T^*=(T^{*(1)},\,T^{*(2)})$ is
\vspace{-0.5em}
\begin{align}
\label{eq:dis_trans_map}
\left\{\begin{array}{ll}
\tau &= T^{*(1)}\big(x^{(1)},\,x^{(2)}\big) = \argmax_{k\in[K]} \left\{ \langle x^{(1)}, \, b_k\rangle + \phi_k(x^{(2)})\right\}, \\
\zeta &= T^{*(2)}\big(x^{(1)},\,x^{(2)}\big) = \kappa \nabla \phi_\tau(x^{(2)}).
\end{array}\right.
\end{align}
Furthermore, if for each category $k\in[K]$ the conditional posterior $\pi_n(\zeta\mid \tau=k)$ is absolutely continuous with respect to the Lebesgue measure on $\mathbb{R}^p$, then $\phi_k$ is at least twice differentiable and strongly convex, so $\nabla\phi_k$ is differentiable and invertible.
\end{theorem}

\vspace{-0.3cm}

\noindent This result shows that $u^*$ is piecewise linear in $\mathbb{R}^r$ due to the linear term $\langle x^{(1)}, b_k\rangle$. This structure induces probability mass accumulation in the discrete component $\tau$ and makes $T^*$ non-invertible. When $\mu$ is a product measure on $\mathbb{R}^r\times\mathbb{R}^p$, the map $T_k^*:=\nabla\phi_k$ in \Cref{thm:discrete_case} can be interpreted as the optimal transport map that pushes the $\mathbb{R}^p$ marginal of $\mu$ to the conditional posterior $\pi_n(\zeta\mid\tau=k)$. The overall map $T^*$ can therefore be viewed as a location-dependent mixture of the local maps $\{T_k^*\}_{k=1}^K$.
\cite{duan2021transport} also construct random transport maps as mixtures of maps $\{T_i\}_{i=1}^N$, motivated by Bayesian nonparametric approximations of conditional densities \citep{dunson2007density}. In their approach, each point is transported by $T_i$ with a location-dependent probability $\omega_i$. However, since their maps are restricted to location-scale transformations, representing complex targets is costly and the method scales poorly in the presence of discrete variables.
\cite{el2012bayesian,marzouk2016sampling} propose lower-triangular transport maps, which require invertibility of the map. This assumption is violated in the mixed-variable setting considered here.

In the special case where $\mu=\mathcal{N}(0,I_{r+p})$, the first equation in \eqref{eq:dis_trans_map} resembles a multi-class Probit model. Conditional on $x^{(2)}$, the latent score for class $k$ is $\langle x^{(1)}, b_k\rangle+\phi_k(x^{(2)})$, and $\langle x^{(1)}, b_k\rangle+\phi_k(x^{(2)}) \mid x^{(2)} \sim \mathcal{N}\big(\phi_k(x^{(2)}), \|b_k\|^2\big)$, with the label assigned to the class achieving the largest score. The continuous component $\zeta$ is then generated by pushing the marginal reference distribution $\mathcal{N}(0,I_p)$ through the map $T^*_\tau$, yielding samples from $\pi_n(\zeta\mid\tau)$.
This factorized structure motivates the following lemma, analogous to \Cref{prop:equiv}, which enables efficient computation in the mixed-variable setting.
\vspace{-0.2cm}
\begin{lemma}[Optimization objective for mixed parameters]\label{lem:mixture_KL}
Let $T: \mathcal{X} \rightarrow \mathcal{Y}$ be a transport map that pushforwards $\mu$ to $\pi_n$, where $\mathcal{X} = \mathcal{X}_1 \times \mathcal{X}_2 :\,= \mb R^r\times \mb R^p$ and $\mathcal{Y} = \m Y_1 \times \m Y_2 :\,= [K]\times \mb R^p$. Define the intermediate transport map $\wt{T}: \mathcal{Y}_1\times \mathcal{X}_2 \rightarrow \mathcal{Y}_1 \times \mathcal{Y}_2$ induced by $T$ as
\vspace{-0.5em}
\begin{equation*}
    \left(\begin{array}{c}
    \tau \\
    x^{(2)}
    \end{array}\right)\mapsto 
    \left(\begin{array}{c}
    \tau \\
    \zeta = \kappa \nabla \phi_{\tau}(x^{(2)})
    \end{array}\right).\\[-0.5em]\notag
\end{equation*}
Then, the optimization problem in \eqref{eqn:KL_Form}, with $\mathcal{T}$ chosen as the class of transport maps taking the form of~\eqref{eq:dis_trans_map} in Theorem~\ref{thm:discrete_case}, is equivalent to:
\vspace{-0.5em}
\begin{align}
\label{eq:mixture_KL_opt}
    \min_{T\in \mathcal{T}} \mathbb{E}_{(X^{(1)}, X^{(2)})\sim \mu} \left[\log \mathbb{P}(\tau \mid X^{(2)}) - \log \pi_n\circ \wt{T}(\tau, X^{(2)}) - \log|\det(J_{\wt{T}}(\tau, X^{(2)}))|\right],
\end{align}
 with $\tau = \argmax_{k\in[K]} \big\{\dprod{X^{(1)}, \, b_k} + \phi_k(X^{(2)})\big\}$.
\end{lemma}
\noindent To evaluate the objective function in the optimization problem~\eqref{eq:mixture_KL_opt} for a given transport map $T$, one apply the Monte Carlo method by approximating the expectation with an empirical average; additional computational details are provided in \Cref{sec:mixture_KL_computation}.

\vspace{-0.5cm}
\section{Applications in Bayesian Inference}\label{sec:applications}
In this section, we present several applications of our method to Bayesian inference. We first consider more structured choices of the approximating class $\mathfrak T \subset \mathcal T$ to further improve computational efficiency. When the posterior is approximately Gaussian, we use an affine transport class $\mathfrak T_{\mathrm{aff}}$. For Bayesian latent variable models, we adopt a mean-field transport class $\mathfrak T_{\mathrm{MF}}$ that reduces the effective cardinality of the discrete variable $\tau$. We then turn to inference and uncertainty quantification enabled by the interpretation of the OT map, which naturally induces multivariate ranks, center outward quantile contours, and sign curves \citep{hallin2021distribution}, providing new tools for Bayesian exploratory analysis and visualization.

\vspace{-0.3cm}

\subsection{Approximately Gaussian posteriors}
\vspace{-0.1cm}
The Bernstein–von Mises theorem for parametric models \citep{le2000asymptotics,van2000asymptotic} states that, under the frequentist setting where $\{Z_1,\ldots,Z_n\}$ are i.i.d.\ samples from the true model $p(\cdot\mid\theta_0)$, the posterior $\pi_n(\theta)=p(\theta\mid Z^n)$ is asymptotically normal, centered near the maximum likelihood estimator and with covariance given by the inverse Fisher information. 
This result allows us to simplify the approximating transport family while retaining high accuracy, as the following theorem shows that an affine transport map suffices to approximate a posterior that is close to Gaussian without sacrificing statistical precision for inference.

\begin{lemma}[Affine transport maps for nearly Gaussian posteriors]\label{thm:bvm_linear} Suppose the posterior $\pi_n$ satisfies the following Gaussian approximation bound:
\vspace{-0.5em}
\begin{align}\label{eq:bvm_kl}
d_\mathrm{KL} \Big(  \mN\big(\widehat \theta_n, \, n^{-1}I^{-1}_{\theta_0}\big)\,\Big\|\,\pi_n \Big) = \m O_p(n^{-1/2}),\quad \mbox{as } n\to\infty, \\[-3em]\notag
\end{align}
where $\widehat \theta_n$ denotes the MLE, and $I_{\theta_0}$ is the nonsingular Fisher information matrix at the true parameter $\theta_0$. Then, with the choice of affine transport map class $\mathfrak{T}_{\rm aff}$ of:
\vspace{-0.5em}
\begin{align}\label{eq:linear_map}
{\footnotesize\mfT_{\rm aff} = \Big\{ T: \R^p \to \Theta \,\Big|\, T(x)= m +\frac{1}{\sqrt n}S^{T}x,\text{ with } m \in \R^p \mbox{ and } S\in \R^{p\times p} \text{ being p.d.}\Big\},} \\[-3em]\notag
\end{align}
the estimated transport map $\wh T$, as defined in equation~\eqref{eq:That}, satisfies
\vspace{-0.5em}
\begin{align*}
    d_\mathrm{KL}({\wh T}_\#\mu \,\| \, \pi_n) = \m O_p(n^{-1/2}), \quad \mbox{as }n\to\infty.\\[-2.5em]\notag
\end{align*}
\end{lemma}

\noindent {We include its proof in \Cref{sec:bvm_linear_pf}.}
The Gaussian approximation in \eqref{eq:bvm_kl} typically holds for parametric models with a fixed number of parameters under standard regularity conditions; see, for example, \cite{katsevich2023improved,zhang2024bayesian}. Under the same conditions, the MLE satisfies $\sqrt n(\widehat\theta_n-\theta_0)\overset{\mathrm d}{\to}\mathcal N(0,I_{\theta_0}^{-1})$ as $n\to\infty$. Consequently, credible intervals constructed from the approximation $\widehat T_\#\mu$ to the posterior $\pi_n$ achieve their nominal coverage asymptotically, so using the affine class $\mathfrak T_{\mathrm{aff}}$ preserves statistical accuracy.

\vspace{-0.5cm}

\subsection{Bayesian latent variable models}
\label{sec:gmm}
\vspace{-0.1cm}
Many Bayesian latent variable models involve multiple discrete latent variables $\{c_j\}_{j=1}^J$, for which the cardinality of the latent space often grows exponentially in $J$, creating significant computational challenges for sampling. For instance, in Gaussian mixture models and hidden Markov models \citep{rabiner1989tutorial}, the latent variables $c_j$ are observation specific, so $J$ equals the sample size $n$. In spike and slab regression \citep{ishwaran2005spike}, each $c_j$ is a binary indicator for inclusion of the $j$-th covariate, and $J$ equals the number of covariates. In this subsection, we use the Bayesian Gaussian mixture model as a representative example and show how to construct and approximate the optimal transport map for the posterior over the mixed parameter $\theta$, which includes both model parameters and many discrete latent variables. To improve scalability, we incorporate the mean-field approximation from variational inference \citep{blei2017variational,zhang2024bayesian} to define a transport class $\mathfrak{T}_{\mathrm{MF}}$ for approximating $T^*$.

Specifically, we consider the standard hierarchical formulation of a Gaussian mixture model with $K$ isotropic Gaussian components in $d$ dimensions. Let $Z^{(n)}=(Z_1,\ldots,Z_n)^\top$ denote the observations and $\{c_1,\ldots,c_n\}$ the corresponding latent variables. The model is
$Z_i \mid c_i,\{m_k\}_{k=1}^K \sim \mathcal{N}(m_{c_i}, \sigma^2 I_d)$,
$c_i \sim \mathrm{Categorical}\big(\tfrac{1}{K},\ldots,\tfrac{1}{K}\big)$, and
$m_k \sim \mathcal{N}(m_0,\lambda^2 I_d)$,
for $i=1,\ldots,n$ and $k=1,\ldots,K$. Here $m_k$ is the mean of cluster $k$, and $c_i$ indicates the cluster assignment of $Z_i$. The first two relations define the complete data likelihood, while the third specifies the prior for the model parameter $\bm\zeta=(m_1,\ldots,m_K)^\top$. By Bayes rule, the joint posterior $\pi_n$ of $\theta=(\tau,\bm\zeta):=(\{c_i\}_{i=1}^n,\{m_k\}_{k=1}^K)$ can be written as
\vspace{-0.5em}
\begin{align*}
    \pi_n(\theta) = \pi(\tau,\,\bm \zeta \,|\,Z^{(n)}) \propto  \prod_{i=1}^n \big[\,p(Z_i\,|\,c_i,\,\bm\zeta)\cdot p(c_i)\,\big] \cdot\prod_{k=1}^K \pi( m_k).\\[-3em]
\end{align*}

Since the $n$ latent variables $\{c_i\}_{i=1}^n$ take values in the same unordered label set $[K]$, we follow \Cref{rmk:embedding} and use one-hot encodings $b_k\in\mathbb{R}^K$ for each label $k\in[K]$.
The overall embedding of $\tau=(c_1,\ldots,c_n)$ is $b_\tau=(b_{c_1},b_{c_2},\ldots,b_{c_n})\in\mathbb{R}^{(K-1)n}$. We choose the reference distribution $\mu$ defined on $\mathcal X=\mathcal X_1\times\mathcal X_2=\mathbb{R}^r\times\mathbb{R}^p$ to be $\mu=\mathcal{N}\!\left(0,I_{r+p}\right)$, where $r=(K-1)n$ and $p=Kd$.
In this model, the discrete variable $\tau$ has cardinality $K^n$, which grows exponentially with $n$. According to Theorem~\ref{thm:discrete_case}, this would require estimating $K^n$ convex functions $\{\phi_{c_1,\ldots,c_n}\}_{(c_1,\ldots,c_n)\in[K]^n}$. As a result, computing the optimal transport map $T^*$ in \eqref{eq:dis_trans_map} becomes infeasible even for moderate $n$.
To address this computational challenge, we propose a ``mean-field" approximation to reduce the complexity of the transport map. Instead of estimating an exponentially large number of functions, we decompose the collection of convex functions $\big\{\phi_{c_1,c_2,\ldots,c_n}:\, c_i \in [K],\ i \in [n]\big\}$ as a sum of $n$ individual convex functions:
\vspace{-0.5em}
\begin{align}\label{eqn:mf-approx}
    \phi_{c_1,c_2,\ldots,c_n}\big(x^{(2)}\big) = \sum_{i=1}^n \phi_{c_i}^{(i)}\big(x^{(2)}\big), 
\end{align}
which reduces the number of convex functions from $K^n$ to $Kn$. The resulting transport map from \eqref{eq:dis_trans_map} can be correspondingly factorized as $T^*= \big(T_{1}^{*(1)},\,T_{2}^{*(1)},\ldots,\,T_{n}^{*(1)},\, T^{*(2)}\big)$, 
\begin{equation}\label{Eqn:GMM_map}
\begin{cases}
\ c_i = T_{i}^{*(1)}\big( x^{(1)}_i,\,x^{(2)}\big) = 
\underset{k\in[K]}{\arg \max} \left\{\langle x^{(1)}_i, b_{k}\rangle + \phi_{k}^{(i)}(x^{(2)})\right\},~~\text{for}~i\in[n],\\
\ \bm \zeta = T^{*(2)}\big(\big\{x^{(1)}_i\big\}_{i=1}^n,\,x^{(2)}\big) = \kappa \sum_{i=1}^n \nabla \phi_{c_i}^{(i)}(x^{(2)}),
\end{cases}
\end{equation}
for $x^{(1)}_i \in \mathbb{R}^{K-1}$ and $x^{(2)} \in \R^p$. We refer to the collection $\mfT_{\rm MF}$ of all transport maps of this form as the mean-field transport map class.
Then, our transport map estimator $\wh T$ is obtained by optimizing the objective in \eqref{eq:mixture_KL_opt} over all $T \in \mfT_{\rm MF}$.

The approximation in \eqref{Eqn:GMM_map} follows the spirit of mean-field variational inference \citep{blei2017variational} by simplifying the dependence among latent variables. Our approach lies between a fully flexible model and the classical mean-field approximation, which assumes a fully factorized family. The scheme in \eqref{eqn:mf-approx} imposes a partial conditional independence structure among $\{c_i\}_{i=1}^n$ given the model parameter $\bm\zeta$ determined by $x^{(2)}$, substantially relaxing the full independence assumption.
Since \cite{yang2020alpha} already establishes the estimation consistency for the standard mean-field approximation in Bayesian latent variable models, we expect similar consistency to hold for this more flexible scheme. Numerical results in \Cref{sec:gmm_simu} confirm that the simplified structure in \eqref{eqn:mf-approx}  effectively captures the target posterior.

\vspace{-0.3em}
\subsection{Posterior summaries via OT-derived quantities}\label{sec:center_outward}
Constructing Bayesian credible intervals or regions requires a meaningful notion of quantiles for multivariate posterior distributions. Various approaches have been proposed in the literature, including marginal quantiles, ellipsoidal quantiles for approximately Gaussian posteriors, copula-based quantiles, Tukey depth contours \citep{tukey1975mathematics}, and data depth methods \citep{liu1990notion}. 
More recently, \cite{hallin2021distribution} introduced OT-based center-outward quantiles, where quantile contours of a reference distribution, such as a standard Gaussian, are transformed through the OT map $T^*$. These quantiles retain key properties of univariate quantiles and provide a useful tool for multivariate exploratory analysis and visualization.

\vspace{-0.3em}
\subsubsection{Center-outward quantiles}\label{sec:center_outward_quantiles}

We demonstrate how our OT-based method facilitates the construction of center-outward quantiles for multivariate distributions. This approach preserves the geometric structure of the target distribution, provides meaningful directional information, and can be used for Bayesian exploratory analysis and visualization.
We begin with examples of constructing multivariate quantile contours. %
Our approach is closely related to \cite{hallin2021distribution}, which uses the fact that $T^*$ is the gradient of a convex function to preserve the ordering of outward quantiles. Here, we learn $T^*$ from the unnormalized density of $\pi_n$, whereas \cite{hallin2021distribution} estimate $T^*$ from the empirical data distribution, yielding quantiles defined only at observed data points. In the remainder of this section, we take the reference distribution to be a standard Gaussian and construct center-outward quantile contours by mapping its quantile circles through the estimated OT map $\widehat{T}$.

To illustrate the effectiveness of our OT-based method in preserving non-convex quantile shapes,
we evaluate it on a multimodal distribution, namely the mixture of two bivariate Gaussian distributions considered in \cite{hallin2021distribution}. The target distribution is
$\frac{1}{2}\mathcal{N}(-4m_h, I_2) + \frac{1}{2}\mathcal{N}(4m_h, I_2)$, where $m_h=(1,0)^\top$. 
To generate the $q$-th center-outward quantile, we draw samples from the circle $\{x\in\mathbb{R}^2:\|x\|_2^2=-2\ln(1-q)\}$, corresponding to the $q$-th quantile of $\mathcal{N}(0,I_2)$, and map them through the estimated transport map $\widehat T$. %
{In \Cref{fig:two_balls}, we map the quantile contours (20\%, 50\%, and 90\%) and the axes of the reference measure $\mu$, represented by sign curves from \cite{hallin2021distribution}, to the target distribution. 
The figure also shows the approximated posterior samples and the corresponding center-outward quantiles produced by our method, normalizing flow methods using Planar transformers (Planar, \citep{rezende2015variational}), ICNN \citep{amos2017input}, and the triangular map \citep{marzouk2016sampling}. We also compute the $W_2$ distance between samples from the transport maps and the true distribution in \Cref{tab:W2_two_ball}.

From \Cref{fig:two_balls}, we observe that under our map, all probability mass of $\mu$ with $x_1<0$ is transported to the left mode of $\pi_n$, while the remaining mass is transported to the right mode. The quantile contours remain disconnected due to the well-separated components of $\pi_n$, demonstrating that our quantiles adapt naturally to the geometry of the target distribution. In contrast, although Planar generates reasonable samples, its directional structure is distorted. ICNN captures the separation between the two components through the ReLU activation but simply partitions $\mu=\m N(0,I_2)$ into two half-Gaussians. The triangular map fails to separate the two components because its continuity and smoothness make it unsuitable for capturing multimodal target distributions.
Additional comparisons with other transport map methods and an example on a non-convex banana-shaped distribution are provided in \Cref{sec:additional_quantile}, further illustrating the advantage of our method in preserving complex quantile geometry and directional information. \Cref{tab:W2_two_ball} also shows that our method and normalizing flow methods incur much smaller $W_2$ error in approximating the bimodal target distribution than the other competitors. For this reason, in later numerical studies involving more complex and potentially multimodal distributions, we restrict comparisons primarily to normalizing flow methods.
}
\begin{figure}[!ht]
    \centering
    \subfloat[OT \label{fig:two_balls_OT}]{%
        \includegraphics[trim={4em 1em 4em 6em},clip,width=0.24\textwidth]{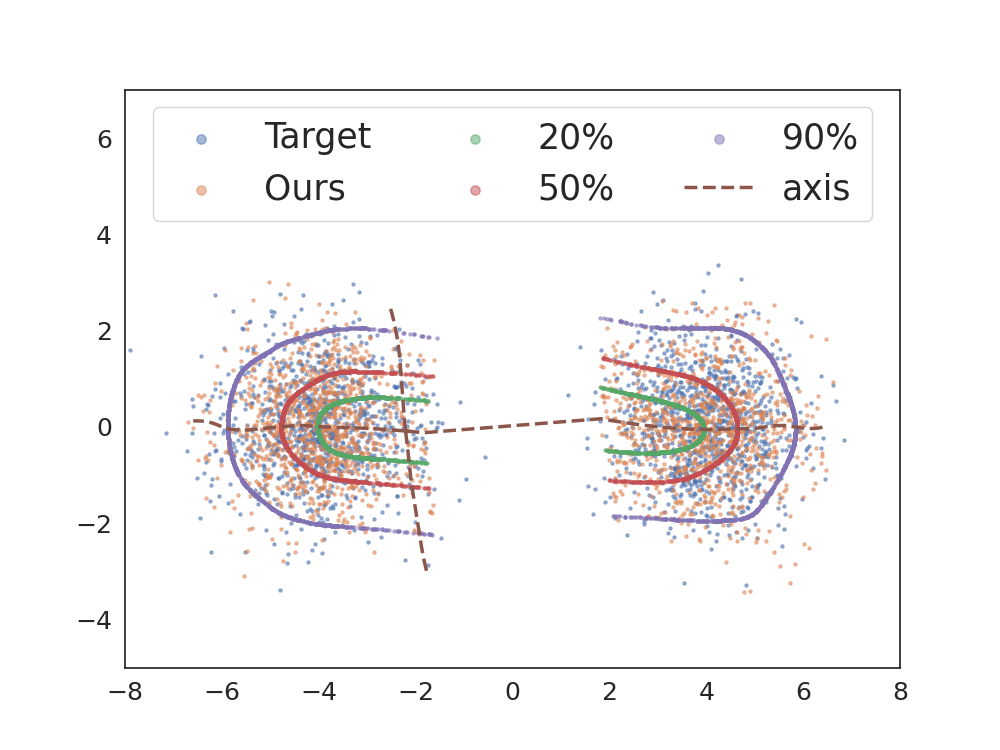}
    }%
    \subfloat[Planar \label{fig:two_balls_NF}]{%
        \includegraphics[trim={4em 1em 4em 6em},clip,width=0.24\textwidth]{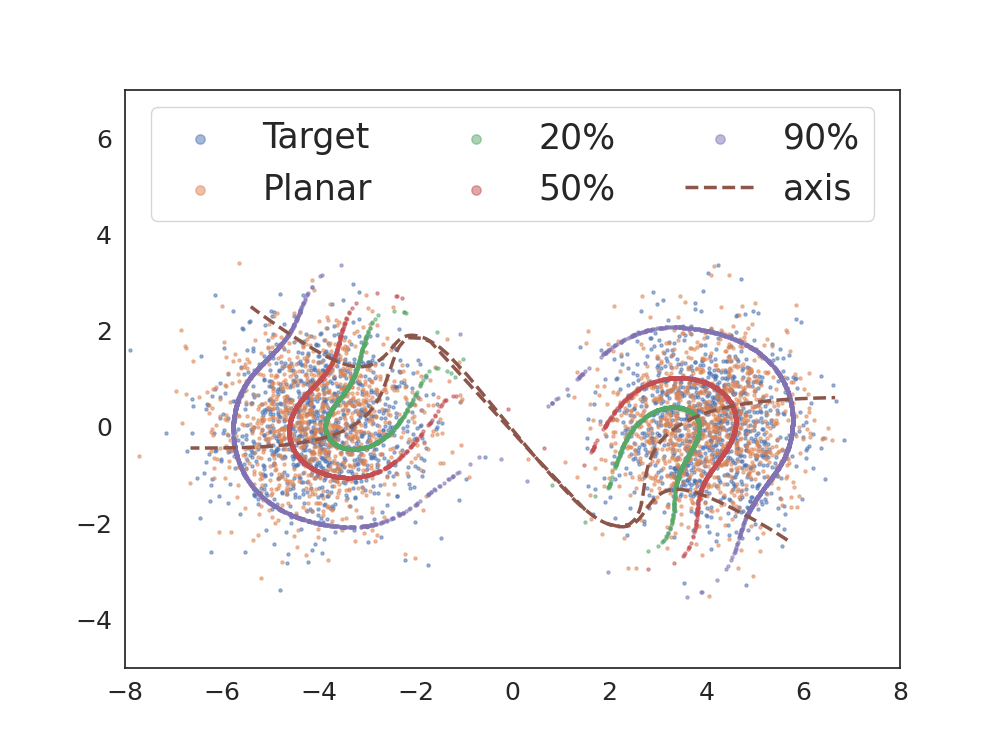}
    }
    \subfloat[ICNN]{
        \includegraphics[trim={4em 1em 4em 6em},clip,width=0.24\textwidth]{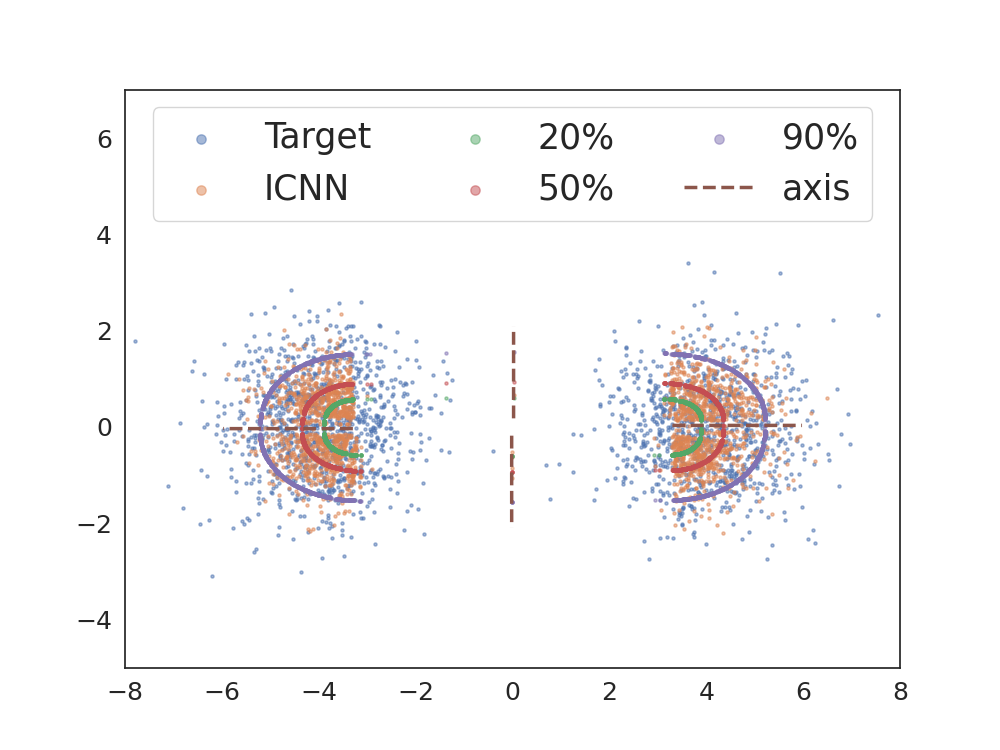}
    }
    \subfloat[Triangular Map]{
        \includegraphics[trim={4em 1em 4em 6em},clip,width=0.24\textwidth]{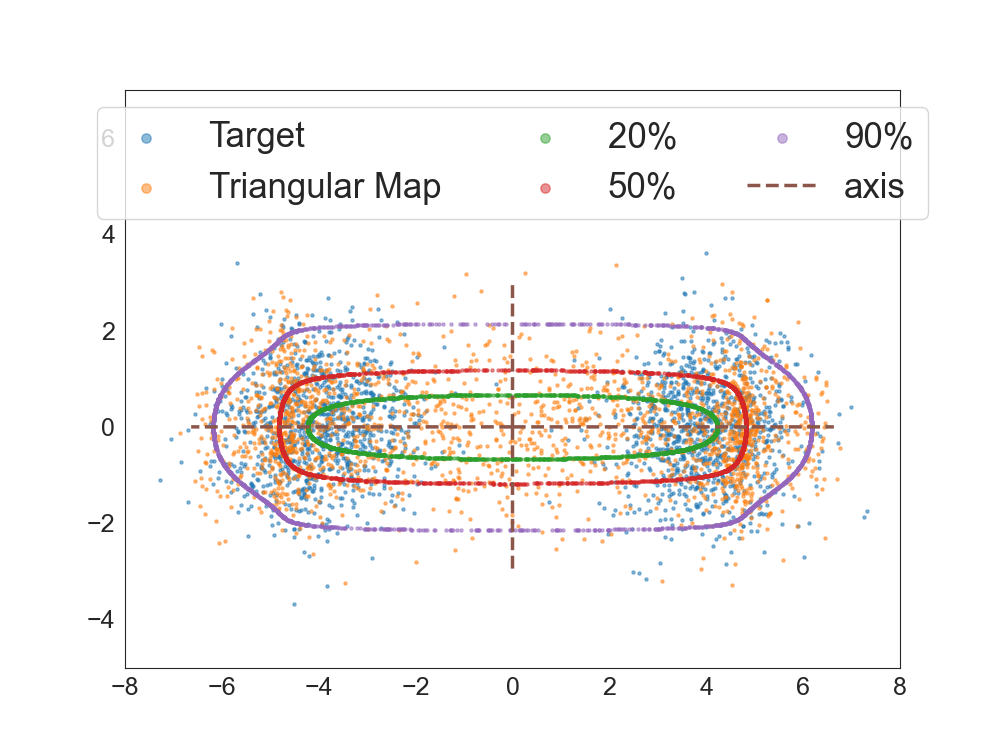}
    }
    \caption{Center-outward quantile contours for the mixture of two bivariate Gaussian example. Here we include the quantile contours by passing the reference contours through the transport map estimated by our method (with $L=2$),  Planar, ICNN and the triangular map.}\label{fig:two_balls}
\end{figure}

\begin{table}[!ht]
\centering
\begin{tabular}{c|c|c|c|c}
\toprule
 Ours & Planar & ICNN & Triangular Map & NSF \\
 \midrule
0.010 &0.010 & 5.057 &  0.183 & 0.085 \\
\bottomrule
\end{tabular}
\caption{$W_2$ distance between samples from the approximated distribution and true distribution in the mixture of two Gaussian example.  NSF: neural spline flow \citep{durkan2019neural}.}\label{tab:W2_two_ball}
\end{table}

\subsubsection{Applications in Bayesian exploratory analysis} 

\vspace{-0.2em}
Multivariate center-outward credible regions, defined by the center-outward quantiles of the posterior $\pi_n$, provide a convenient tool for Bayesian exploratory analysis. These regions account for dependence among parameters and offer a low-cost way to reveal structural patterns before more refined downstream inference. We illustrate two applications of center-outward quantiles.

The first application is the construction of $95\%$ simultaneous credible intervals $\{[a_i,b_i]\}_{i=1}^p$ for a parameter vector $\theta=(\theta_1,\ldots,\theta_p)^\top$ such that $\pi_n\Big(\bigcap_{i=1}^p \{\theta_i\in[a_i,b_i]\}\Big)\ge 95\%$.
These intervals provide a simple way to visualize joint uncertainty across multiple parameters and are particularly useful in multiple comparison settings. Classical approaches based on highest posterior density regions are computationally demanding in high dimensions, while Bonferroni-type marginal adjustments are often overly conservative. In contrast, we construct simultaneous credible intervals from the multivariate center-outward credible region by finding the smallest box $[a_1,b_1]\times\cdots\times[a_p,b_p]$ that contains the $95\%$ region. This approach is straightforward, scalable, and naturally accommodates dependence among parameters.
Concretely, we draw $N$ i.i.d.\ samples uniformly from the $95\%$ quantile ball of the reference distribution $\mathcal N(0,I_p)$, namely $X^{(j)}\sim \mathrm{Unif}\!\big(B_p(0,\sqrt{q_{\chi^2_p}(0.95)})\big)$, where $B_p(0,r)$ is the $p$-dimensional Euclidean ball of radius $r$ and $q_{\chi^2_p}(0.95)$ is the $0.95$ quantile of the chi-squared distribution with $p$ degrees of freedom. These samples are mapped to the parameter space by $\widehat\theta^{(j)}=\widehat T(X^{(j)})$. The simultaneous credible intervals are then given by the coordinate-wise ranges $\big\{[\min_j \widehat\theta_i^{(j)},\,\max_j \widehat\theta_i^{(j)}]\big\}_{i=1}^p$.
An example of this approach for assessing variable importance in Bayesian logistic regression is provided in \Cref{sec:logistic}.

A second application is the definition of a Bayesian analogue of the classical $p$-value for assessing the plausibility of a multivariate parameter value. In the frequentist framework, the $p$-value for $\theta_0$ is the probability of observing data, generated under $p(\cdot\mid\theta_0)$, that are at least as extreme as the observed data. Analogously, a Bayesian $p$-value for $\theta_0$ can be defined as the posterior probability of parameters that are more extreme than $\theta_0$. While credible intervals quantify uncertainty relative to the posterior of $\theta$ and confidence intervals quantify uncertainty relative to the sampling distribution of data, this Bayesian $p$-value measures extremity directly in the parameter space. Center-outward quantile contours provide a natural notion of extremity and form the basis for this definition.

Because the optimal transport map preserves the ordering of center-outward quantiles \citep{chernozhukov2017monge}, the Bayesian $p$-value of $\theta_0$ can be computed as one minus the quantile level of its preimage $(T^*)^{-1}(\theta_0)$ under the reference distribution $\mathcal{N}(0,I_p)$, namely
$1 - q_{\chi^2_p}^{-1}\big(\|(T^*)^{-1}(\theta_0)\|_2^2\big)$. This construction relies on a geometric property of $T^*$: as the gradient of a convex potential, it satisfies a multivariate monotonicity that preserves quantile ordering and allows efficient numerical computation of the inverse map $(T^*)^{-1}$.
These Bayesian $p$-values provide a convenient way to compare the plausibility of different parameter values. They also admit an interpretation related to testing $H_0:\theta=\theta_0$ versus $H_1:\theta\neq\theta_0$, since rejecting $H_0$ when the Bayesian $p$-value falls below $\alpha\in(0,1)$ is equivalent to $\theta_0$ lying outside the $100(1-\alpha)\%$ center-outward credible region. This is not a formal Bayesian test, which would require Bayes factors and point prior mass at $\theta_0$, but rather a fast and interpretable diagnostic for exploratory analysis. Details on computing the inverse map and an additional application to ranking posterior draws are provided in \Cref{sec:inverse_map}.

\section{Simulation Studies}\label{sec:simulation}

In this section, we compare our approach with two alternative transport-based methods: transport Monte Carlo (TMC) \citep{duan2021transport} and three normalizing flow methods, namely Planar transformations \citep{rezende2015variational}, Masked Autoregressive Flow (MAF) \citep{papamakarios2017masked}, and Neural Spline Flow (NSF) \citep{durkan2019neural}. Detailed descriptions of these methods are given in \Cref{sec:review_other_methods}, and the experimental configurations are provided in \Cref{sec:simulation_config}. We have made the code and data publicly available at
\url{https://github.com/yuexiwang/Bayesian-optimal-transport}.

\vspace{-0.5cm}
\subsection{Mixture of Gaussian distributions}\label{sec:mixture_normal_simu}
In this experiment, we evaluate different methods on multimodal continuous distributions given by mixtures of $K$ multivariate normals in $\mathbb{R}^d$. The target distribution is
$\pi_n=\frac{1}{K}\sum_{k=1}^K \mathcal{N}(m_k,\Sigma_k)$,
where $\{m_k,\Sigma_k\}_{k=1}^K$ are assumed known. We consider six combinations of $(d,K)$ with $d\in\{5,10,20\}$ and $K\in\{3,10\}$. The cluster means $\{m_k\}_{k=1}^K$ are sampled from $\mathrm{Unif}[-10,10]$, and the covariance matrices are defined by $(\Sigma_k)_{ij}=\rho_k^{|i-j|}$ with $\rho_k=0.5\cdot(-1)^k$. The reference distribution is the standard Gaussian.

{Both the NF methods and TMC struggle to learn multimodal distributions. NF methods are primarily designed for continuous targets and perform poorly on well-separated mixtures, especially in higher dimensions. NSF performs particularly poorly in this example and is therefore excluded from the comparison. MAF can outperform Planar in simpler settings with small $d$ and $K$ under careful tuning, but it fails to capture distinct modes when either $K$ or $d$ increases. In this example, Planar performs better than MAF because the map from the standard Gaussian reference to each mixture component is approximately linear, which Planar can represent effectively.
For TMC, although it incorporates multinomial mixture weights, its transport plan is not tailored to multimodal targets. As a result, TMC becomes computationally expensive when approximating distributions with multiple modes.}

In contrast, our transport map in \eqref{eq:max_of_convex} is explicitly designed for multimodal targets by exploiting the structure in \Cref{lem:regularity_OT_2}. We set $L=3$ when $K=3$ and $L=32$ when $K=10$. When $K=10$, the modes are closer and harder to distinguish, so a larger $L$ improves stability. We choose $M_\ell=16,\,32,\,64$ for $\ell\in[L]$ when $d=5,\,10,\,20$, respectively.

\begin{table}[!ht]
\centering
{
\begin{tabular}{ |c|c|c|c|c|c| } 
\hline
   (d,\, K)  & Benchmark  & Ours & Planar & TMC & MAF  \\ 
\hline
(5,\, 3) & 0.501 & 1.838 & 3.547 & 2.420 & 0.482 \\
\hline 
(5,\, 10) & 1.104 & 2.671 & 4.393 & 6.560 & 5.217 \\
\hline 
(10,\, 3) & 2.663 & 3.923 & 6.463 & 14.696 & 6.214 \\
\hline 
(10,\, 10) & 3.272 & 5.562 & 10.310 & 12.683 & 68.328 \\
\hline 
(20,\, 3) & 7.437 & 10.287 & 9.979 & 21.694 & 11.349 \\
\hline 
(20,\, 10) & 10.001 & 11.334 & 17.546 & 24.954& 311.193 \\
\hline 
\end{tabular}
\caption{2-Wasserstein distance between $10,000$ samples from target distribution and each method at different combinations of $d$ and $K$.}
\label{table:w2-maps}}
\end{table}

\Cref{table:w2-maps} reports the $2$-Wasserstein distances between $10{,}000$ samples from the target distribution and those generated by each method. We also report the benchmark distance between two independent sets of $10{,}000$ samples from the true target distribution to reflect sampling variability. 
Our transport map achieves the smallest distance across most settings of $d$ and $K$, with the exception of $(d,K)=(20,3)$. The performance gap between our method and the alternatives widens as $K$ or $d$ increases, since NF and TMC have difficulty identifying all modes and capturing the distributional variability, whereas our method handles both effectively.

\begin{figure}[!ht]
    \centering
    \subfloat[Ours \label{fig:normal_OT}]{%
        \includegraphics[width=0.32\textwidth, height=5cm]{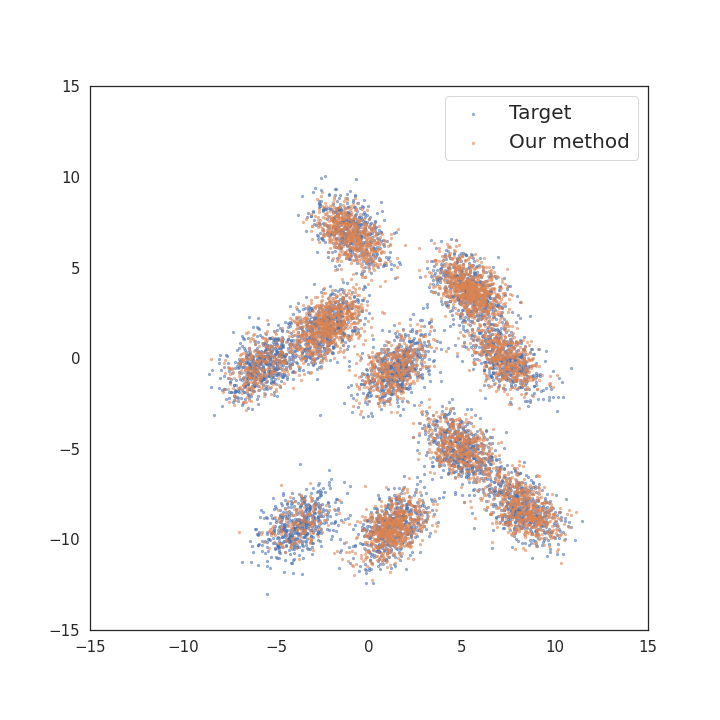}
    }
    \subfloat[Planar]{%
        \includegraphics[width=0.32\textwidth, height=5cm]{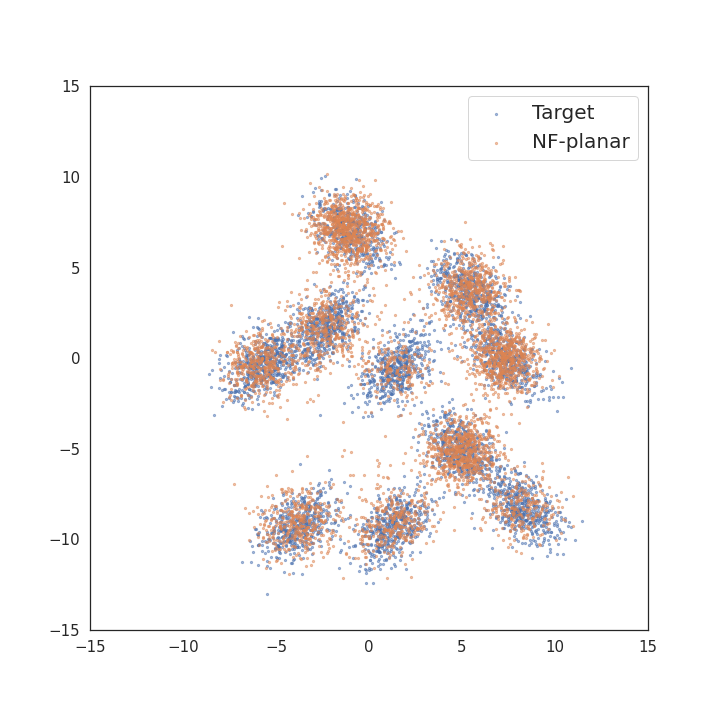}
    }
    \subfloat[TMC]{%
        \includegraphics[width=0.32\textwidth, height=5cm]{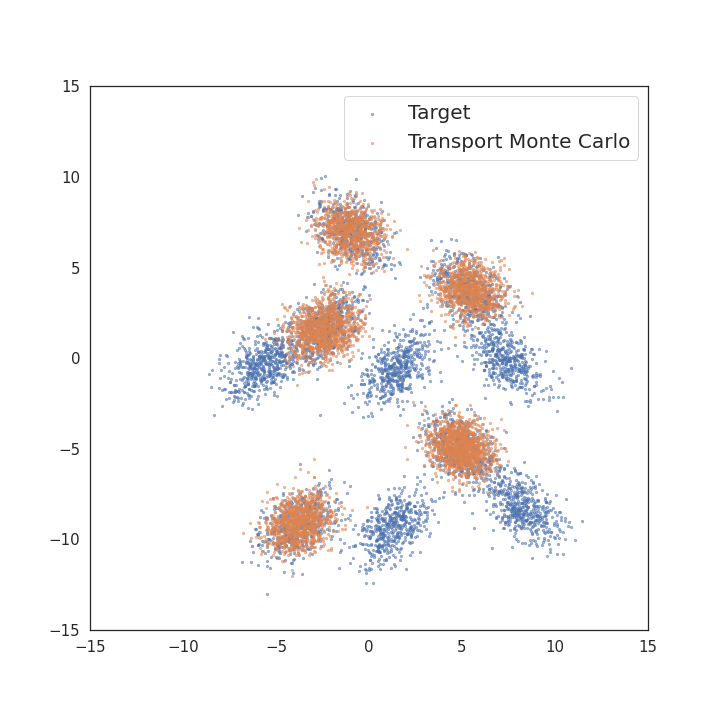}
    }
    \caption{Comparison of draws generated from the target distribution and different approximate methods when \( d=20 \) and \( K=10 \).}
    \label{fig:mixture_normal}
\end{figure}

\Cref{fig:mixture_normal} visually compares samples generated by each method for $d=20$ and $K=10$. Our transport map, with $L=32$ and $M_\ell=64$ for $\ell\in[L]$, performs best by accurately capturing all modes and preserving the spread of each cluster. In contrast, TMC identifies only four modes, and Planar fails to capture the shape of all clusters correctly, indicating a loss of correlation structure.

\vspace{-0.4cm}
\subsection{Bayesian logistic regression}\label{sec:logistic}
In this simulation study, we consider Bayesian logistic regression, a problem known to be challenging due to the nonlinear form of the binomial likelihood. A common strategy to enable conjugate updating is Polya--Gamma data augmentation \citep{polson2013bayesian}, but its implementation can be involved. Our method provides a simpler and more direct alternative.

Given $n$ binary responses $\{y_1,\ldots,y_n\}\subset\{0,1\}$ with covariates $\{x_1,\ldots,x_n\}\subset\mathbb{R}^p$, the posterior of the regression coefficient $\beta\in\mathbb{R}^p$ is
\begin{align*}
\pi_n(\beta\mid \bx, \by)  \ \propto \ \, \underbrace{\prod_{i=1}^n\frac{\big(e^{x_i^T\beta}\big)^{y_i}}{1 + e^{x_i^T\beta}}}_\text{likelihood} \ \times \ \underbrace{\pi(\beta)}_\text{prior},
\end{align*}
where $\by=(y_1,\ldots,y_n)^\top$, $\bx=(x_1,\ldots,x_n)^\top$, and we use a weakly informative prior $\beta\sim\mathcal{N}(0,\sigma^2 I_p)$ with $\sigma=10$.
For the simulated data, we set $n=1000$ and $p=10$. The entries of the true $\beta$ are drawn from $\mathrm{Unif}[-1,1]$. The covariates $x_i$ are sampled from $\mathcal{N}(0,\Sigma)$, where $\Sigma$ is a Toeplitz matrix with $\Sigma_{ij}=\rho^{|i-j|}$ and $\rho\in\{0.1,0.3,0.5,0.7,0.9\}$.
Since the posterior is unimodal in this example, we set $L=1$ in the approximation family \eqref{eq:max_of_convex}. Moreover, because $n\gg p$, we use the linear transport map in \eqref{eq:linear_map} to approximate the posterior, referred to as \texttt{OT\_linear}.
 
We use a Gibbs sampler as a benchmark, since direct sampling from $\pi_n$ is infeasible. The performance of different samplers is evaluated using two metrics: (1) the difference ratio of $95\%$ credible intervals (CIs), and (2) the $2$-Wasserstein distance between $500$ draws from the Gibbs sampler and those from each transport-based method. All metrics are averaged over $100$ repeated experiments.
For the $95\%$ CIs, we obtain $500$ draws from each sampler and construct marginal credible intervals for each parameter. The difference ratio is defined as $|I_1 \oplus I_2|/|I_1|$, where $I_1$ is the CI from the Gibbs sampler, $I_2$ is the CI from the transport-based method, and $I_1 \oplus I_2=(I_1\cup I_2)\setminus(I_1\cap I_2)$.
For the $2$-Wasserstein distance, we report a standardized version to account for variability across different $\rho$. Each parameter is standardized by its standard deviation computed from the Gibbs samples, and the $2$-Wasserstein distance is computed on the standardized parameters and averaged over the $10$ dimensions. Results for different correlation strengths $\rho$ are summarized in \Cref{fig:diff_ratio_w2_dist_new}.

From \Cref{fig:diff_ratio_w2_dist_new}, our methods consistently perform best across all values of $\rho$, while TMC performs substantially worse in terms of both credible intervals and predictive accuracy.
We also observe that the difference between the map learned using the convex unit construction with $L=1$ (OT) and the simpler linear map in \eqref{eq:linear_map} (\texttt{OT\_linear}) is minimal. This provides empirical evidence for the effectiveness of the linear approximation in \eqref{eq:linear_map}, especially when the sample size $n$ is large.

\begin{figure}[!ht]
    \centering
    \includegraphics[width=\textwidth]{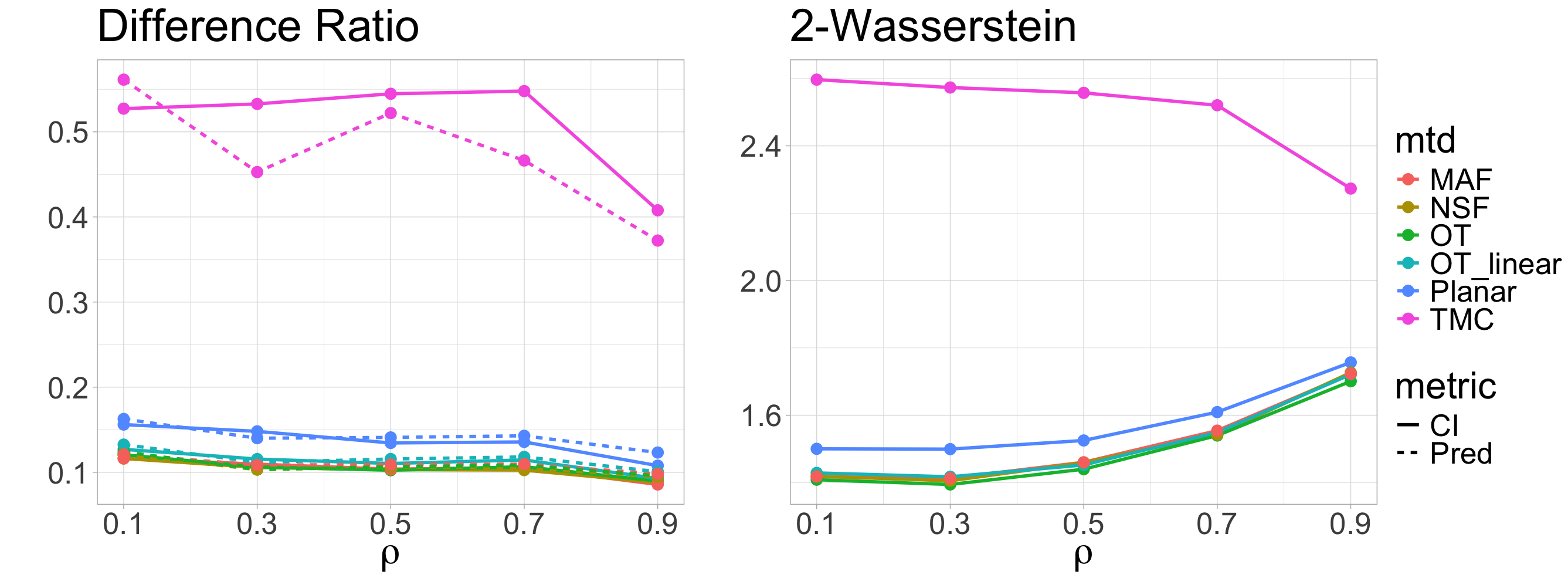}
    \caption{Difference ratio of \(95\%\) credible intervals for parameters and \(95\%\) prediction intervals of new data points, and the standardized 2-Wasserstein distance for posterior distributions under transport map methods compared with the Gibbs posterior.}
    \label{fig:diff_ratio_w2_dist_new}
\end{figure}

{\noindent {\bf  Exploratory analysis using OT-derived quantiles.} 
We present another example to illustrate the inferential advantage of the approach in \Cref{sec:center_outward} for exploratory analysis involving multiple parameters.
We consider a sparse regression setting with $\beta=(2,0,4,0,3,0,-1,0,1,0)^\top$ and introduce moderate correlation among predictors by setting $\rho=0.5$. This induces dependence among the posterior marginals of the coefficients $\widehat\beta_i$, which can make interpretation based solely on marginal summaries misleading. In such cases, simultaneous credible intervals derived from the center-outward quantile regions in \Cref{sec:center_outward} provide a way to visualize joint uncertainty while accounting for parameter dependence.

Recall that the method constructs approximate $95\%$ simultaneous credible intervals by finding the smallest axis-aligned box that contains the $95\%$ center-outward credible region. For exploratory purposes, one can assess the relevance of each coefficient by checking whether zero lies within its interval. This visualization offers an interpretable summary of parameter importance that accounts for joint dependence and controls familywise error. A detailed view of the posterior distribution is shown in \Cref{fig:logistic_variable_selection}.

\begin{figure}[!ht]
\centering
\subfloat[Marginal 95\% CI\label{fig:logistic_marginal}]{
\includegraphics[clip,width=0.3\textwidth]{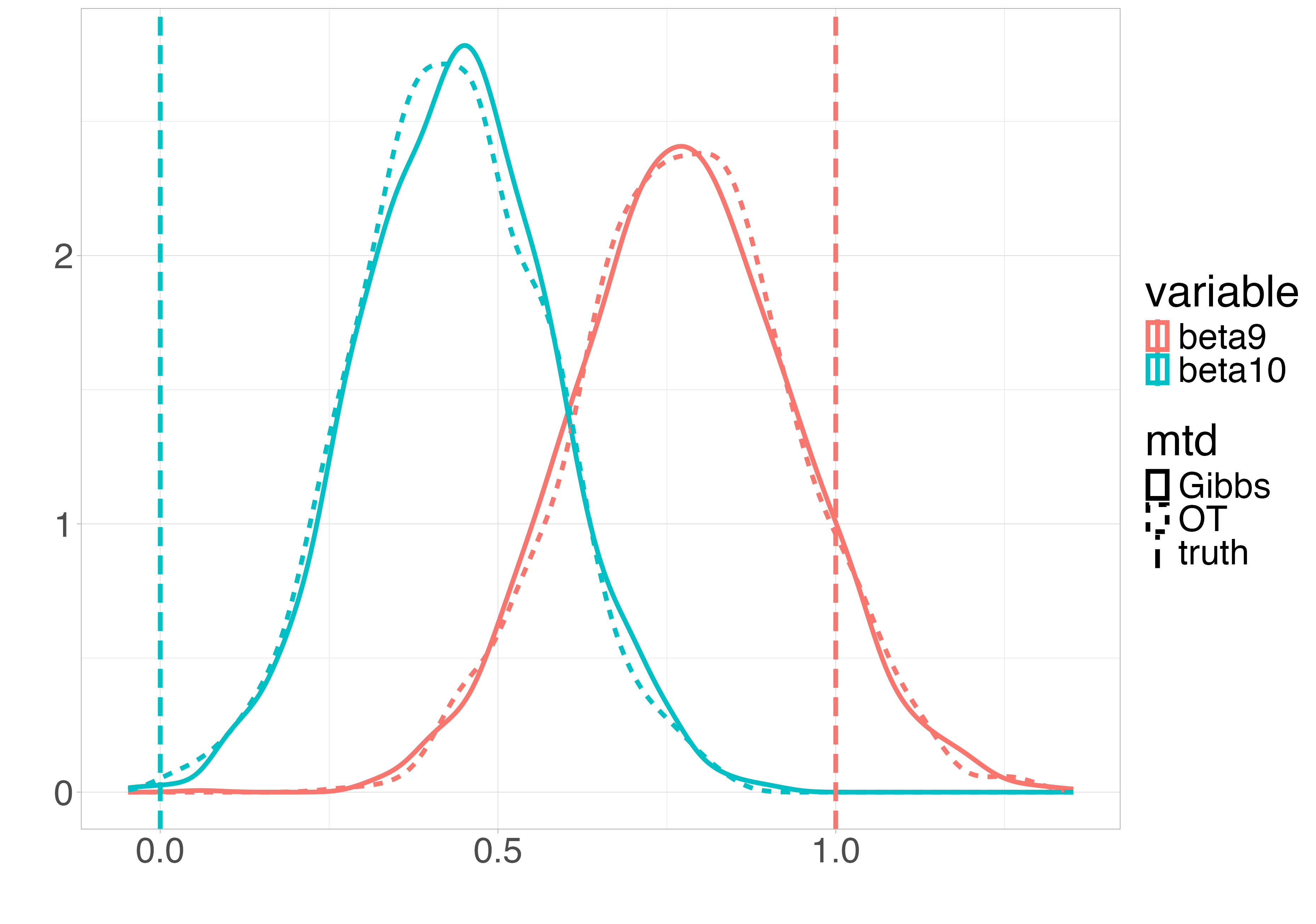}
}
\subfloat[95\% Credible Region\label{fig:logistic_credible_region}]{
\includegraphics[clip,width=0.35\textwidth]{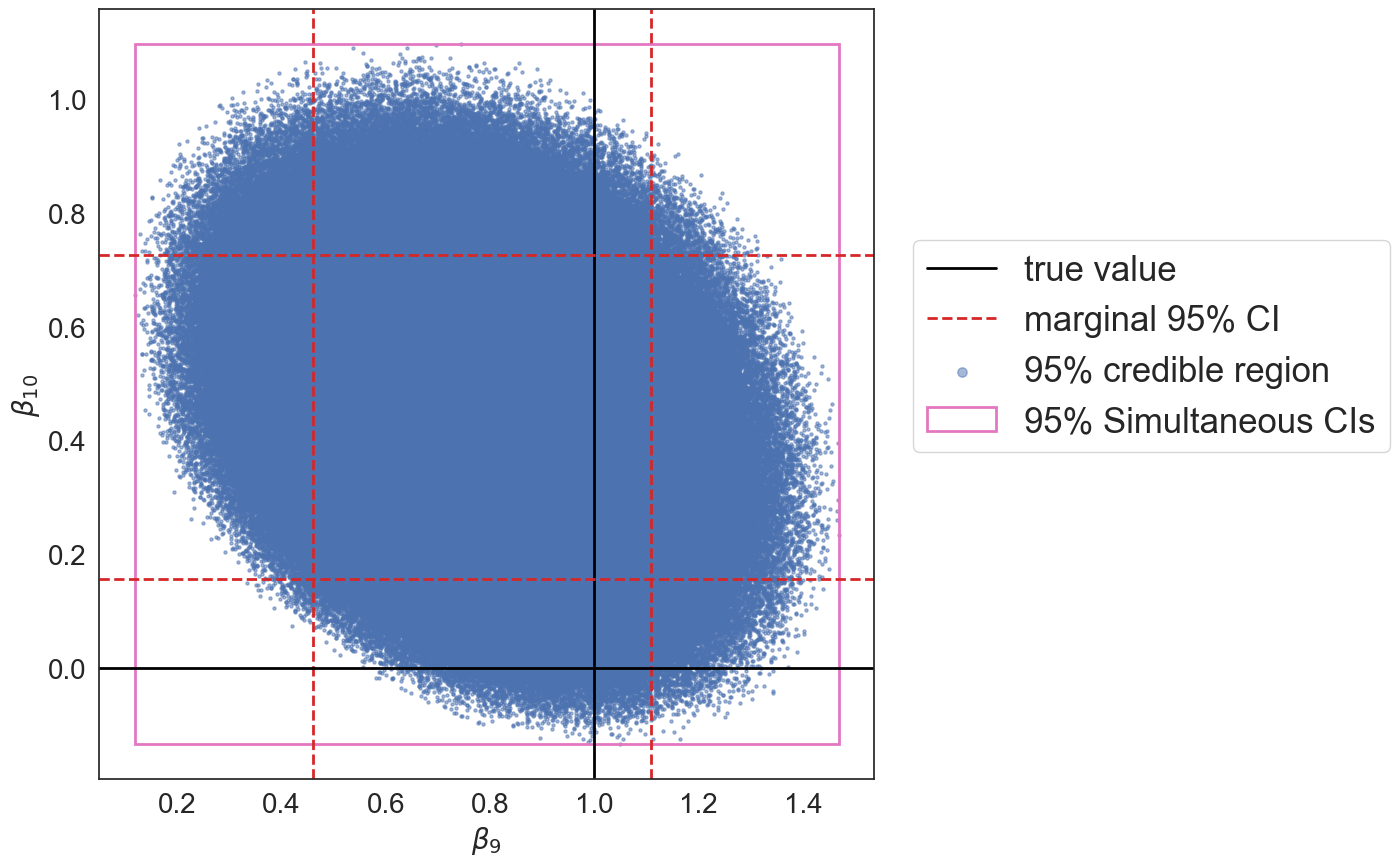}
}
\subfloat[95\% Simultaneous CIs\label{fig:logistic_ci}]{
\includegraphics[clip,width=0.3\textwidth]{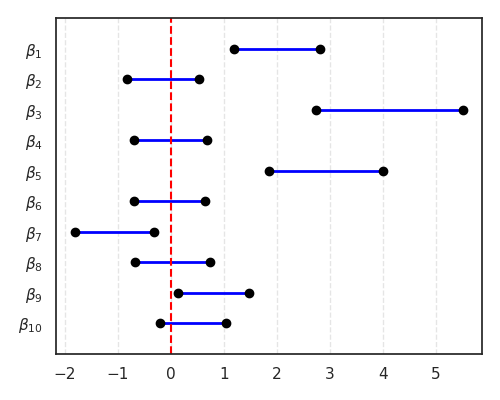}
}
\caption{Posterior summary for $(\beta_9,\beta_{10})$ and simultaneous credible intervals for all $\beta_j$'s. Here we include: (a) marginal density of $\beta_9$ and $\beta_{10}$ produced by Gibbs posterior and our method (OT); (b) the 95\% credible region by OT (projected onto $(\beta_9, \beta_{10})$); and (c) all 95\% simultaneous credible intervals (CIs)}\label{fig:logistic_variable_selection}
\end{figure}

As shown in \Cref{fig:logistic_marginal}, the marginal posterior distributions from our method closely match those from the Gibbs sampler. Both approaches produce marginal $95\%$ credible intervals for $\beta_{10}$ that exclude zero. However, the joint distribution of $(\beta_9,\beta_{10})$ in \Cref{fig:logistic_credible_region} reveals a negative correlation that is captured by the OT-based credible region. This suggests that the marginal view may overstate the evidence against $\beta_{10}=0$, while the joint credible region provides a more nuanced interpretation. Such information can guide more careful follow-up analysis in the presence of parameter dependence.
In \Cref{fig:logistic_ci}, we visualize all $95\%$ simultaneous credible intervals, where $5$ out of $10$ intervals contain zero. This exploratory analysis indicates that incorporating variable selection, for example through sparsity-inducing priors on the regression coefficients, may improve both predictive and inferential performance.

For multivariate hypotheses such as the null model $\beta=(0,\ldots,0)^\top$, it is informative to examine its preimage under the OT map. In this example, the preimage is
$(-27.0,\,4.1,\,-50.6,\,-2.3,\,-37.9,\,10.9,\,16.3,\,16.2,\,-12.1,\,-3.3)^\top$
with $\ell_2$ norm $74.5$, far exceeding the $95\%$ chi-squared radius $q_{\chi^2_{10}}(0.95)=18.3$. The corresponding Bayesian $p$-value,
$1-q^{-1}_{\chi^2_{10}}(74.5^2)<2.22\times 10^{-16}$, indicates that $\beta=0$ is extremely implausible under the posterior. Although not a formal test, this provides a fast way to assess the compatibility of a multivariate parameter value with the joint posterior structure.

}

\vspace{-0.3cm}
 
\subsection{Gaussian mixture models}\label{sec:gmm_simu}
In this subsection, we evaluate the performance of our ``mean-field" transport map on Bayesian Gaussian mixture model with $K=3$ mixture components, as discussed in \Cref{sec:gmm}.  The goal is to assess how well our method approximates the posterior distribution of both latent variables $\{c_i\}$ and cluster means $\{\bm m_k\}$.

We generate $n=300$ samples $Z^{(n)}=\{ Z_1,  Z_2, \ldots,  Z_n\}$ in $p=2$ dimensions and their latent variables $\{c_1,c_2,\ldots,c_n\}$ are uniformly generated from the $K$ components. For the cluster means, we consider $\{{m}_1=(-\Delta/2,0)^T,\, {m}_2=(0,\Delta)^T,\,{m}_3=(\Delta/2,0)^T\}$ where $\Delta \in \{2, 4, 6\}$  controls the separation between clusters.  When $\Delta$ is small, the clusters are closer to each other and the problem becomes harder. Then given the latent variable $c_i$, the observations are drawn from $ Z_i\mid c_i \sim  N( m_{c_i}, I_2)$. We set the prior for cluster means as $ m_k \sim \mN(0, 10^2 I_2)$ for every $k=1, \ldots, K$.

\begin{table}[!ht]
\centering
{
\resizebox{0.9\textwidth}{!}{
\begin{tabular}{ c|cc c|ccc|ccc } 
\toprule
     & \multicolumn{3}{c|}{$\Delta=2$} & \multicolumn{3}{c|}{$\Delta=4$} & \multicolumn{3}{c}{$\Delta=6$} \\ 
 & Ours & Planar & MAF & Ours & Planar & MAF & Ours & Planar & MAF \\
 \midrule
\mbox{Latent variables} & 0.024 & -  & -& 0.005 & - & - & 0.001  & -& - \\
$m_1$ & 0.078 & 0.053 & 0.052 & 0.031 & 0.030 & 0.030 & 0.030 & 0.028 & 0.027 \\

$m_2$ & 0.092  & 0.072 & 0.072 & 0.034 & 0.029  & 0.028 & 0.032 & 0.027 & 0.027 \\

$m_3$ & 0.082  & 0.055 & 0.055 & 0.031 & 0.029 & 0.029 &  0.029 & 0.028 & 0.026\\
\bottomrule
\end{tabular}}}
\caption{Average total variation distance for latent variables and 2-Wasserstein distance for cluster means over $100$ replications for our method, Planar and MAF. The probability mass function of the latent variable is computed in closed form instead of using soft-max approximation.}\label{tab:gmm_distance_closed_form}
\end{table}

Similar to \Cref{sec:logistic}, we use a Gibbs sampler as the benchmark. Due to the large latent space, TMC is computationally infeasible, so we compare only with Planar and MAF. To evaluate performance, we generate $1000$ posterior samples of the latent variables $\{c_i\}$ and cluster means $\{m_k\}$ from both our method and the Gibbs sampler. For the latent variables, we compute the total variation distance for each $c_i$ and average over all $300$ variables. For the cluster means $\{m_k\}$, we compute the $2$-Wasserstein ($W_2$) distance between the empirical distributions. Since Planar and MAF do not produce posteriors for latent variables, we evaluate them only on the mean parameters $\{m_k\}$. All metrics are averaged over $100$ repeated experiments, with results reported in \Cref{tab:gmm_distance_closed_form}.
When $\Delta$ is large and the clusters are well separated, our method performs comparably to the NF methods in estimating the cluster means. Unlike NF methods, however, our approach simultaneously estimates both the latent variables $c_i$ and the means $m_k$, yielding a more interpretable posterior. When $\Delta$ is small and the clusters overlap, the problem approaches a singular regime in which the MLE no longer achieves the parametric $\sqrt{n}$ rate, making estimation more challenging and degrading the performance of all methods. Although our method exhibits a larger $W_2$ distance than the NF methods in this regime, it still provides estimates for the latent variables, which NF methods do not.

\section{Empirical Analysis on the \textit{Yeast} Dataset}\label{sec:real_data}
In this section, we evaluate the performance of our method on the yeast dataset \citep{elisseeff2001classification}. The dataset contains $2,417$ genes, each with $103$ covariates and $14$ gene functional classes. Each functional class defines an independent binary classification task. In our analysis, we focus on the first gene functional class and model the problem using Bayesian logistic regression.

We begin by pre-screening covariates, retaining only those with an absolute correlation of at least 0.1 with the class label. This filtering step results in $24$ covariates. We then apply a generalized linear model (GLM) for standard logistic regression and use an MCMC implementation for Bayesian logistic regression. Our method is compared with four alternatives: Planar, TMC, MAF and NSF.
We assess performance from two perspectives: uncertainty quantification and variable selection. For variable selection, a covariate is considered significant if its marginal 95\% credible interval (or confidence interval) does not include zero. Since the true model is unknown, we use the MCMC posterior as the benchmark.

When trained on the full dataset, both GLM and MCMC identify the same 10 significant covariates (including the intercept) at a 5\% significance level. Among the transport map methods, our approach  and MAF select exactly the same covariates as MCMC. Note that we also check the significance of covariates using the 95\% simultaneous credible intervals we show in \Cref{sec:logistic} and we find only 3 covariates significant, suggesting strong correlation among covariates. In contrast, Planar misses one covariate, NSF selects on extra covariate and TMC selects two additional covariates not selected by MCMC. To evaluate predictive performance, we split the dataset into training and testing subsets, with 70\% of the data used for training. Both of our method and NSF achieve a test accuracy of 76.8\%, slightly outperforming MAF (76.6\%), Planar (76.4\%) and TMC (76.7\%), and comparable to the MCMC-based method (76.7\%).

We further analyze uncertainty quantification by comparing the 95\% credible interval length difference ratios between the transport-based methods and MCMC, as described in \Cref{sec:logistic}. The results for the 10 significant variables are shown in \Cref{table:diff_credible_intervals_yeast}. For most variables, the credible intervals produced by our method are the closest in length to those from MCMC, further supporting its reliability.

\begin{table}[!ht]
\centering
\resizebox{0.9\textwidth}{!}{
\begin{tabular}{|c|c|c|c|c|c|c|c|c|c|c|}
\hline 
& Intercept & Att3 & Att34 & Att58 & Att66 
& Att79 & Att88 & Att89 & Att96 & Att102 \\
\hline
Ours 
& 0.026 & 0.022 & 0.020 & 0.014 & 0.002
& 0.004 & 0.019 & 0.025 & 0.007 & 0.019 \\
\hline
Planar 
& 0.018 & 0.052 & 0.090 & 0.035 & 0.050
& 0.046 & 0.037 & 0.033 & 0.020 & 0.005 \\
\hline
TMC 
& 0.011 & 0.005 & 0.11 & 0.014 &
       0.01 & 0.047 & 0.058  & 0.034 &
       0.036 & 0.012 \\
       \hline
NSF 
& 0.005 & 0.012 & 0.024 & 0.019 &
       0.023 & 0.008 & 0.037 & 0.009 &
        0.013 & 0.017  \\
\hline
MAF
& 0.012 & 0.015 & 0.03  & 0.012 &
       0.012 & 0.012 & 0.022 & 0.008 & 0.01 & 0.012 \\
\hline
\end{tabular}
}
\caption{Difference ratio of 95\% credible intervals between transport map methods and the MCMC posterior for the 10 significant variables selected by MCMC.}
\label{table:diff_credible_intervals_yeast}
\end{table}

To assess robustness, we repeat the variable selection process 100 times using different train-test splits. In \Cref{tab:sig_var_set}, we report the average number of over-selected and under-selected variables compared to MCMC, which selects an average of 10.2 significant variables. Over-selection refers to variables selected by the transport method but not by MCMC, while under-selection refers to variables selected by MCMC but missed by the transport method.
From \Cref{tab:sig_var_set}, we observe that our method consistently yields good variable selection results to MCMC. This is consistent with the fact that its credible intervals are also the most similar to those from MCMC. These results highlight that our method is a reliable and computationally efficient alternative for Bayesian logistic regression, offering both accurate uncertainty quantification and robust variable selection.

\begin{table}[!h]
\centering{
\begin{tabular}{ |c|c|c|c|c|c| } 
\hline
       & Ours & Planar & TMC & NSF & MAF \\ 
\hline 
\#under-select & 0.04 & 0.75 & 0.02 & 0.14 & 0.14\\
\hline 
\#over-select & 0.18 & 0.05 &1.87 & 0.59 & 0.15\\
\hline
\end{tabular}}
\caption{Average number of under-select and over-select significant variables from  transport map methods compared with MCMC over $100$ repetitions.}
\label{tab:sig_var_set}
\end{table}

\section{Discussion}\label{sec:discussion}
{We propose an efficient sampler for Bayesian posteriors with unknown normalizing constants by learning a transport map from a simple reference distribution to the target posterior. Once learned, the map enables independent sampling at negligible cost. Following prior transport-based approaches \citep{kim2013efficient,el2012bayesian}, the map is learned by minimizing the KL divergence between the transported and target distributions, thereby avoiding evaluation of the normalizing constant.
Our main contribution is a structurally constrained transport map class $\mathcal T$, motivated by optimal transport theory, which ensures unique recovery of the Monge map and reduces computational complexity. This structure also facilitates downstream inferential tools, including multivariate quantiles and ranks for Bayesian exploratory analysis.
We further extend the framework to mixed discrete–continuous settings and introduce simplified transport families tailored to specific regimes, including affine maps for approximately Gaussian posteriors and mean-field transport classes for latent variable models with large discrete spaces.}

One interesting future direction is to relax the invertibility assumption in \Cref{prop:equiv} and develop transport map estimators that better align with the geometric structure of the Wasserstein space. For example, when the target distribution concentrates near a low-dimensional manifold \citep{arjovsky2017wasserstein}, using a reference distribution whose dimension matches the intrinsic rather than the ambient dimension may reduce computational cost and improve interpretability. Replacing the KL divergence in the learning objective with alternative discrepancy measures, such as Stein discrepancies~\citep{gorham2015measuring,liu2016kernelized}, is another promising direction, as these avoid Jacobian determinant evaluations and may be more robust in high dimensions.

Another important direction concerns the combination of multiple transport maps via the maximum of potential functions, as in Lemma~\ref{lem:regularity_OT_2}. Unlike classical mixtures of couplings or mixture-of-experts constructions, where weights are produced by a separate gating mechanism and may collapse in moderate to high dimensions, the max-of-potentials formulation determines both the local map and its activation through the same potential function. This allows the method to adaptively select a small number of effective transport maps based on the geometry of the target distribution. Understanding the theoretical properties of this construction remains an interesting open problem.

Finally, we highlight a potential application of our framework to graphical models \citep{wainwright2008graphical}. While existing work has largely focused on discrete settings \citep{chen2020optimal,akagi2020probabilistic,haasler2021multi}, continuous and mixed-variable graphical models remain underexplored. Our approach offers a new perspective for addressing these more complex scenarios.

	{\fontsize{11.15}{11.15}\selectfont %
	\setlength{\bibsep}{0pt plus 0.24ex}
	
\bibliographystyle{abbrv}
\bibliography{bayes_ot}
}

\newpage
\appendix
\begin{center}
{\bf\Large Supplementary Materials to ``An Optimal Transport-Based Generative Models for Bayesian Posterior Sampling"}
\end{center}

The supplementary material is organized as follows. 
In \Cref{app:com}, we provide a detailed comparison of our method with existing transport-based approaches for Bayesian inference, including normalizing flows, transport Monte Carlo, triangular maps, and ICNNs, and revisit the simulation in \Cref{sec:center_outward_quantiles} to highlight differences in their inferential capabilities. In \Cref{sec:computation}, we present additional computational details related to the algorithms and numerical studies discussed in the main paper. In \Cref{sec:algorithm}, we provide further details on key algorithms. In \Cref{app:A}, we collect the proofs of all lemmas and theorems presented in the main paper. Finally, in \Cref{app:sen}, we conduct sensitivity analyses of our method, including the choice of embeddings in Bayesian latent variable models and the choice of the number of local potential functions $L$ when modeling the overall potential as the maximum of $L$ local potentials.

{
\section{Comparison with Other Transport-based Methods}\label{app:com}

In this section, we provide a more detailed comparison between our proposed method and other transport-based approaches for Bayesian inference. This includes a review of normalizing flows (NFs), transport Monte Carlo (TMC), and alternative methods for constructing transport maps, such as triangular maps \citep{el2012bayesian,marzouk2016sampling} and input convex neural networks (ICNN) \citep{geuken2025input}. Later, we revisit the simulation example in \Cref{sec:center_outward_quantiles} to provide a closer look at the differences in inferential ability among these methods.

\subsection{Review}\label{sec:review_other_methods}
First, we briefly review the TMC approach \citep{duan2021transport}, which is the more distinct from other transport-based methods. TMC constructs a random transport plan between a simple reference distribution $\mu$, which is default to be standard uniform, and the target distribution $\pi_n$ using an infinite mixture of simple invertible maps. It approximates the conditional distribution of $\pi(x\mid z)$, where $z\sim \pi_n$ and $x\sim \mu$, using a infinite mixture of inverse maps
\[
\tilde\pi(x\mid z) = \sum_{k=1}^\infty w_k(z) \delta\big\{x-T_k^{-1}(z)\big\},
\] 
where $\tilde\pi(x\mid z)$ is the approximate conditional distribution, $w_k(z)$ are weights that depend on $z$, $\delta\{\cdot\}$ is the Dirac delta and $T_k$'s are simple invertible maps (chosen as location-scale transforms). With this inverse maps, the reverse conditional becomes
\[
\tilde \pi_n(z\mid x) = \sum_{k=1}^\infty w_k\big(T_k(x)\big) \pi\big(T_k(x)\big)\abs{\det\big(\nabla T_k(x)\big)}.
\]
During the sampling stage, TMC first samples $Z_i\iid  \mu$ and then draws component $k$ with probability $w_k(T_k(Z_i))\abs{\det\big(\nabla T_k(x)\big)}$. Finally, it outputs $X_i = T_k(Z_i)$ as a posterior sample. The weights $w_k(z)$ and the parameters of $T_k$'s are learned by minimizing the KL divergence between $\tilde \pi_n(z\mid x)$ and the target posterior $\pi_n(z)$ using stochastic gradient descent.

Second, we review the normalizing flow (NF) approach for Bayesian inference. NFs construct a transport map by composing a series of simple, invertible transformations $\{f_k\}_{k=1}^K$, each with a tractable Jacobian determinant. The overall transformation is given by $T=f_K \circ f_{K-1} \circ \cdots \circ f_1$. Similarly to our method, NFs aim to minimize the KL divergence between the pushforward measure $T_\# \mu$ and the target posterior distribution $\pi_n$ with the objective function same as in \eqref{eqn:new_obj}. However, the focus of NF approaches is primarily on designing flexible and expressive transformation classes, for exampls planar flows \citep{rezende2015variational}, masked autoregressive flows (MAF) \citep{papamakarios2017masked}, and neural spline flows (NSF) \citep{durkan2019neural}, to name a few. These methods often rely on deep neural networks to parameterize the transformations, and more expressive classes usually lead to better approximation of the target distribution. However, this comes at the cost of increased computational complexity and potential challenges in training stability. They are not specifically designed to recover the optimal transport map, which may limit their interpretability and inferential capabilities compared to our method. Additionally, due to the invertibility requirement, NFs fail to capture distributions with discrete compoenents, such as mixture models with latent varaiables, and multi-modal distributions with disconnected supports.

Lastly, we discuss alternative methods for constructing transport map classes. One of the pioneering works is the lower triangular map proposed by \cite{el2012bayesian}, which takes the form 
\begin{align*}
    T(x) = \begin{pmatrix}T^1(x_1) \\ T^2(x_1, x_2) \\ \vdots \\ T^p(x_1, x_2, \ldots, x_p)\end{pmatrix},  \quad \text{for any }  x= (x_1, \ldots, x_p)^T \in \R^p,
\end{align*}
where $T^i$ represents output $i$ of the map. Under the additional constraint that each component of the transport map is strictly monotone along its own coordinate, this triangular map coincides with the Knothe--Rosenblatt (KR) rearrangement \citep{rosenblatt1952remarks,bonnotte2013knothe}. The KR rearrangement yields a triangular transport map constructed via a sequential matching of conditional distributions and is not, in general, optimal for the quadratic transport cost. Its triangular structure guarantees invertibility provided that each component $T^i$ is strictly increasing in $x_i$, but the resulting map depends explicitly on the ordering of the coordinates. We refer the reader to \cite{marzouk2016sampling} for further details on the construction and learning of triangular transport maps.

While both triangular maps and our approach (cf.\ \Cref{lem:splineConvexity}) involve integrals of univariate functions, their roles are fundamentally different. In triangular maps, the integral representation is used directly in each component of the transport map to enforce one-dimensional monotonicity along coordinate axes. By contrast, the class defined in \Cref{lem:splineConvexity} concerns the potential function of the transport map. We require the underlying univariate integrand functions $\varphi$ to be increasing and bounded in order to ensure convexity of the resulting potential. In our method, the integral form is not used to directly parameterize the transport map components, but rather the potential function, which is only needed when evaluating the maximum of local potential functions.

This distinction leads to several advantages of our transport class over triangular parameterizations. First, our representation permits efficient computation of both the potential and its derivatives, including gradients and Hessians. This is particularly important for multimodal distributions with well separated components, where taking the maximum over local potentials allows the map to adapt naturally to complex geometry. Second, our construction is fully symmetric and does not impose axis aligned monotonicity or a predefined ordering of the coordinates. Third, our transport map admits a natural variational interpretation as minimizing the transport cost of pushing forward the reference measure to the target distribution. Finally, our framework flexibly accommodates continuous, discrete, and mixed parameter spaces. When the target distribution contains both discrete and continuous components, the optimal transport map may fail to be invertible, in which case monotone triangular maps are no longer directly applicable, whereas our approach remains valid.

Another related approach is the input convex neural network (ICNN) proposed by \cite{amos2017input}, which proposes a convex neural network architecture with non-negative weights and convex activation functions. One can alternatively parameterize the convex potential function $u(x)$ using ICNNs. The transport map is then obtained as the gradient of this convex potential, i.e., $T(x) = \nabla u(x)$. For ICNNs, the convexity requirement of activation functions, which leaves basically choices of Softplus and ReLU, and non-negativity constraints on weights may limit the expressiveness of the potential function, potentially affecting the quality of the transport map. In contrast, our method employs a more flexible basis expansion approach to approximate convex functions, allowing for a richer representation of the potential function and potentially leading to better approximation of the target distribution. The simulation examples in the next subsection further illustrate this point.

\subsection{Center-outward Quantiles Comparison}\label{sec:additional_quantile}

In this subsection, we provide an additional example to illustrate the effectiveness of our method in obtaining center-outward quantiles for complex distributions. In addition to the Planar, we also include ICNN \citep{amos2017input}, the triangular map \citep{el2012bayesian,marzouk2016sampling} and Neural Spline Flows (NSF) \citep{durkan2019neural} as alternative transport map methods for comparison. TMC is excluded from this comparison since it relies on a standard multivariate uniform reference distribution, which is not spherical uniform \citep[as defined in][]{chernozhukov2017monge}.

We consider the banana-shaped distribution from \cite{hallin2021distribution}, defined as:
$\frac{3}{8}\mN (-8 m_h, \Sigma_1) + \frac{3}{8} \mN (8 m_h, \Sigma_2 )+\frac{1}{4}\mN(-5 m_\nu,\Sigma_3)$,
where $m_h=(1,0)^T$, $m_\nu= (0,1)^T$, and $\Sigma_1=(5,-4;-4,5)$, $\Sigma_2=(5,4;4,5)$, $\Sigma_3 = (4,0;0,1)$.
This distribution is particularly challenging due to its non-convex shape, and meaningful quantile contours must accurately preserve this structure. In this example, we use our proposed method with $L=1$, and ICNN with softplus actiation and a single layer of 32 hidden units. For the triangular map, we use integrated exponential basis functions of order 5, and for NSF, we use 5 coupling layers with 128 hidden units in each layer.

\begin{figure}[!ht]
    \centering
    \subfloat[Reference $\mathcal{N}(0, I_2)$]{%
        \includegraphics[width=0.32\textwidth]{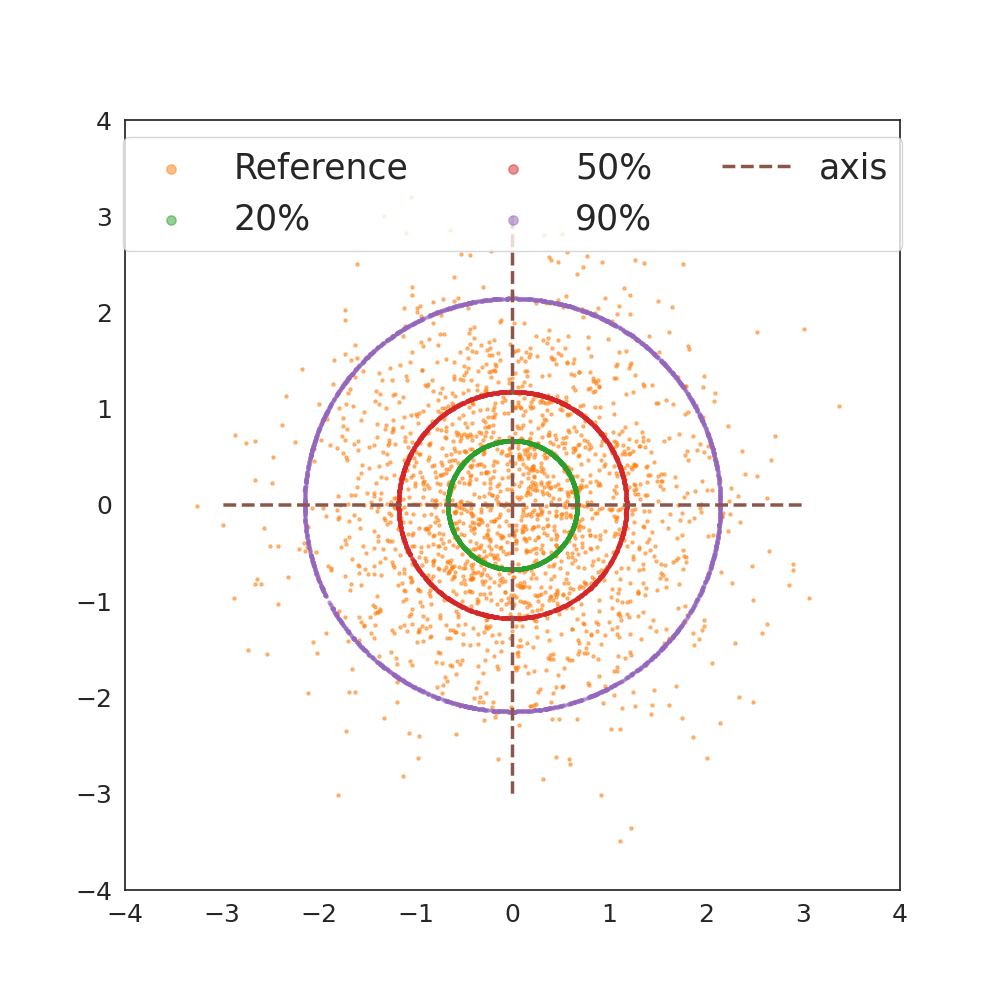}
    }
    \subfloat[OT]{%
        \includegraphics[width=0.32\textwidth]{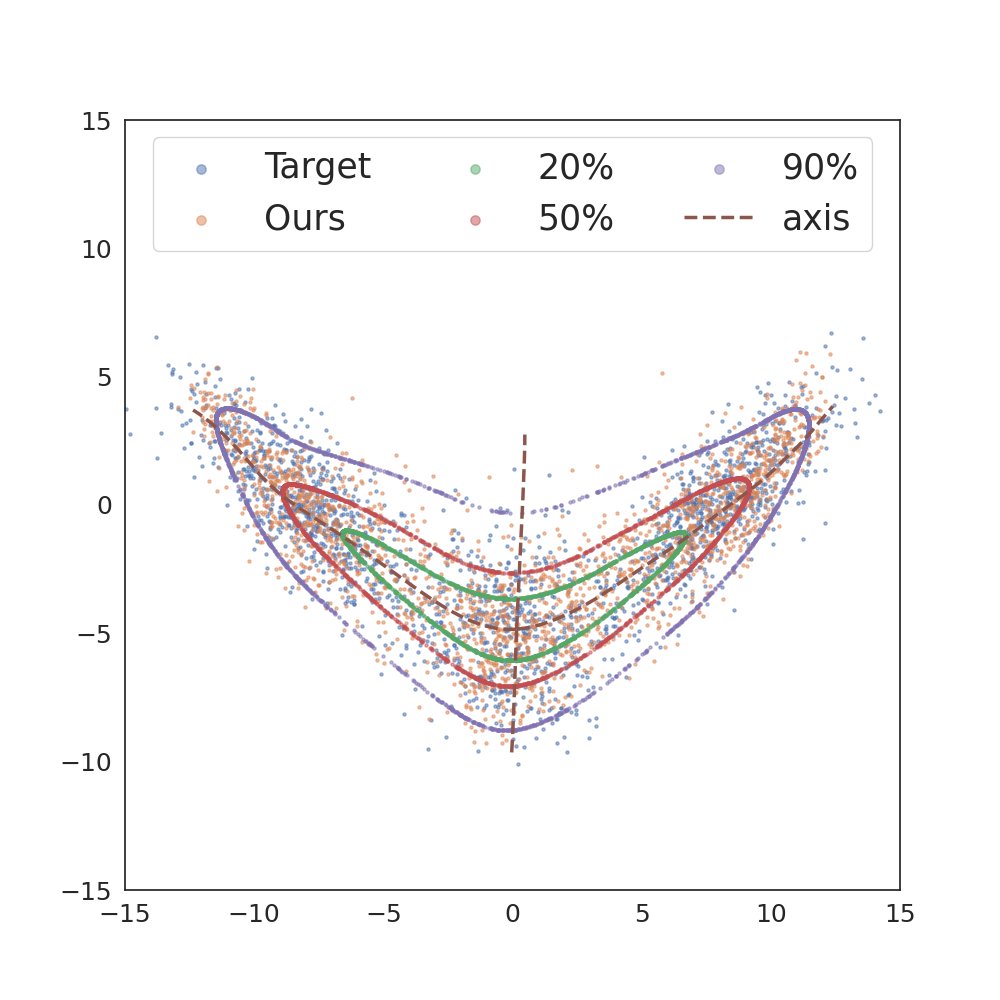}
    }
    \subfloat[Planar]{%
        \includegraphics[width=0.32\textwidth]{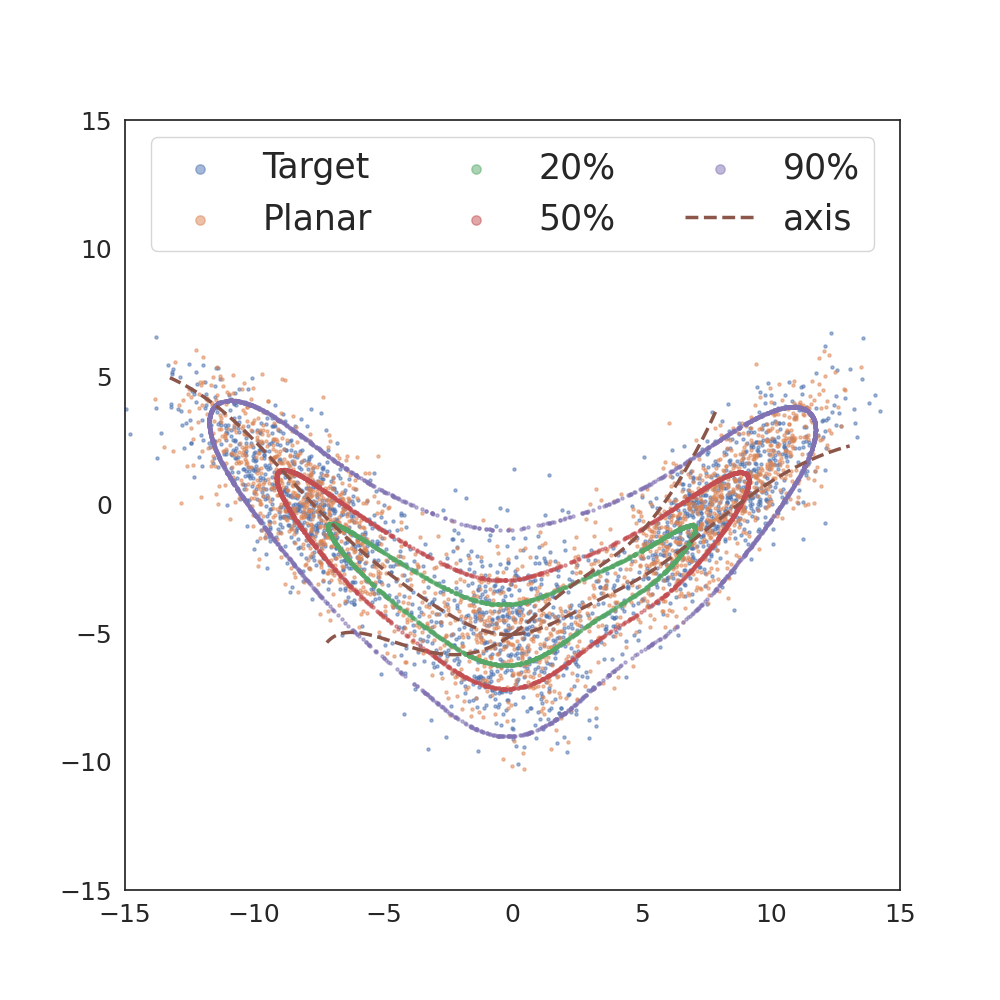}
    }
    \\
    \subfloat[ICNN]{
        \includegraphics[width=0.32\textwidth]{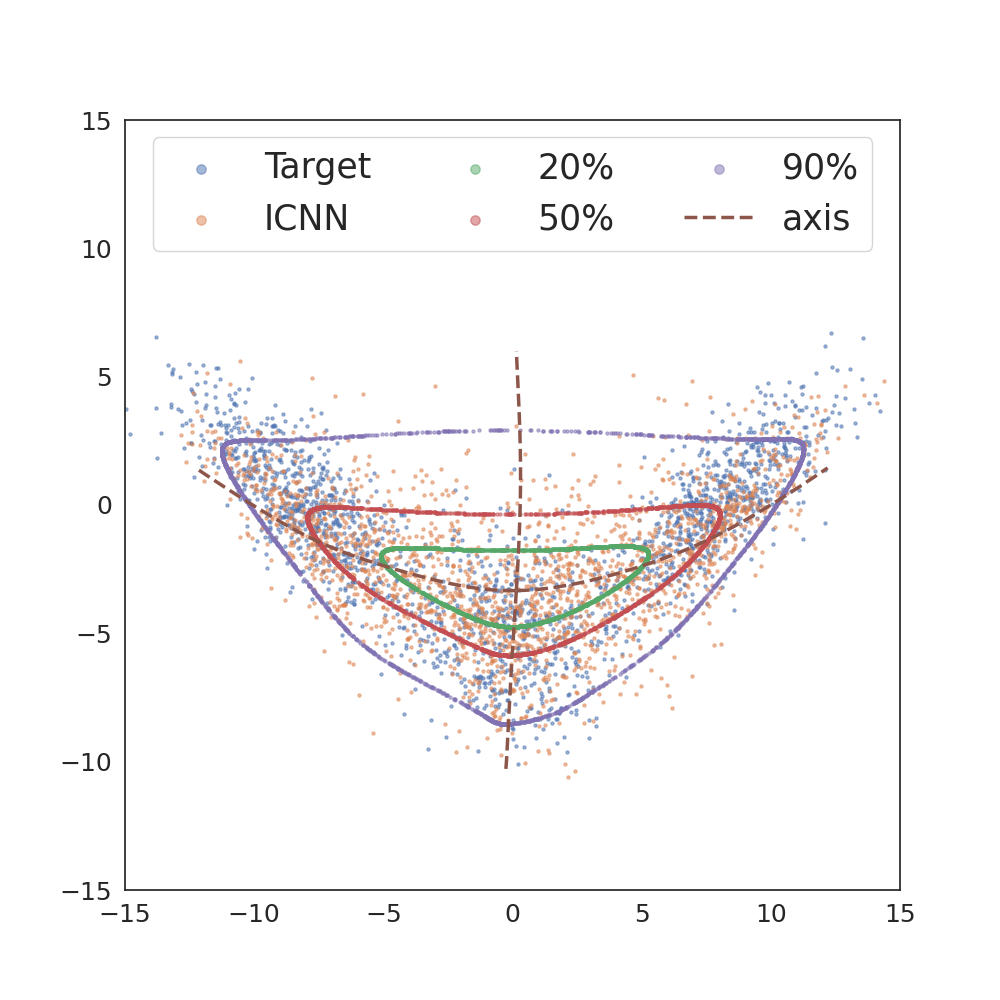}
    }
    \subfloat[Triangular Map]{
        \includegraphics[width=0.32\textwidth]{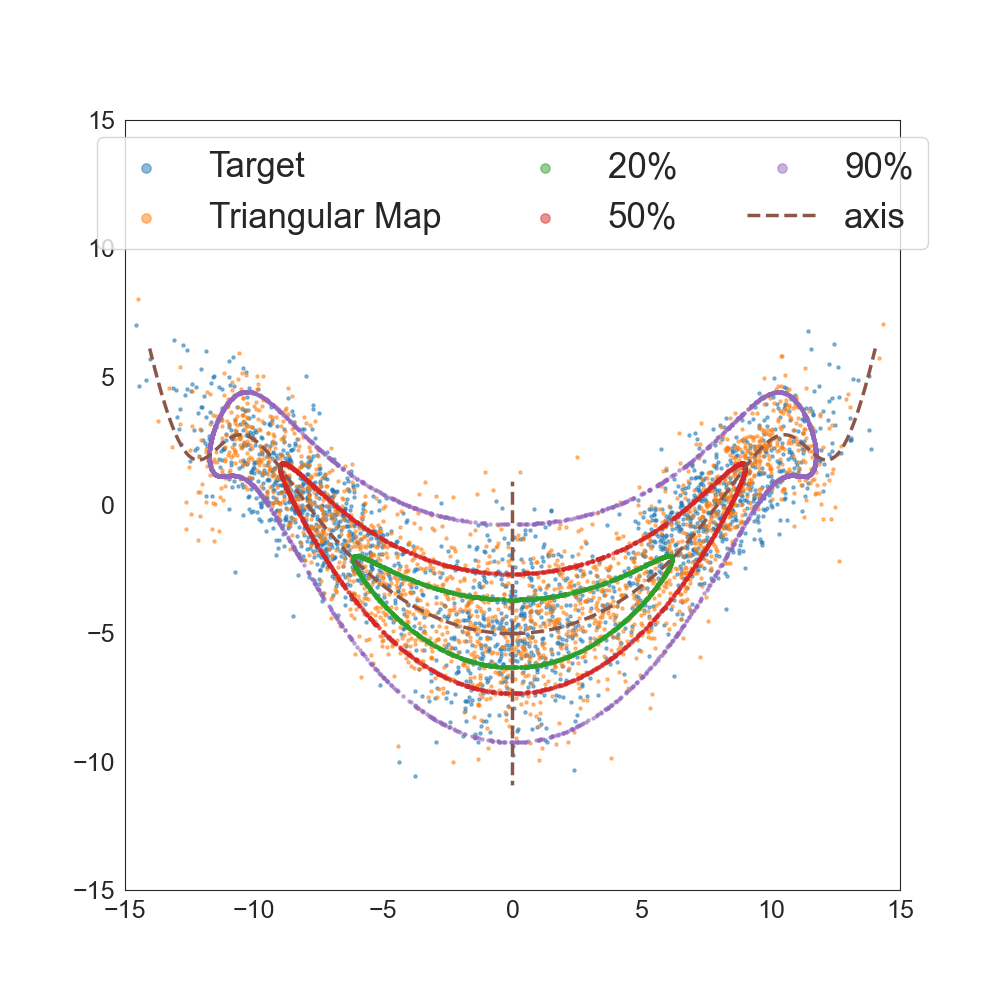}
    }
    \subfloat[NSF]{
        \includegraphics[width=0.32\textwidth]{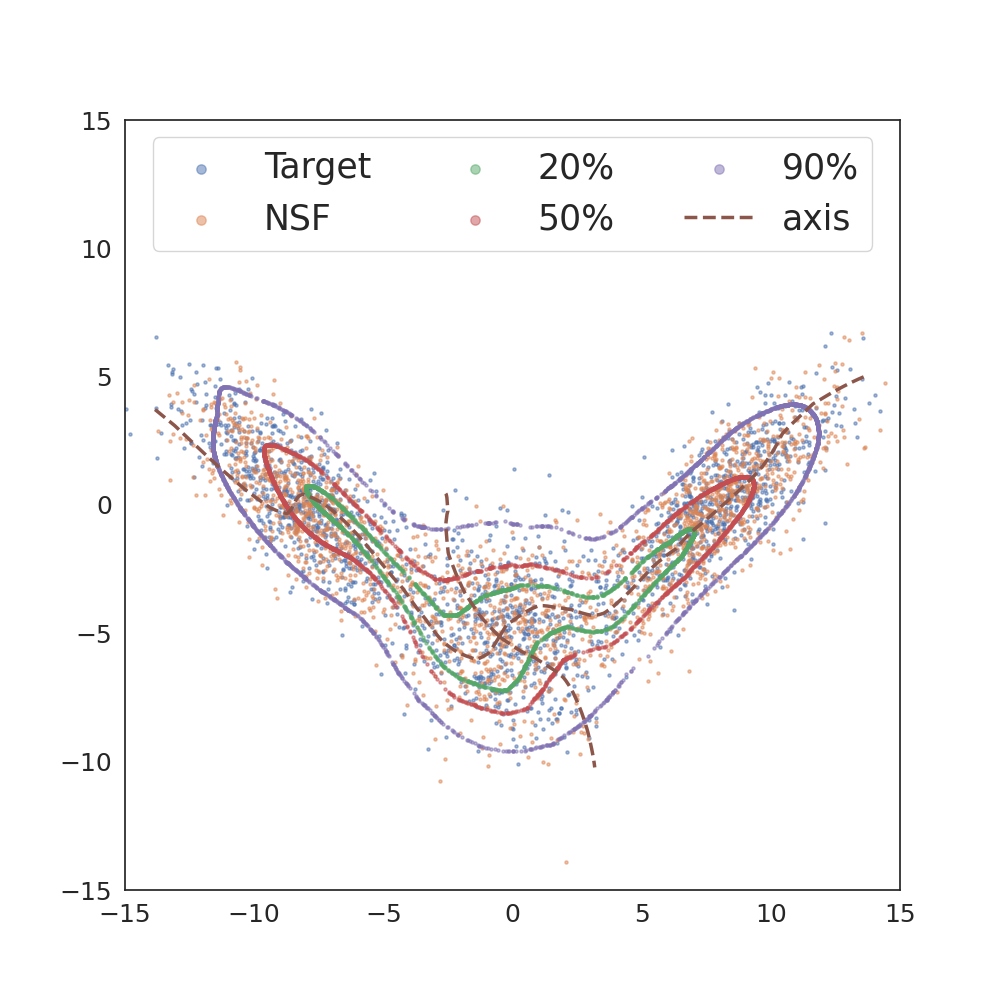}
    }
    \caption{Center-outward quantile contours for the Banana distribution. Here we include the plot from the reference distribution $\mathcal{N}(0, I_2)$ and the corresponding contours by passing the reference contours through the transport map estimated by our method (with $L=1$), Planar, ICNN with softplus activation, Triangular Map and Neural Spline Flows (NSF). We also include axes of the reference distribution to show directional information.}
    \label{fig:banana}   
\end{figure}

As shown in \Cref{fig:banana}, all methods produce samples that resemble the shape of the true distribution. However, our method, which employs a non-crossing map, preserves the non-convex shape of the quantile contours much more effectively. In addition, our method provides more accurate directional information: the probability mass between each pair of axes is expected to be equal, and only our approach successfully partitions the target distribution into four regions of equal probability. 
For ICNN and the Triangular Map, their sign curves are roughly the correct directions but the quantile contours cannot fully capture the geometry of the banana distribution. In contrast, although the Planar and NSF  capture the general shape of the quantile regions, the distorted sign curves reveal a loss of directional information, which may lead to incorrect inferences about conditional dependencies among parameters.

\begin{table}[!ht]
\centering
\begin{tabular}{c|c|c|c|c}
\toprule 
Ours & Planar & ICNN & Triangular Map & NSF\\
\midrule
 1.288$\times 10^{-6}$& 0.123 &1.502 & 0.233& 0.110\\
\bottomrule
\end{tabular}
\caption{W2 distance between samples from the approximated distribution and the true distribution in the banana example.}
\end{table}

For the mixture of two Gaussians distribution in \Cref{fig:two_balls}, we provide more details on implementation details here. We set up ICNN with ReLU activation and a single layer of 32 hidden units, since ReLU is more suitable for handling the separated components in this distribution. For the triangular map, we use integrated exponential basis functions of order 10. Additionally, we also consider the NSF with  6 coupling layers and 128 hidden units per layer and results are plotted in \Cref{fig:two_balls_additional}. NSF is performing significantly better than Planar in this example, but its sign curves are still partially distorted.

\begin{figure}[!ht]
    \centering
    \includegraphics[width=0.3\textwidth]{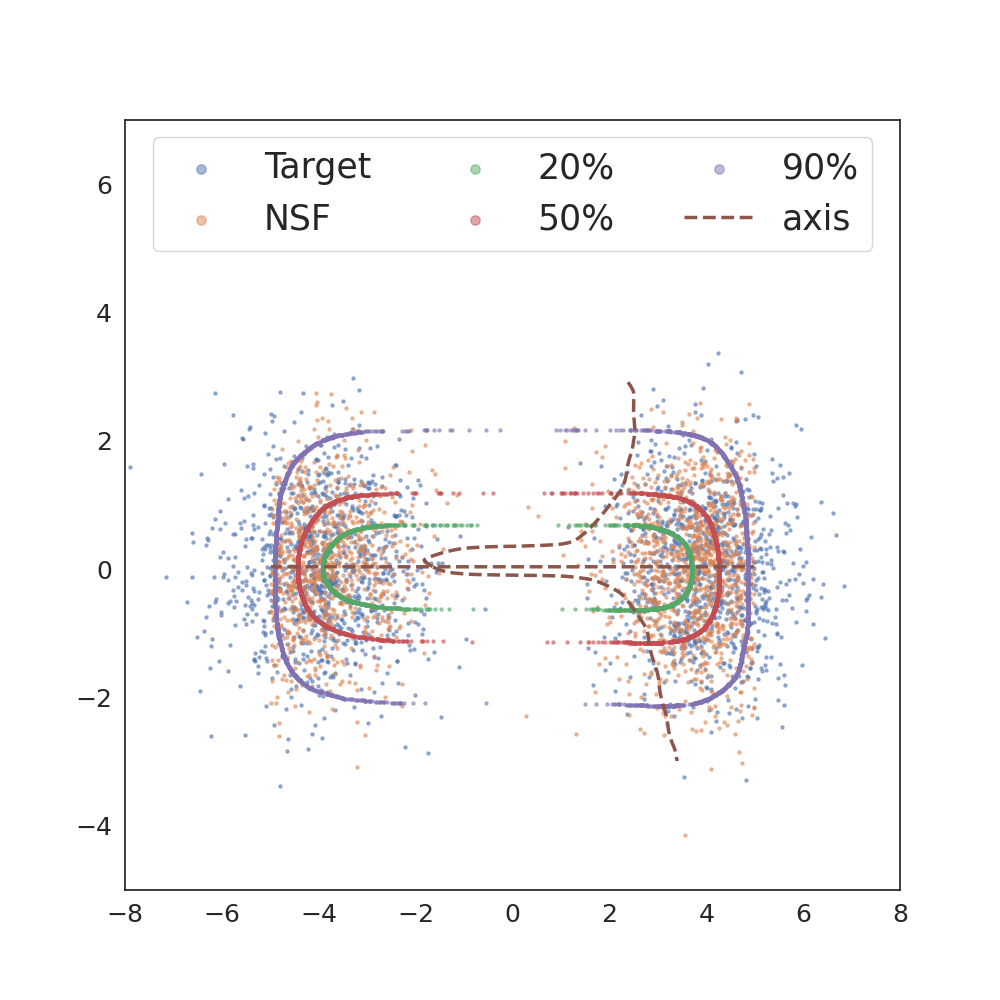}
    \caption{Center-outward quantile contours from NSF for the mixture of two Gaussians distribution.}\label{fig:two_balls_additional}
\end{figure}

\subsection{Connection to Variational Inference}\label{sec:VI_connection}

Our method shares similarities with variational inference (VI) approaches \citep{blei2017variational} in that both aim to approximate the target posterior by minimizing the Kullback--Leibler divergence between a parametric approximation and the posterior. We also employ a mean-field approximation in \Cref{sec:gmm} to simplify the optimization, which is common in VI.

There are, however, several key differences. First, although the pushforward family $\{T_\# \mu : T \in \mathcal{T}\}$ can be viewed as an implicit variational family parameterized by the transport map class $\mathcal{T}$, our method does not directly parameterize the density of the approximate distribution. Consequently, the approximation error is governed by how well the transport map class approximates the optimal transport map, rather than by the expressiveness of an explicit density family. Second, our construction leverages optimal transport theory to design structurally constrained map classes that admit a unique solution. In multimodal settings with well-separated components, this structure yields a principled mechanism for determining the contribution of local transport maps based on the geometry of the target distribution, whereas VI typically relies on mixture models whose weights are learned through optimization.

Finally, while we introduce a mean-field approximation for Bayesian latent variable models, this approximation is imposed on the \emph{potential function} rather than on the density itself. As a result, the induced joint distribution can still retain certain dependencies among variables, in contrast to the fully factorized distributions arising from classical mean-field variational families; see the discussion in the last paragraph of Section~\ref{sec:gmm}. Moreover, our approach yields a generative model that supports additional inferential tasks, such as center-outward quantiles and ranks, beyond posterior approximation alone.

\subsection{Identifiability and Training Stability Comparison}\label{sec:training_stability}

In addition to the inferential capabilities, we also compare  the identifiability and training stability of our method with alternative transport map approaches. 

For triangular maps, under additional structural constraints such as requiring each component of the transport map to be strictly monotone along its own coordinate, the resulting map coincides with the Knothe--Rosenblatt (KR) rearrangement \citep{rosenblatt1952remarks,bonnotte2013knothe}, which is uniquely defined for a fixed ordering of the coordinates. Under this setting, triangular maps are identifiable.

However, this notion of identifiability is intrinsically tied to the choice of coordinate system and variable ordering, since monotonicity is enforced only along coordinate axes. In contrast, the shape constraint in our approach is imposed through convexity of the transport potential, equivalently monotonicity of the gradient of a convex function, which is defined independently of any particular coordinate system. Consequently, the proposed map class is fully symmetric and does not rely on a predefined ordering of variables.

We provide a simple illustration on the banana distribution. We experiment with 3 different polynomial orders and two coordinate systems. The first coordinate system is the original one, and the second coordinate system is obtained by flipping the two axes in the distribution, i.e., $(x_1, x_2)$ becomes $(x_2, x_1)$. This is equivalent to changing the order of the coordinates in the triangular map. 

\begin{figure}[!ht]
\centering
\subfloat[order=2, orginal]{
    \includegraphics[width=0.3\textwidth]{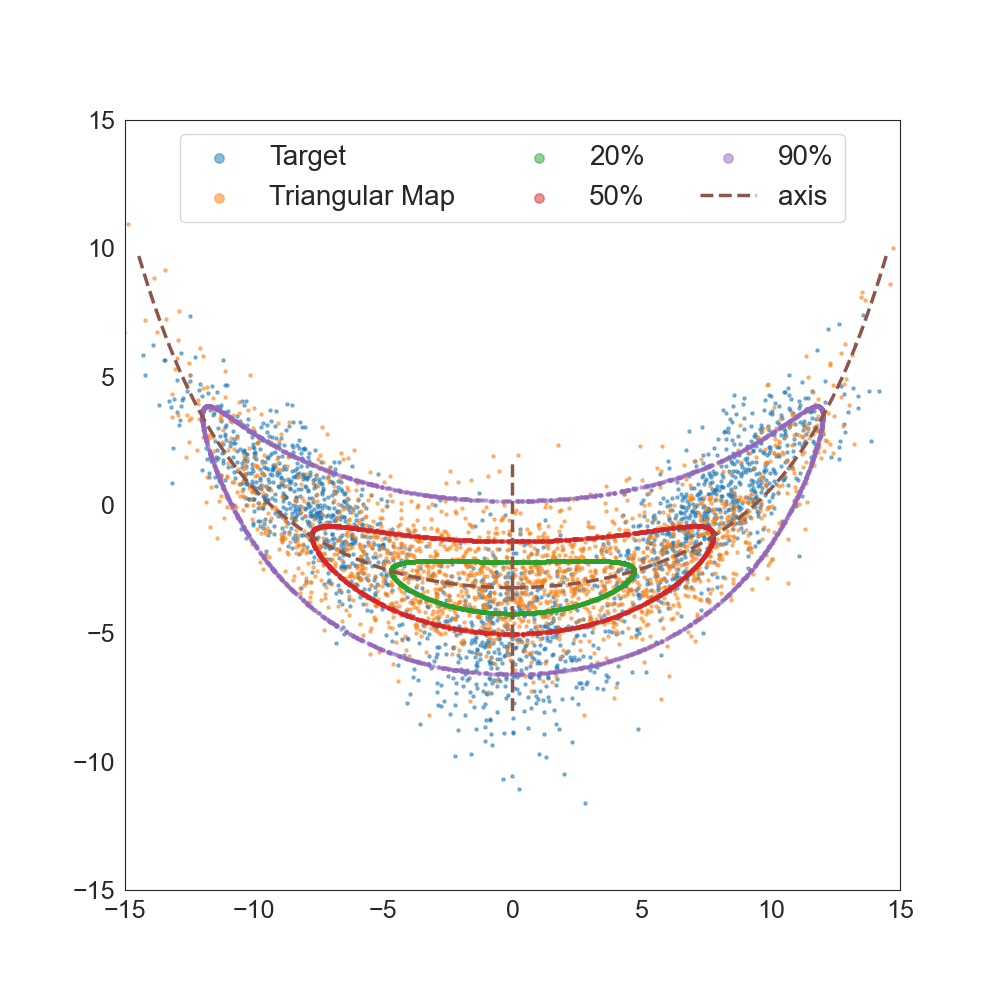}
}
\subfloat[order=5, orginal]{
    \includegraphics[width=0.3\textwidth]{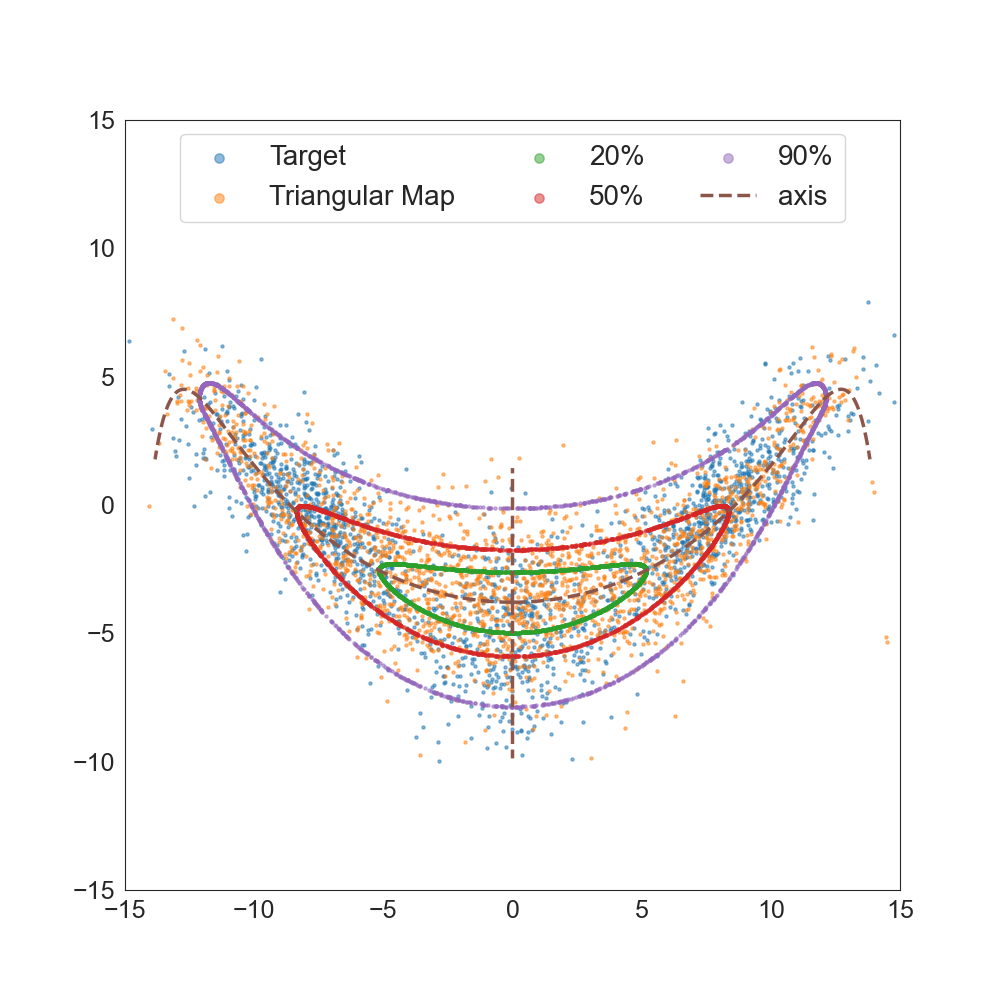}
}
\subfloat[order=10, orginal]{
    \includegraphics[width=0.3\textwidth]{fig/banana_trimap_order10.png}
}

\subfloat[order=2, flipped]{
    \includegraphics[width=0.3\textwidth]{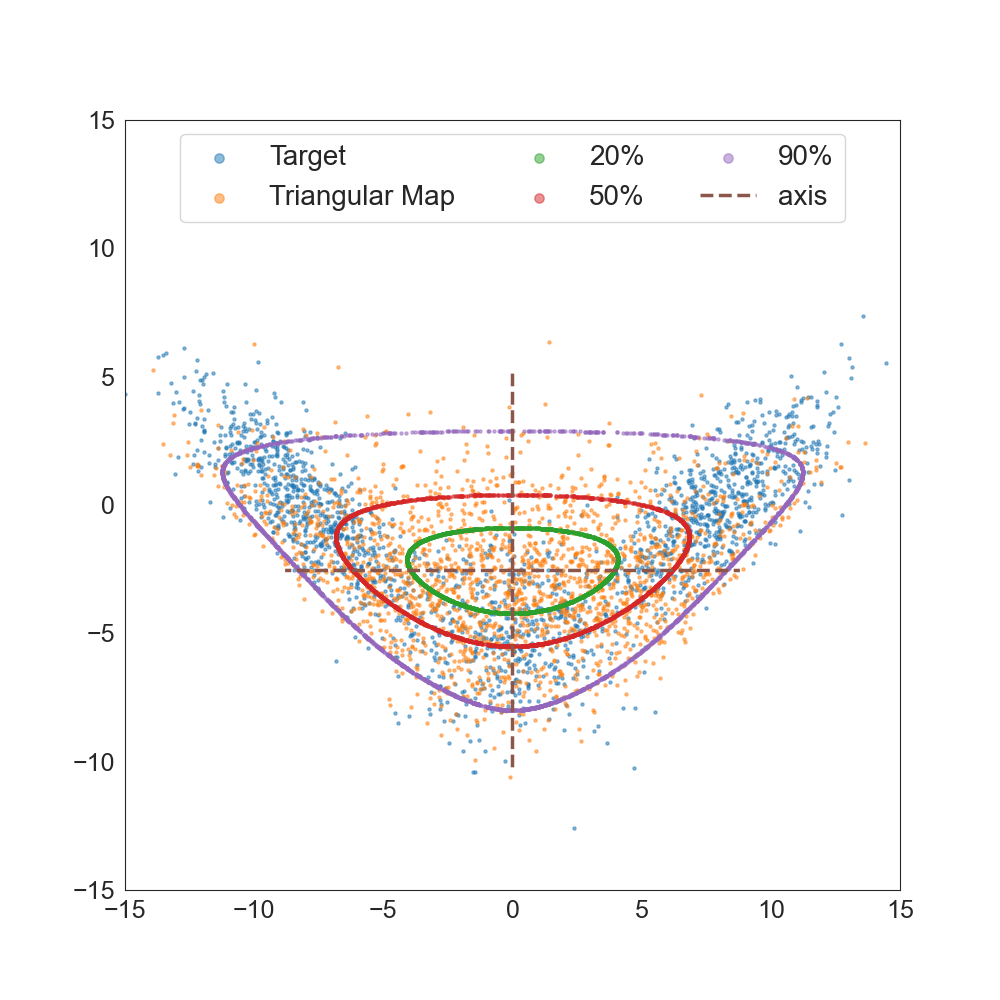}
}
\subfloat[order=5, flipped]{
    \includegraphics[width=0.3\textwidth]{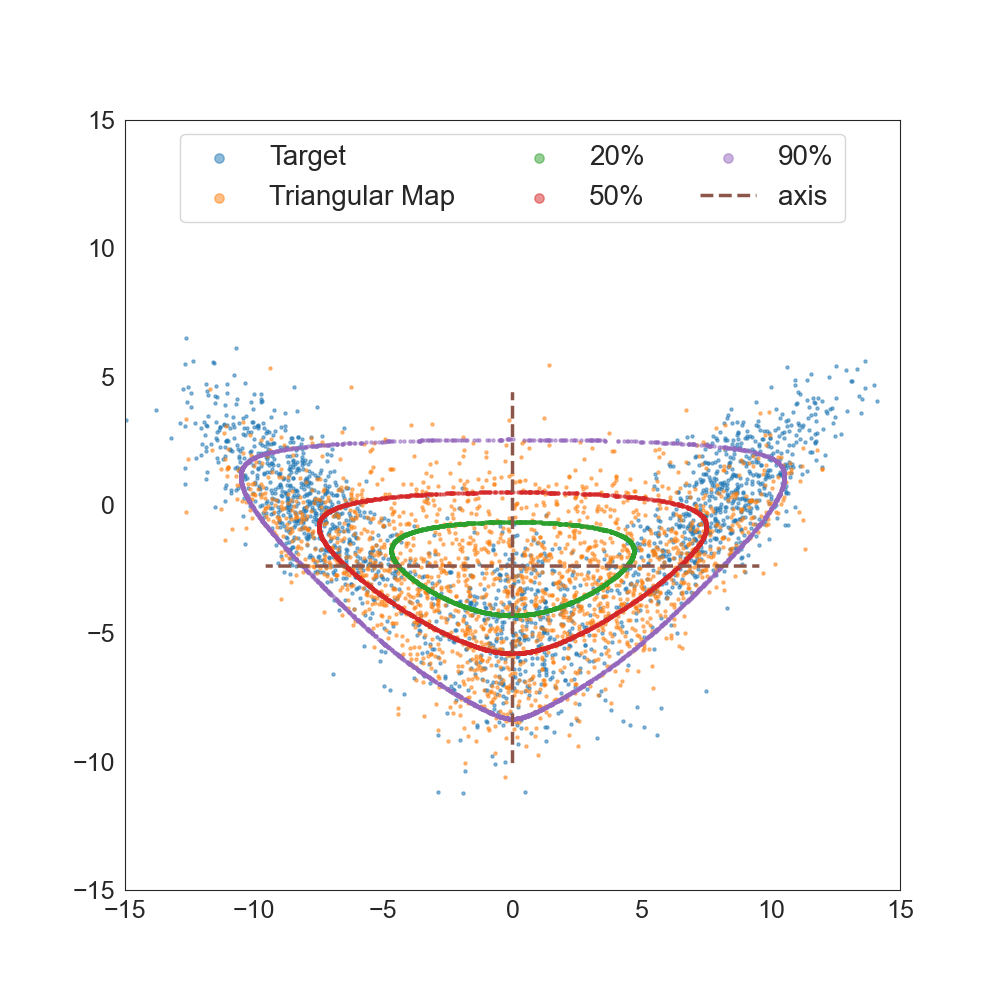}
}
\subfloat[order=10, flipped]{
    \includegraphics[width=0.3\textwidth]{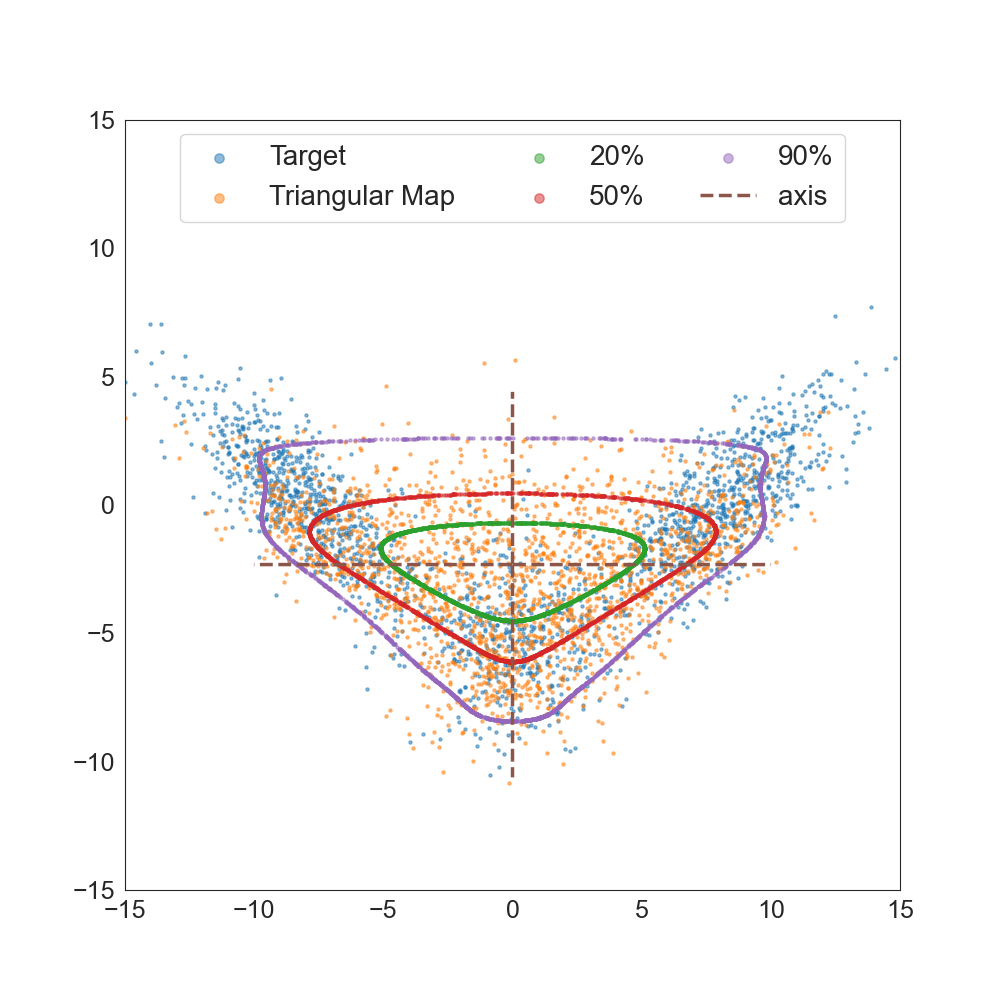}
}
\caption{Results from triangular map under original and flipped coordinate systems on the Banana distribution.}\label{fig:banana_trimap}
\end{figure}

We find that the resulting triangular maps are drastically different, as shown in \Cref{fig:banana_trimap}. First, the triangular map with different orders leads to different approximations of the target distribution, and the approximation quality is sensitive to the choice of polynomial order. Lower-order maps fail to fully adapt
to the geometry, while higher-order constructions tend to overfit in the tail regions and corners. Second, the triangular map under the flipped coordinate system fails to capture the geometry of the target distribution, even with a high polynomial order. This happens partially because the conditional distribution of axis 1 given axis 2 has disconnected support in certain regions (e.g. when axis 2 is above 0), while the conditional distribution of axis 2 given axis 1 is always continuous. As a result, the triangular map struggles with the disconnected support and fails to capture the geometry of the target distribution under the flipped coordinate system.

For alternative approaches that do not impose explicit shape constraints on the transport
map, such as normalizing flows, they are in general not identifiable, which may lead to increased
instability in training. For simplicity, we illustrate this point again with Planar flow on the mixture of two bivariate Gaussians example in \Cref{fig:planar_nf_diff_init}. In this example, we use the same Planar flow with same architectures and same optimization schedule. The only difference is the initialization of the parameters. While the two maps both push forward the reference distribution to a distribution close to the target, the maps themselves are drastically different, as indicated by the center-outward quantiles and sign curves shown.

\begin{figure}[!ht]
\centering
\subfloat[Initialization 1]{
    \includegraphics[width=0.3\textwidth]{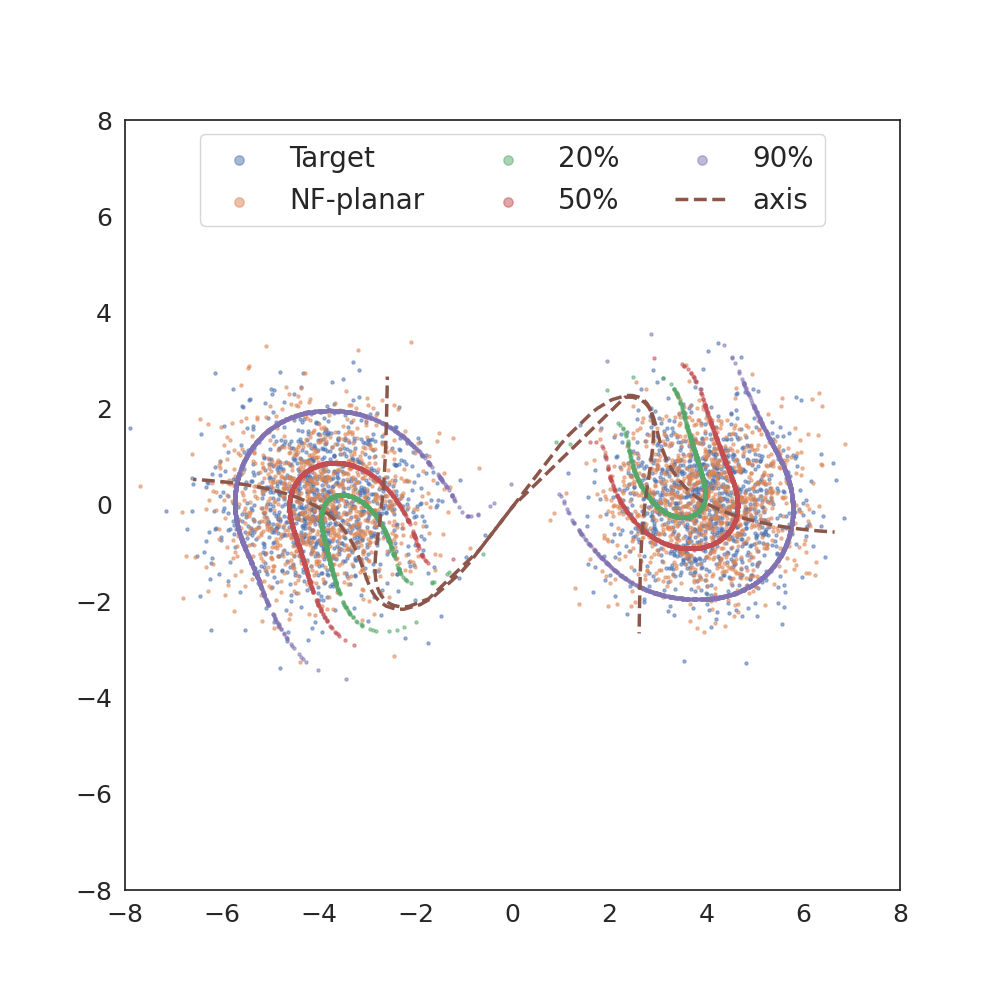}
}
\subfloat[Initialization 2]{
    \includegraphics[width=0.3\textwidth]{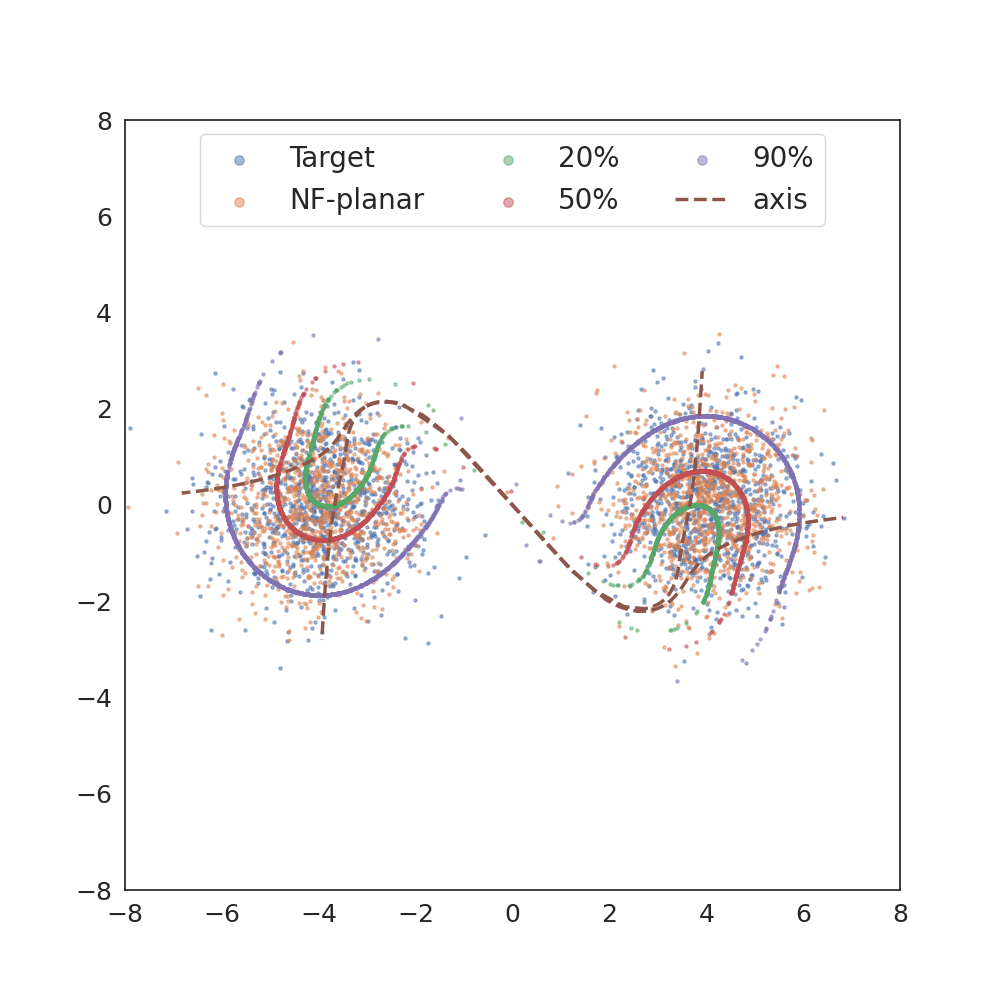}
}
\caption{Results from Planar flow with different initialization on the mixture of two Gaussian distribution}\label{fig:planar_nf_diff_init}
\end{figure}

Finally, we note that for methods that impose explicit shape constraints, such as our method, ICNN and triangular maps, the optimization landscape is more well-behaved. In contrast, for methods without explicit shape constraints, the optimization landscape can be more complex and may contain multiple local minima, which can lead to instability in training. For example, in the mixture of Gaussian example in \Cref{sec:mixture_normal_simu}, we apply the same sinkhorn initialization to all methods to ensure good initialization (details in \Cref{sec:simulation_config,sec:algorithm}). However, due to the complexity of the optimization landscape for NF, different runs with the same initialization can still lead to different results, and we observe that for MAF, while it is more expressive than Planar flow, can often perform worse than the initialization stage after the KL training. We provide such an example below in  \Cref{fig:maf_sinkhorn_kl}, where the KL training leads to a worse approximation of the target distribution compared to the initialization stage, which is likely due to the optimization getting stuck in a local minimum.

\begin{figure}[!ht]
\centering 
\subfloat[Sinkhorn initialization]{
    \includegraphics[width=0.3\textwidth]{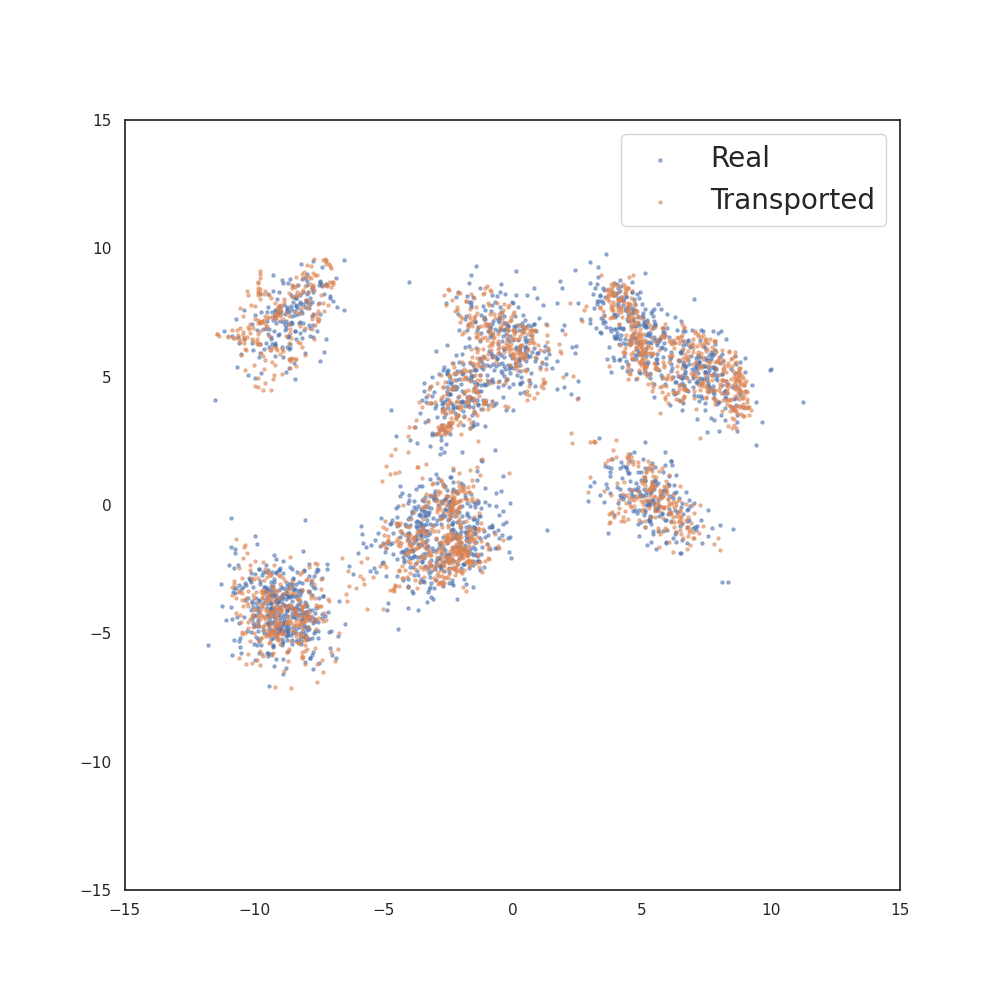}
}
\subfloat[KL after the initialization]{
    \includegraphics[width=0.3\textwidth]{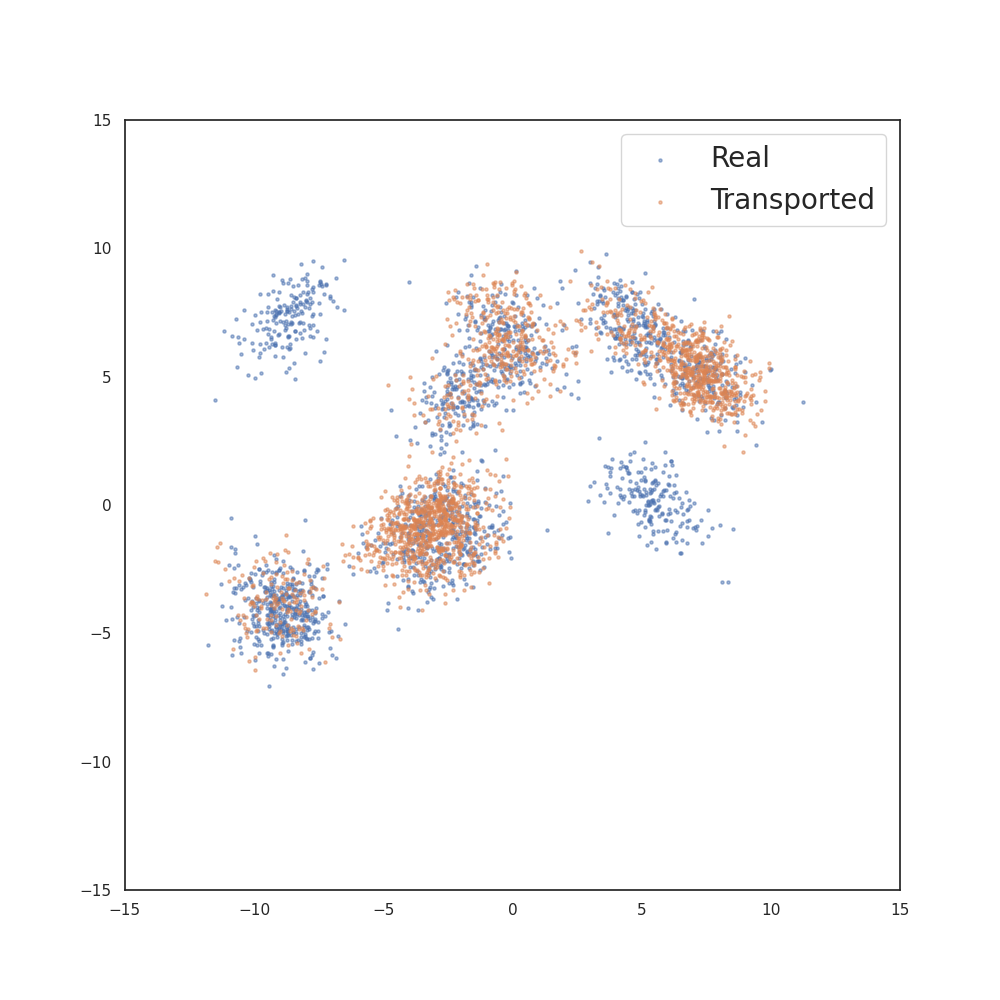}
}
\caption{Results from MAF in the same run after initialization and after KL training. We consider the example of mixture of 10 five-dimensional Gaussians here.}\label{fig:maf_sinkhorn_kl}
\end{figure}

}

\section{Computational Details}\label{sec:computation}

\subsection{Difficulties of the pushforward constraint and infimum attainability}\label{app:constraint}

{Enforcing the constraint $\pi_n = T_\# \mu$ is difficult both analytically and computationally. When the posterior $\pi_n$ is known only up to a normalizing constant, writing its unnormalized density as $\tilde{\pi}_n$, the pushforward condition can be written as
\[
\tilde{\pi}_n(T(x)) \lvert \det \nabla T(x) \rvert = Z\, \mu(x),
\]
where $Z$ is an unknown normalizing constant. Without knowledge of $Z$, this constraint cannot be evaluated or enforced directly.

More generally, even when $Z$ is known, the pushforward condition defines a nonlinear change-of-variables equation for the map $T$, involving the determinant of its Jacobian. Enforcing such a constraint amounts to solving a high-dimensional nonlinear equation that couples the values of $T$ and its derivatives everywhere on the support of $\mu$. This becomes computationally challenging in high dimensions or when $T$ has a complex structure, for example in multimodal settings or when the map is implicitly defined through maxima of multiple potential functions, as in our mixed-parameter construction. In these cases, direct enforcement of the pushforward constraint is computationally prohibitive.

The infimum in~\eqref{eqn:KL_Form} is in fact a minimum. The Kullback–Leibler divergence is nonnegative and achieves its minimum value of zero if and only if $T_\# \mu = \pi_n$. Moreover, our chosen function class $\mathcal{T}$ is sufficiently rich to contain the optimal transport map from $\mu$ to $\pi_n$, whose existence is guaranteed by Lemma~\ref{lem:regularity_OT}. As a result, the infimum is attained at this optimal transport map.
}

\subsection{Inverse Map and Ranking Posterior Draws}\label{sec:inverse_map}

As mentioned in \Cref{sec:center_outward}, a key computational advantage of this optimal transport framework for Bayesian inference is its ability to efficiently compute the inverse OT map $(T^*)^{-1}$. This leverages the fact that the  $(T^*)^{-1}$ is related to the conjugate function of its potential $u^*$ via the identity $\nabla u^{\dagger} = (\nabla u^*)^{-1}$. Therefore, for any posterior sample $Z$, its preimage $X = (T^*)^{-1}(Z)$ can be computed as 
\[
X= \argmax_{x\in \R^p}  \big\{\dprod{Z,x} - u^*(x)\big\}.
\]
We solve for $X = (\wh T)^{-1}(Z)$ using gradient descent: 
\begin{equation}\label{eq:gradient_descent}
X^{(t+1)}= X^{(t)} - \alpha_t \big(\nabla \widehat u(X^{(t)})- Z\big) = X^{(t)} - \alpha_t \big(\wh T(X^{(t)})- Z\big),\quad t=0,1,\ldots.
\end{equation}
Since the objective function is convex, as $\langle Z, x \rangle$ is linear in $x$ and $u^*(x)$ is convex, the gradient descent algorithm converges globally (and exponentially if $u^*$ is strongly convex). This allows us to efficiently compute the pointwise inverse transformation. We denote the solution as $X = \wh R(Z)$. Returning to the problem of assessing the plausibility of $\theta = \theta_0$, this is equivalently to checking whether $\|R(\theta_0)\|^2_2\leq q_{\chi^2_d}(0.95)$. We illustrate this application in \Cref{sec:logistic}.

 The inverse map is also useful for simultaneously comparing or ordering the plausibility of multiple parameter values via OT-derived ranks. For example, we may adopt the center-outward rank $R_\pm(\cdot)$ following \cite{hallin2021distribution}, which satisfies: 
\[
R_{\pm} (Z_1)\geq R_{\pm}(Z_2) \quad  \text{if} \quad \norm{\wh R(Z_1)}_2 \geq \norm{\wh R(Z_2)}_2.
\]
In other words, a point $Z_1$ is ranked higher than $Z_2$ if its preimage under the inverse map is closer to the origin in the reference distribution. This is equivalent to saying that $Z_1$ belongs to a lower quantile contour than $Z_2$.
\Cref{fig:banana_rev} visualizes these ranks for 10 random draws from the banana distribution. The reverse-mapped points in the reference Gaussian clearly preserve the intended ordering structure.

\begin{figure}[!ht]
    \centering
    \includegraphics[width=0.6\textwidth]{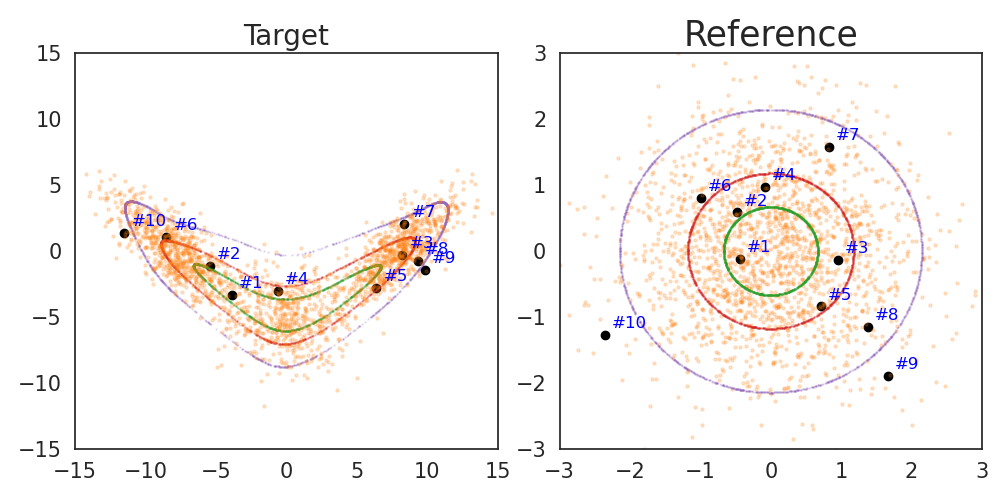}
    \caption{Center-outward ranks for 10 random draws \( Z_i \) from the banana distribution in \Cref{sec:additional_quantile}, and their corresponding locations \( \widehat{h}(Z_i) = (\widehat{T})^{-1}(Z_i) \) in the reference distribution.}
    \label{fig:banana_rev}
\end{figure}

\subsection{Computational Details of Gradients and Jacobians}\label{sec:computation_J}
While automatic differentiation packages help us avoid manually coding backpropagation, certain gradients are still computed by hand. For example, the baseline appears as an input to the exact or approximate objective function in Lemma~\ref{prop:equiv}. First, for a convex basis unit $u_i(x)= f(x; \alpha_i, \beta_i, w_i, v_i)$, we can compute the derivative as
\begin{align}\label{eq:gradU}
\nabla u_i(x)=  \varphi_i\big(\dprod{\alpha_i, x}+w_i\big)\alpha_i +\beta_i.
\end{align}
\noindent The exact derived transform and Jacobian matrix of map in \eqref{eq:grad_max_of_convex} follow from the lemma below.

\begin{lemma}\label{lem:max_jacobian} 
Let $\{u_k\}_{k=1}^L$ be $L$ twice differentiable and convex functions over $\R^p$, and $\iota(x) := \argmax_{k\in[L]} u_k(x)$ for each $x\in\R^p$. Then the gradient $T(x)=\nabla \max_{k\in[L]} u_k(x)$ is $\nabla u_{\iota(\theta)}(x)$, and the Jacobian matrix of $T$ at $x$ is $J_T(x) = \nabla^2 u_{\iota(x)}(x)$.
\end{lemma}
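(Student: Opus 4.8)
The plan is to localize the argument around a fixed point $x_0 \in \mathbb{R}^p$ where the maximizing index is unique, and then observe that $\max_{k\in[L]} u_k$ agrees with a single smooth function $u_{\iota(x_0)}$ in a neighborhood of $x_0$. Concretely, set $\iota(x_0) = j$, so that $u_j(x_0) > u_k(x_0)$ for all $k \neq j$. First I would use continuity of each $u_k$ to produce an open ball $B = B(x_0, r)$ on which $u_j(x) > u_k(x)$ for every $k \neq j$; this is where the assumption that the argmax is a genuine maximizer (and that $L$ is finite, so the deficit $\min_{k\neq j}(u_j(x_0)-u_k(x_0))$ is strictly positive) enters. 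On $B$ we therefore have the pointwise identity $\max_{k\in[L]} u_k(x) = u_j(x)$, and consequently $\iota \equiv j$ on $B$.

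Given this local identity, the conclusion is immediate: on $B$ the function $T = \nabla \max_k u_k$ coincides with $\nabla u_j$, since two functions that are equal on an open set have equal gradients there; evaluating at $x_0$ gives $T(x_0) = \nabla u_j(x_0) = \nabla u_{\iota(x_0)}(x_0)$. Differentiating the identity $T = \nabla u_j$ once more on $B$ yields $J_T(x) = \nabla(\nabla u_j)(x) = \nabla^2 u_j(x)$ on $B$, and in particular $J_T(x_0) = \nabla^2 u_{\iota(x_0)}(x_0)$. Here the hypothesis that each $u_k$ is twice differentiable is exactly what guarantees that $\nabla u_j$ is itself differentiable, so that $J_T$ is well-defined at $x_0$. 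Since $x_0$ was arbitrary among points with a unique maximizer, the stated formulas hold at every such point.

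The only real subtlety — and the step I would treat most carefully — is the set of points where the argmax is \emph{not} unique, i.e.\ where two or more $u_k$ tie for the maximum. At such points $\iota(x)$ is not single-valued and $\max_k u_k$ need not be differentiable (the classic example being $\max\{x,-x\}$ at the origin), so the statement as literally phrased should be read as holding wherever $\iota$ is well-defined, or on the complement of the tie set. I would note that this tie set is contained in $\bigcup_{j\neq k}\{u_j = u_k\}$, a finite union of level sets of the smooth functions $u_j - u_k$; generically these are $(p-1)$-dimensional and hence Lebesgue-null, which is all that is needed downstream (the transport map and its Jacobian are only required $\mu$-a.e.\ in \eqref{eq:That}, since $\mu \ll$ Lebesgue). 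If one wants a cleaner statement, one can simply restrict attention to $x \notin \bigcup_{j\neq k}\{u_j = u_k\}$, on which the preceding two paragraphs apply verbatim. No heavy machinery is required beyond continuity, the open-set-agreement principle for derivatives, and the chain/composition bookkeeping for the max of finitely many smooth convex functions.
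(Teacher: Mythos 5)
The paper does not actually include a proof of this lemma, so there is no official argument to compare against; your argument is the natural one and is essentially correct. Near a point $x_0$ where the argmax is a singleton $j$, finiteness of $L$ plus continuity give an open ball on which $\max_k u_k \equiv u_j$, and since two functions agreeing on an open set share all their derivatives there, $T(x_0) = \nabla u_j(x_0)$ and $J_T(x_0) = \nabla^2 u_j(x_0)$ follow at once; the twice-differentiability of $u_j$ is exactly what makes $J_T$ well-defined. You also correctly pinpoint the one subtlety the lemma glosses over: at ties, $\iota$ is multivalued and $\max_k u_k$ need not be differentiable, so the conclusion should be read as holding on the open set where the argmax is unique. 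That is the right reading in context, since $\widehat{T}$ in \eqref{eq:That} only evaluates $T$ and $J_T$ at draws from $\mu$, which is absolutely continuous with respect to Lebesgue measure, and the paper replaces the hard $\max$ by a Softmax in implementation precisely to smooth over this set. The only place I would tighten your wording is the claim that the tie sets $\{u_j = u_k\}$ are ``generically'' $(p-1)$-dimensional: convexity and twice differentiability alone do not force this (for instance $u_j \equiv u_k$ is not excluded by the hypotheses), so it is safer to simply assert the lemma on the complement of $\bigcup_{j \neq k}\{u_j = u_k\}$ and leave the measure-zero observation to the specific parametric family $\mathcal{U}_{L,M}$ in which the lemma is actually applied.
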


\noindent This lemma states that to compute the gradient of the maximum over a set of convex functions, one first identifies the convex function with the largest evaluation --- determined partly by the intercept $v$ in the convex unit, which also governs the separation of local regions associated with each local potential function $u_k$ --- and then takes the gradient of the selected convex function.
Ideally, the number of convex functions $L$ would match the number of distinct regions or modes of the target distribution $\pi_n$. When the number of modes is unknown, we can begin with a small value of $L$ and increase it as needed. For practical implementation, the $\max$ operation is replaced by a Softmax approximation to ensure differentiability. The sharpness parameter controls a trade-off between smoothness and approximation accuracy: smaller values yield smoother gradients but may slow optimization due to vanishing gradients, while larger values increase the approximation error relative to the hard maximum and can hinder accurate selection among competing potentials.

\subsection{Computation Details of \Cref{lem:mixture_KL}}\label{sec:mixture_KL_computation}
  We first generate $N_x$ samples $\{x_1, x_2,\ldots, x_{N_x}\}$ i.i.d. from $\mu$, and then compute the corresponding transformed variables $\{\theta_1,\theta_2,\ldots,\theta_{N_x}\}$ through transport map $T$, where $\theta_i=(\tau_i,\,\zeta_i)$ with $\zeta_i = \nabla \phi_{\tau_i}(X_i^{(2)})$ and 
\begin{align*}
   \tau_i= \underset{k\in [K]}{\arg \max} \left\{ \dprod{ x^{(1)}_i, b_k} + \phi_{k}(x^{(2)}_i)\right\}, ~~~ \text{for}~i=1,2,\ldots,{N_x}.
\end{align*}
To approximate generally intractable conditional probability $\mathbb{P}(\tau_i\mid x^{(2)}_i)$ in objective \eqref{eq:mixture_KL_opt} for each sample $(\tau_i,\,x_i^{(2)})$, we apply the Monte Carlo method again.  We generate $N_z$ samples $\{z^{(1)}_{i,1}, z^{(1)}_{i,2}, \ldots, z^{(1)}_{i,N_z}\}$ i.i.d.~from $\mu_1$ for each sample $x^{(2)}_i$, and computing the transported variables $\{w^{(1)}_{i,1}, w^{(1)}_{i,2}, \ldots, w^{(1)}_{i,N_z}\}$ as 
\begin{align*}
    w^{(1)}_{i,j} = \underset{k\in[K]}{\arg \max} \left\{ \langle z^{(1)}_{i, j}, b_{k}\rangle + \phi_{k}(x^{(2)}_i)\right\}, ~~~ \text{for}~j=1,2,\ldots,N_z.
\end{align*}
Then probability $\mathbb{P}(\tau_i\mid x^{(2)}_i)$ can be approximated by the empirical distribution,
\begin{align*}
    \widehat{\mathbb{P}}(\tau_i\mid x^{(2)}_i) := \frac{1}{N_z}\sum_{j=1}^{N_z} \bm{1}_{\{w^{(1)}_{i,j} = \tau_i\}}
    \approx \frac{1}{N_z} \sum_{j=1}^{N_z} W_{i,j}(\tau_i),
\end{align*}
where $W_{i,j}$ denotes the softmax approximation of the indicator function, i.e.,
\begin{align*}
    W_{i,j}(\tau_i) = \frac{\exp\left(\gamma(\dprod{z^{(1)}_{i, j}, b_{\tau_i}} + \phi_{\tau_i}(x^{(2)}_i)) \right)}{\sum_{k\in [K]}\exp\left(\gamma (\dprod{ z^{(1)}_{i, j}, b_{k}} + \phi_{k}(x^{(2)}_i))\right)},
\end{align*}
with $\gamma>0$ a tuning parameter. Here we use the softmax approximation to make the objective function differentiable, so that efficient gradient descent  algorithms can be applied to the optimization. In practice, we may reuse the samples $\{x_1^{(1)},x_2^{(1)},\ldots,x_{N_x}^{(1)}\}$ when generating the $\tau_i$'s as $\{z^{(1)}_{i,1}, z^{(1)}_{i,2},\ldots, z^{(1)}_{i,N_z}\}$ to facilitate computational efficiency and reduce space complexity.

\subsection{Computation in Gaussian mixture models}\label{app:comp_GMM}
Here we describe the optimization details of the transport map $T$ in \Cref{sec:gmm}.

To further improve the computational scalability of the convex functions $\phi_{k}^{(i)}$ with respect to the sample size $n$, we may set the relative weight parameter $\kappa$ in (MPm) to $n^{-1}$. With this choice, the second equation in~\eqref{Eqn:GMM_map} for generating $\bm \zeta$ reduces to an average, making the computation less sensitive to large $n$,
\begin{align*}
     \bm \zeta = T_{B}^{(2)}\big(\big\{x^{(1)}_i\big\}_{i=1}^n,\,x^{(2)}\big) =\frac{1}{n}\sum_{i=1}^n \nabla \phi_{c_i}^{(i)}(x^{(2)}),
\end{align*}
where $\nabla \phi_{c_i}^{(i)}(x^{(2)})$ can be interpreted as the message that observation $Z_i$ conveys about the cluster centers. Another justification for choosing $\kappa$ arises from the structure of the concatenated discrete variable $\tau$, which consists of $n$ latent variables. In this case, the corresponding discrete component of the cost in (MPm) becomes
\begin{align*}
    \kappa \sum_{i=1}^n \big\|x^{(1)}_i -  T_{B,i}^{(1)}\big(x^{(1)}_i,\,x^{(2)}\big) \big\|^2.
\end{align*}
To prevent this discrete cost --- which scales as $\mathcal{O}(n\kappa)$ --- from dominating the continuous part of the cost, which is $\mathcal{O}(1)$, it is appropriate to choose $\kappa = \mathcal{O}(n^{-1})$.
To further simplify computation, we may approximate each additive component  $\phi_{k_i}^{(i)}\big(x^{(2)}\big)$ in~\eqref{eqn:mf-approx} using an additive structure of the form $\sum_{k=1}^K \phi_{k_i,k}^{(i)}\big(x_k^{(2)}\big)$, where $x_k^{(2)}\in\mb R^d$ denotes the $k$-th component of $x^{(2)}$ associated with the $k$-th cluster mean vector $m_k$. This corresponds to a block mean-field approximation for the model parameter $\bm \zeta = (m_{1},m_{2},\ldots,m_{K})^T$ 
over the $K$ (blocks of) parameters $\{m_{k}\}_{k=1}^K$, preserving within-block dependencies while ignoring between-block dependencies.
Similarly, a fully factorized mean-field approximation for the model parameter $\bm \zeta$ can be obtained by further decomposing each convex function $\phi_{k_i,k}^{(i)}\big(x_k^{(2)}\big)$ with respect to the individual coordinates $x_k^{(2)}=\big(x_{k,1}^{(2)}, x_{k,2}^{(2)},\ldots,x_{k,d}^{(2)}\big)$ in $\mb R^d$.

More concretely, according to Lemma \ref{lem:mixture_KL}, we first generate $N$ samples $\{x_1, x_2, \ldots, x_N\}\in \mathbb{R}^{n(K-1)+Kd}$ i.i.d. from the reference distribution $\mu$. Then, the transport map $T$ will map the samples to $\{\xi_1, \xi_2, \ldots, \xi_n\}$. To approximate the expectation in Lemma \ref{lem:mixture_KL}, we need to calculate the log-likelihood of the posterior distribution for each sample $\xi_i:=(\xi^{(1)}_i, \xi^{(2)}_i)$, where $\xi^{(1)}_i:=(\xi^{(1)}_{i,1}, \ldots, \xi^{(1)}_{i, n})\in \mathbb{R}^{n(K-1)}$ corresponds to the latent variables and $\xi^{(2)}_i=(\xi^{(2)}_{i,1}, \ldots,\xi^{(2)}_{i,K})\in\mathbb{R}^{Kd}$ corresponds to the concatenated vector $\bm \zeta$ of cluster means. 

In the GMM model, the log-likelihood of the posterior of $(\tau, \bm \zeta)$ can be expressed as
\begin{align*}
    \log p(\tau, \bm \zeta\mid Z^{(n)}) & = \sum_{k=1}^K \log p( m_k) + \sum_{i=1}^N \log p(c_i) + \sum_{i=1}^N \log p( Z_i \mid  c_i, \bm \zeta) \\
    &\propto \sum_{k=1}^K \left[-\frac{1}{2\sigma^2}(m_k -  m_0)^T (m_k - m_0)\right] - n\log K \\
    &~~~~~~~~~+ \sum_{i=1}^N \left[-\frac{1}{2\lambda^2}( Z_i -  m_{c_i})^T (Z_i - m_{c_i})\right].
\end{align*}

Then, for each transported sample $\xi_i\in\mathbb{R}^{n(K-1)+Kd}$, the log-likelihood is defined as 
\begin{align*}
    \log p(\xi_i \mid Z^{(n)}) &\propto \sum_{k=1}^K \left[-\frac{1}{2\sigma^2}(\xi^{(2)}_{i,k} - m_0)^T (\xi^{(2)}_{i,k} - m_0)\right] - n\log K \\
    &~~~~~~~~~+ \sum_{j=1}^N \sum_{k=1}^K \bm{1}_{\{\xi^{(1)}_{i,j} = b_k\}}\left[-\frac{1}{2\lambda^2}( Z_j -  m_k)^T (Z_j -  m_k)\right],
\end{align*}
where the indicator function can be approximated by the Softmax functions in the computation. Specifically, for each sample $\xi_i$, we have 
\begin{align*}
    \bm{1}_{\{\xi^{(1)}_{i,j} = b_k\} } \approx W_{i,j,k} := \frac{\exp\left(\gamma(\dprod{ x^{(1)}_{i, j}, b_k} + \phi_{k}(x^{(2)}_i)) \right)}{\sum_{\ell\in [K]}\exp\left(\gamma (\dprod{ X^{(1)}_{i, j}, b_\ell} + \phi_{\ell}(x^{(2)}_i))\right)},
\end{align*}
where $\gamma$ is a positive tuning parameter.

The calculation of the middle term in the objective \eqref{eq:mixture_KL_opt} is straightforward, and only requires the knowledge of the target posterior $\pi_n$ up to a normalizing constant. For the term $\det\big(J_{\wt{T}}(\tau, x^{(2)})\big)$ in \eqref{eq:mixture_KL_opt}, we note that $\wt{T}$ does not change the first component $\tau\in \mathcal{Y}_1$ of the input, so that
\begin{align*}
    J_{\wt{T}}(\tau, x^{(2)}) = 
\left(\begin{array}{cc}
I & 0 \\
0 & \nabla^2 \phi_{\tau}(x^{(2)})
\end{array}\right).
\end{align*}
Therefore, the determinant of the Jacobian matrix at $(\tau_i,\,x_i^{(2)})$ can be simplified into 
\begin{align*}
    \det(J_{\wt{T}}(\tau_i, \,x_i^{(2)})) = \det(\nabla^2 \phi_{\tau_i}(x_i^{(2)})).
\end{align*}

\subsection{Computation Details for Simulation Studies in \Cref{sec:simulation}}\label{sec:simulation_config}

\paragraph{\bf Configurations for \Cref{sec:mixture_normal_simu}} We follow the optimization procedure described in \Cref{sec:algorithm}. In particular, we first generate $512$ samples from the target distribution and initialize our transport map by minimizing the Sinkhorn distance between the transported draws through our transport map and those from the target distribution. After that, we further optimize our transport map via  \eqref{eqn:new_obj}. We choose flow length $=64$ for Planar in the experiment. For MAF, we choose 6 layers of 128 units per layer.  We apply the same initialization to both  Planar and MAF for fair comparison. Lastly, we choose $500$ location-scale component maps for TMC.

\paragraph{\bf Configurations for \Cref{sec:logistic}} For the nonlinear function $\varphi$, we use $K=32$ convex units with Softsign functions. Since the target distribution is simple, we skip the initialization step and initialize all parameters from standard Gaussian distributions.   
We set flow length$=32$ for Planar. For MAF, we use 4 layers with 32 hidden units per layer. For NSF, we use 5 blocks of 128 hidden units per block. we use TMC with $300$ location-scale component maps. 

\paragraph{\bf Configurations for \Cref{sec:gmm_simu}}To construct our transport map, we first choose unit vectors in $\mb R^3$ for each value of the latent variables, since the components are not ordered. For the convex functions in formula \eqref{eq:dis_trans_map}, we adopt the ``mean-field'' approximation  in \eqref{eqn:mf-approx}. Since the continuous parameters follow unimodal distribution, we choose $M=1$ and $K_i=32$ for all convex functions. In addition, we choose $\kappa=\frac{1}{n}$ for transport map formula~\eqref{eq:dis_trans_map} as suggested in \Cref{sec:gmm}.  For Planar, we use flow length$=64$. For MAF, we use 4 layers with 32 hidden units per layer. For NSF, we use 5 blocks of 128 hidden units per block.

{\paragraph{Computing time comparison} Here we provide the computing time comparison for different methods in the logistic regression example ($\rho$=5). All methods run on a NVIDIA Tesla V100 GPU with 32GB memory. We report the distribution of computing time across 100 experiments in \Cref{fig:time_comparison}. As we can see, our method has comparable computing time with other baselines.

\begin{figure}[!ht]
    \centering
    \includegraphics[width=0.6\textwidth]{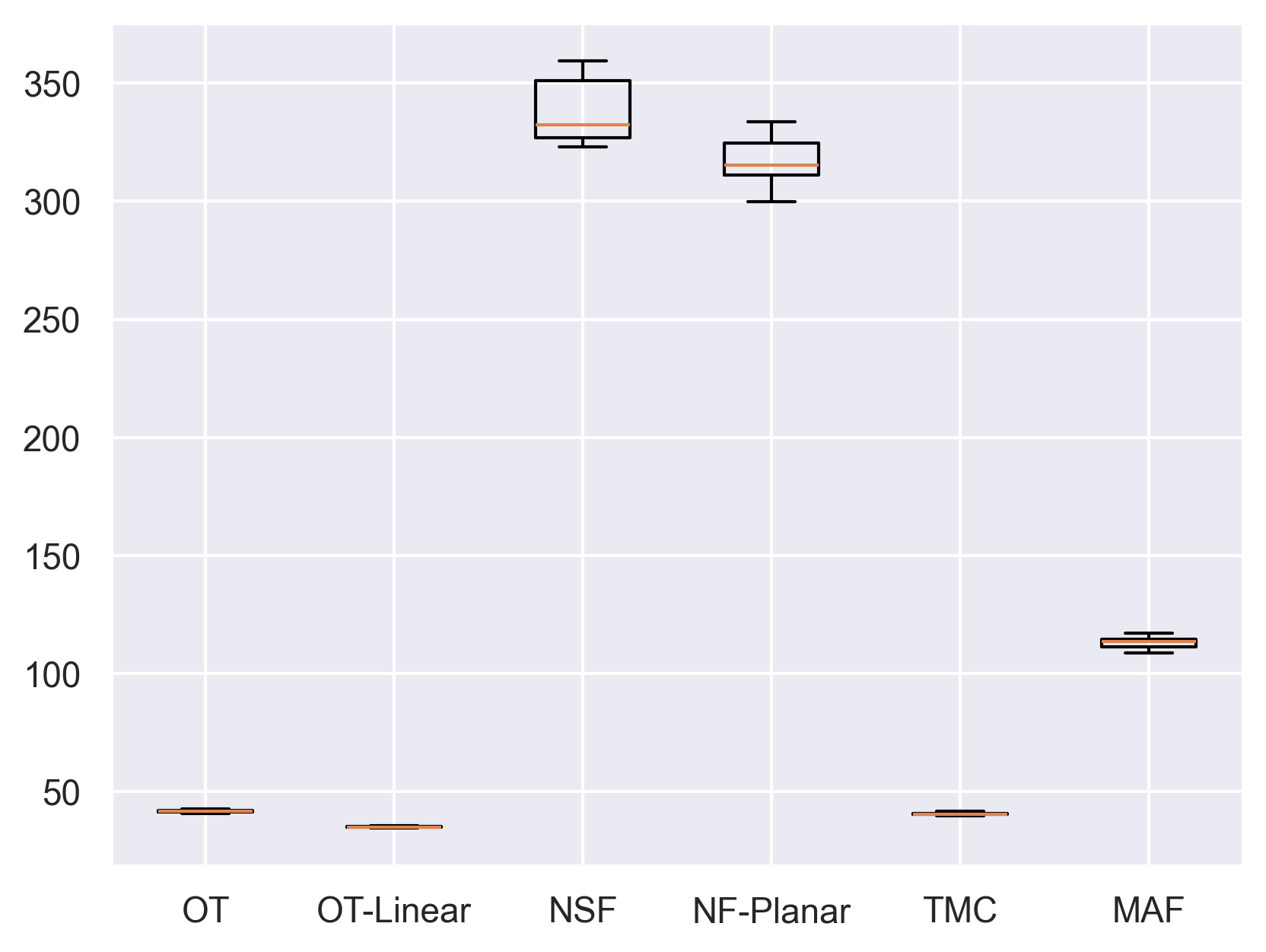}
    \caption{Computing time comparison across 100 experiments in the logistic regression example ($\rho$=5).}
    \label{fig:time_comparison}
\end{figure}}

\section{Algorithm Description}
\label{sec:algorithm}
In this section, we describe the details of the optimization procedure for the transport map. Since the optimization problem \eqref{eqn:new_obj} can be highly non-convex, e.g., when the density of the target distribution is not log-concave, a good initialization of the parameters in the transport map is critical to achieve good optimization performance. Therefore, our optimization of the transport map can be decomposed into two steps: (1) initializing the parameters by minimizing the Sinkhorn divergence; (2) further optimizing the parameters by minimizing the KL-divergence. We employ the stochastic gradient-based optimization methods to optimize the parameters in the transport map.

\paragraph{\bf Initialization.}
We first generate $N$ samples $\{y_1, y_2, \ldots, y_N\}$ i.i.d. from the target distribution. If directly sampling from the target distribution is not possible, we can apply approximation methods such as MCMC or variational inference to obtain approximate draws of the target distribution. Since our goal is only to find good initialization of parameters, it is not required to have the approximation draws with high accuracy. After obtaining $N$ draws from the target distribution, we will minimize the distance between the transported points $\{T(x_1), T(x_2), \ldots, T(x_N)\}$ from the reference distribution through map $T$ and $\{y_1, y_2, \ldots, y_N\}$ to optimize the parameters in the transport map $T$
\[
\widehat T_\text{init} =\arg\min_{T\in \mfT} d(\{T(x_i)\}_{i=1}^n, \{y_i\}_{i=1}^n )
\]
 with  the distance function $d(\cdot, \cdot)$ being  Sinkhorn distance \citep{sinkhorn} or Wasserstein GAN distance \citep{arjovsky2017wasserstein}. In our implementation, we use the Python Optimal Transport \citep{flamary2021pot} framework to compute the Sinkhorn distance and ADAM optimizer \citep{kingma2014adam} for gradient descent.

\paragraph{\bf Optimization.} Since the optimization programs for continuous and mixed variables cases are similar, except for the approximated conditional probability in \eqref{eq:mixture_KL_opt} (see \Cref{sec:mixture_KL_computation}), we focus on the continuous variables case here for conciseness.

 In order to fit into the more general case where the number of the discontinuous regions in the target distribution is not known or the target distribution is multimodal, we construct the approximate family of transport map through the maximum of convex functions as in \eqref{eq:max_of_convex}.
\begin{align}
    \mf T := \Big\{T=\nabla u \Big|& u (x)= \max_{k\in[L]} u_k(x), \, u_k(x)=\sum_{m=1}^Mf\big(x;\,\alpha^{(k)}_{m}, \beta^{(k)}_{m}, w^{(k)}_{m}, v^{(k)}_{m}\big), \label{eq:grad_max_of_convex}\\ 
    &\alpha^{(k)}_{m}\in\R^p, \beta^{(k)}_{m}\in\R^p, w^{(k)}_{m}\in\R, v^{(k)}_{m}\in\R, \ \mbox{for all } k\in[L] \mbox{ and } m\in[M] \Big\}.\nonumber
\end{align}

Starting from the initialized parameter values obtained in the previous step, we apply stochastic gradient-based methods to solve the optimization problem in \eqref{eqn:new_obj}. In each iteration, the estimated objective function and its gradient with respect to the parameters $\bm\gamma := (\alpha_m^{(k)}, \beta_m^{(k)}, w_m^{(k)}, v_m^{(k)})$ are obtained through Monte Carlo method. We first generate $N$ samples $\{x_1, x_2, \ldots, x_N\}$ i.i.d. from the reference distribution $\mu$. The estimated objective function and its gradient are defined as 
\begin{align*}
    \widehat{\mathrm{d}}_{\mathrm{KL}} &:= \frac{1}{N} \sum_{i=1}^N \big[\log (\pi_n \circ T(x_i))  + \log |\det(J_T(x_i))|\big]\\ 
    \widehat\Gamma_t(\bm \gamma) &:= \frac{1}{N} \sum_{i=1}^N \big[\nabla_{\bm \gamma} \log (\pi_n \circ T(x_i))  + \nabla_{\bm\gamma} \log |\det(J_T(x_i))|\big],
\end{align*}
and the parameters are updated through
\begin{align*}
    \bm \gamma_{t+1} \leftarrow \bm \gamma_t + \eta_t \widehat\Gamma_t(\bm \gamma_t),
\end{align*}
where $\eta_t$ is the learning rate and can be updated based on ADAM optimizer~\citep{kingma2014adam}.

 We use the PyTorch auto-differentiation to obtain estimated gradient $\widehat \Gamma_t(\bm \gamma)$, which requires the objective function to be differential with respect to the parameters.  To make $\widehat{\mathrm{d}}_{\mathrm{KL}}$ differentiable, we approximate the max function in \eqref{eq:grad_max_of_convex} with  Softmax function, i.e., 
\begin{align*}
    T(x) & \approx \sum_{k=1}^L \omega_k(x) \nabla u_k(x) \\
    J_T(x) & \approx \sum_{k=1}^L \omega_k(x) \nabla^2 u_k(x)
\end{align*}
with 
\begin{align*}
    \omega_k(x) = \frac{\exp\left(\gamma u_k(x) \right)}{\sum_{l\in [L]}\exp\left(\gamma u_l(x)\right)},
\end{align*}
where $\gamma > 0$ is the concentration parameter. When $\gamma$ goes to infinity, only one $\omega_k$ are close to one while all other $\omega_k$'s will be close to zero. As a result, the Softmax approximation will be close to the maximum function when $\gamma \to \infty$. 

For the stopping criteria, we will stop the optimization when the empirical objective $\widehat{\mathrm{d}}_{\mathrm{KL}}$ does not improve within a number of iterations. Alternatively, we can consider a quantitative method used in \cite{el2012bayesian}, where they use the empirical variance of the objective function \eqref{eqn:new_obj} over reference samples $\{x_1, x_2,\ldots, x_N\}$ to monitor the convergence. When the pushforward distribution $T_{\#}\mu$ is close to $\pi_n$, the empirical variance is close to zero. Therefore, we can stop the optimization when the empirical variance falls below some positive threshold.

\section{Proofs}\label{app:A}
In this section, we collect proofs to the main theoretical results presented in the main paper.

\subsection{Lemma from \cite{panov2015finite}}\label{sec:potential_conjugate}
We summarize the result from \cite{panov2015finite} below which is a key ingredient of our proof. 

\begin{lemma}
\label{lem:grad_inv}
Suppose function $f\in\mb R^p$ is closed, strictly convex and differentiable. Let $f^\dagger\in \mb R^p$ denote the conjugate function of $f$. Then,
\begin{align*}
    (\nabla f^{\dagger})^{-1}(x) = \nabla f(x).
\end{align*}
\end{lemma}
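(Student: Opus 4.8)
The statement to prove is Lemma~\ref{lem:grad_inv}: for a closed, strictly convex, differentiable function $f$ on $\mathbb{R}^p$ with conjugate $f^\dagger$, one has $(\nabla f^\dagger)^{-1} = \nabla f$. This is a classical fact from convex analysis (the Legendre--Fenchel duality of gradients), so my plan is to reconstruct the standard argument rather than invoke a black box.

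\textbf{Plan of proof.} The key identity I would use is the Fenchel--Young equality: for any $x, y \in \mathbb{R}^p$,
\begin{equation*}
f(x) + f^\dagger(y) \ge \langle x, y\rangle,
\end{equation*}
with equality if and only if $y \in \partial f(x)$ (equivalently $x \in \partial f^\dagger(y)$). Since $f$ is assumed differentiable, $\partial f(x) = \{\nabla f(x)\}$ is a singleton, so the equality case reduces to $y = \nabla f(x)$. First I would establish that $\nabla f : \mathbb{R}^p \to \mathbb{R}^p$ is injective: if $\nabla f(x_1) = \nabla f(x_2) = y$, then strict convexity of $f$ forces $x_1 = x_2$ (strict convexity rules out two distinct points sharing a subgradient). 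This shows $\nabla f$ has a well-defined inverse on its range. Next, for $y$ in the range of $\nabla f$, say $y = \nabla f(x)$, the Fenchel--Young equality gives $x \in \partial f^\dagger(y)$; since $f$ is closed and strictly convex, $f^\dagger$ is differentiable on the interior of its domain (strict convexity of $f$ implies smoothness of $f^\dagger$, a standard duality), so $\partial f^\dagger(y) = \{\nabla f^\dagger(y)\}$ and hence $x = \nabla f^\dagger(y)$. Reading this as a statement about inverses: $\nabla f^\dagger(\nabla f(x)) = x$ for all $x$, i.e. $\nabla f^\dagger \circ \nabla f = \mathrm{id}$, which is exactly $(\nabla f^\dagger)^{-1} = \nabla f$ on the relevant domain.

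\textbf{Main obstacle.} The delicate point is the regularity/domain bookkeeping: asserting that $f^\dagger$ is differentiable where we need it. Strict convexity of $f$ yields differentiability of $f^\dagger$ on $\mathrm{int}(\mathrm{Dom}(f^\dagger))$, and differentiability of $f$ yields strict convexity of $f^\dagger$ on its domain --- but one must be careful that the range of $\nabla f$ lands inside $\mathrm{int}(\mathrm{Dom}(f^\dagger))$ so that $\nabla f^\dagger$ is actually defined at the points $y = \nabla f(x)$. In the setting where this lemma is applied (the potential function of an OT map, which is finite and smooth on an open set), these hypotheses hold, so I would either state the needed regularity as part of the hypotheses (closedness of $f$ plus the standing assumptions) or cite \citet{panov2015finite} / standard references such as Rockafellar for the precise duality statements. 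The remaining steps (Fenchel--Young, injectivity from strict convexity) are short and routine.
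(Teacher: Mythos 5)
Your argument is essentially the same as the paper's: both rest on the Legendre--Fenchel duality, namely that the equality case of the Fenchel--Young inequality links $y = \nabla f(x)$ with $x \in \partial f^\dagger(y)$ (the paper derives this by taking first-order conditions in the sup defining $f = (f^\dagger)^\dagger$ and then in the sup defining $f^\dagger$, while you invoke the subgradient inversion principle directly). Your write-up is somewhat more careful than the paper's --- you explicitly establish injectivity of $\nabla f$ from strict convexity and flag the need for $f^\dagger$ to be differentiable at points in the range of $\nabla f$, whereas the paper implicitly assumes attainment of the supremum and differentiability of $f^\dagger$ when it takes gradients --- but the mathematical content and proof structure are the same.
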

\begin{proof}\!:
Since function $f$ is closed and strictly convex, the conjugate function of $f^\dagger$ is still $f$. Therefore, for any $x\in \mb R^p$, we have
\begin{align*}
    f(x) = \sup_{z\in \mb R^p} \{ \langle x, z\rangle - f^{\dagger}(z) \}.
\end{align*}
Suppose the above optimization problem is maximized at $y\in\mb R^p$, then we have
\begin{align*}
    \nabla_y \left[\langle x, y\rangle - f^\dagger(y)\right] = 0 \quad 
    \Longrightarrow \quad \nabla f^\dagger(y) = x \quad \Longrightarrow \quad y = (\nabla f^\dagger)^{-1}(x),
\end{align*}
and 
\begin{align}
\label{eq:convex_conjugate_equ}
    f(x) = \langle x , y \rangle - f^\dagger(y).
\end{align}
By the definition of the conjugate function $f^\ast$, we have 
\begin{align*}
    f^\dagger(y) = \sup_{z\in\mb R^p} \left\{\langle y, z\rangle - f(z)\right\}.
\end{align*}
Then, according to the equation~\eqref{eq:convex_conjugate_equ}, the above optimization problem is maximized at point $x$. Therefore, we have 
\begin{align*}
    \nabla_x \left[\langle y, x\rangle - f(x)\right] = 0 \quad 
    \Longrightarrow \quad \nabla f(x) = y.
\end{align*}
Combined with the above equation, it can be shown that
\begin{align*}
    y = \nabla f(x) = (\nabla f^\dagger)^{-1}(x).
\end{align*}
\end{proof}

\subsection{Proof of \Cref{prop:equiv}} \label{sec:equiv_pf}
We first consider the case where $\mu$ and $\pi_n$ are continuous, and slightly abuse the same notation for their corresponding density function
\begin{align}
d_\mathrm{KL}(T_\# \mu \| \pi_n)
& \equiv \int \mu(T^{-1}(\theta)) |\det J _{T^{-1}}(\theta)|
\log \frac{\mu(T^{-1}(\theta)) |\det J_{T^{-1}}(\theta)| }{\pi_n(\theta)}
\dd \theta \nonumber \\ 
& = \int \mu(\xi) \log 
\frac{\mu(\xi) |\det J_T(\xi)|^{-1}}{\pi_n(T(\xi))}
\dd \xi \nonumber \\
& = \E_\mu [ - \log \pi_n \circ T - \log \abs{\det (J_T)} + \log \mu]. \label{eq:inverseKlexpansion}
\end{align}
Because the source distribution $\mu$ is fixed throughout the training, the third term $\E_\mu [\log \mu]$ is a constant term. Therefore, the optimization is equivalent to the minimization of the first two terms.

If one of the distribution is not continuous, we replace the underlying Lebesgue measure by the source measure $\mu$, and the ratio of density functions by the corresponding Radon-Nicodym derivative.

\subsection{Proof of Lemma~\ref{lem:splineConvexity}}\label{sec:splineConvexity_pf}
Since $h(x)=\langle \alpha, x\rangle +w$ is a linear function over $x\in \mb R^p$, it is convex. Moreover, as $\varphi$ is increasing and bounded, the function $u \mapsto \int^{u}_0 \varphi(s) \dd s$ is convex over $u\in\R$. As a result, the function $f$ as compositions of convex function and linear function is also convex.

\subsection{Proof of \Cref{thm:bvm_linear}}\label{sec:bvm_linear_pf}

{The proof follows from the fact that the asymptotic Gaussian approximation  distribution of $\mN(\widehat \theta_n, n^{-1} I^{-1}_{\theta_0})$ in \eqref{eq:bvm_kl} can be expressed as the pushforward measure of an affine transformation $\widetilde T(X) = \widehat \theta_n + n^{-1/2} I_{\theta_0}^{-1/2}X$ belonging to $\mfT_{\rm aff}$, where $X\sim \mu = \m N(0, I_p)$ and $I_{\theta_0}^{-1/2}$ denotes the matrix square root of $I_{\theta_0}^{-1}$.

Since $\wh T$ minimizes the KL divergence objective and $\widetilde T$ is a feasible solution, the resulting risk value satisfies
\[
d_\mathrm{KL}\big({\wh T}_\#\mu \,\| \, \pi_n\big) \leq d_\mathrm{KL}\big(\mN(\widehat \theta_n, n^{-1} I^{-1}_{\theta_0})\,\| \, \pi_n\big) = \m O_p(n^{-1/2}).
\]
}

\subsection{Proof of Theorem~\ref{thm:discrete_case}}\label{sec:proof_theorem1}
Our proof has two parts. In the first part, we show that any optimal transport map must take the form of formula~\eqref{eq:dis_trans_map}. In the second part, we show that if any transport map $T$ takes the form of \eqref{eq:dis_trans_map} and pushforwards the reference measure $\mu$ to the target posterior $\pi_n$, i.e. $T_\# \mu = \pi_n$, then it must be an optimal transport map from $\mu$ to $\pi_n$. Finally, these two properties together with the uniqueness and regularity properties in \Cref{lem:regularity_OT} lead to the claimed results. 

\paragraph{\bf Part 1: OT map must take \eqref{eq:dis_trans_map}.}

In this part of the proof, we derive the optimal transport map between the reference distribution $\mu$ and target distribution $\pi_n$ which contains both discrete and continuous components. We define a random variable $Y=(Y^{(1)},\,Y^{(2)}):= (b_\tau,\,\zeta) \in \{b_k\}_{k=1}^K \times \mb R^p$ associated with $\theta=(\tau, \zeta)$. For notation simplicity, we also assume $Y$ to follow $\pi_n$.

The Kantorovich problem between $X\sim \mu$ and $Y\sim \pi_n$ associated with the problem~(MPm) is defined as
\begin{align*}
    \mbox{(KPm)}\qquad \inf_{\gamma \in \Pi(\nu,\, \pi_n)} \int \kappa\norm{x^{(1)} - y^{(1)}}^2 + \norm{x^{(2)} - y^{(2)}}^2 \, \dd \gamma(x, y).
\end{align*}
Then, the dual formulation for (KPm), known as Kantorovich duality, is defined as 
\begin{align}
\label{eq:duality}
    &\sup_{\varphi,\, \psi} \int_{\mb R^r\times \mb R^p} \varphi(x) \,\dd \mu(x) + \int_{\mb R^r\times \mb R^p} \psi(y) \,\dd \pi_n(y),\\
    &\quad  \mbox{s.t.}\quad \varphi \in L^1(\mu),\quad \psi \in L^1(\pi_n) \nonumber \\
    &\qquad\mbox{and}\quad \varphi(x) + \psi(y) \leq \kappa\norm{x^{(1)} - y^{(1)}}^2 + \norm{x^{(2)} - y^{(2)}}^2 \nonumber\\
    &\quad \mbox{for every} \ x\in\mb R^r\times \mb R^p \quad \mbox{and} \quad y \in \{b_k\}_{k=1}^K \times \mb R^p. \nonumber
\end{align}
For each eligible function pair $(\varphi,\,\psi)$ in  \eqref{eq:duality}, we have 
\begin{align*}
    & \varphi(x) + \psi(y) \leq \kappa\norm{x^{(1)} - y^{(1)}}^2 + \norm{x^{(2)} - y^{(2)}}^2 \\
    \iff & \underbrace{\kappa\norm{x^{(1)}}^2 + \norm{x^{(2)}}^2 - \varphi(x)}_{f(x)} + \underbrace{\kappa\norm{y^{(1)}}^2 + \norm{y^{(2)}}^2 - \psi(y)}_{g(y)} \\
    & \geq 2\kappa \dprod{x^{(1)}, y^{(1)}} + 2\dprod{ x^{(2)}, y^{(2)}}.
\end{align*}
Therefore, the dual formulation \eqref{eq:duality} can be equivalently written as 
\begin{align*}
    \mbox{(DPm)} \quad & C_{\mu,\, \pi_n} - \inf_{f, \,g} \left\{\int_{\mb R^r\times \mb R^p} f(x) \,\dd \mu(x) + \int_{\mb R^r\times \mb R^p} g(y) \,\dd \pi_n(Y)\right\}, \\
    &\quad \mbox{s.t.}\quad f \in L^1(\mu),\quad g \in L^1(\pi_n)\\
    &\qquad \mbox{and}\quad f(x) + g(y) \geq 2\kappa \dprod{ x^{(1)}, y^{(1)}} + 2\dprod{x^{(2)}, y^{(2)}} \\
    &\quad \mbox{for every} \ x\in\mb R^r\times \mb R^p \quad \mbox{and} \quad y \in \{b_k\}_{k=1}^K \times \mb R^p,
\end{align*}
where 
\begin{align*}
    C_{\mu,\,\pi_n} := \int_{\mb R^r\times \mb R^p} \kappa\norm{x^{(1)}}^2 + \norm{x^{(2)}}^2\,\dd \mu(x) + \int_{\mb R^r\times \mb R^p} \kappa\norm{y^{(1)}}^2 + \norm{y^{(2)}}^2\,\dd \pi_n(y).
\end{align*}
Therefore, for any eligible pair of functions $(f,\,g)$ in problem~(DPm), we have
\begin{align}
\label{eq:condition_on_func_detailed}
    f(x) & \geq \sup_{z\in \{b_k\}_{k=1}^K \times \mb R^p}\left\{2\kappa \langle x^{(1)}, z^{(1)}\rangle + 2\langle x^{(2)}, z^{(2)}\rangle - g(z) \right\} \nonumber\\
    & = 2\sup_{k\in[K]}\left\{\kappa \langle x^{(1)}, b_k\rangle + \sup_{z^{(2)}\in\mb R^p} \left\{ \langle x^{(2)}, z^{(2)}\rangle - \frac{1}{2}g(z^{(1)}, z^{(2)}) \right\}\right\}.
\end{align}
We define $h(y):=\frac{1}{2}g(y)$ over $\{b_k\}_{k=1}^K \times \mb R^p$ and $h_k(y^{(2)}) := h(b_k, y^{(2)})$ for $k\in[K]$. Then, for any function pair $(f, h)$ eligible for problem (DPm), we have
\begin{align*}
    f(x) + 2h(y) \geq 2\kappa \dprod{x^{(1)}, y^{(1)}} + 2\dprod{ x^{(2)}, y^{(2)}}.
\end{align*}
In the following, we show that it suffices to only consider convex functions for $h_k$ to solve the problem (DPm). For any function $h_k$ in $\mb R^p$, its convex conjugate function is defined as
\begin{align*}
    h_k^{\ast}(x^{(2)}) := \sup_{Z^{(2)}\in\mb R^p} \left\{\dprod{ x^{(2)}, Z^{(2)}} - h_k(Z^{(2)})\right\}.
\end{align*}
Then, we define a function $\tilde{f}$ over $\mb R^r$ as
\begin{align}
\label{eq:f_conjugate}
    \tilde{f}(x) &:= 2\sup_{k\in[K]}~\left\{\kappa\dprod{ x^{(1)}, b_k} + h_k^{\ast}(x^{(2)})\right\} \\
    &= 2 \sup_{k\in[K]}~\left\{\kappa \dprod{ x^{(1)}, b_k } + \sup_{z^{(2)}\in \mb R^p}\left\{ \dprod{ x^{(2)},Z^{(2)}} - h_{k}(Z^{(2)}) \right\}\right\}. \nonumber
\end{align}
For any function pair $(f,\,h)$ eligible for problem (DPm), the pair $(\tilde{f},\,h)$ improves the value of the objective function in (DPm), since $f(x) \geq \tilde{f}(x)$ for all $x\in \mb R^r$ according to the definition of $\tilde{f}$ and the inequality in \eqref{eq:condition_on_func_detailed}. Then, we can define a function $h_k^{\dagger}$ over $\mb R^p$ as the convex conjugate function of $h_k^{\ast}$ and function $h^{\dagger}(b_k, y^{(2)}) :=h_k^{\dagger}(y^{(2)})$ over $\{b_k\}_{k=1}^K \times \mb R^p$. The function pair $(\tilde{f},\, h^{\dagger})$ will further improve $(\tilde{f}, h)$ on the value of the objective function in (DPm), since $h(y) \geq h^{\dagger}(y)$ for all $y=(b_k, y^{(2)})\in \{b_k\}_{k=1}^K \times \mb R^p$ by the definition of the conjugate functions, i.e., 
\begin{align*}
    h^{\dagger}(y) &= h_k^{\dagger}(y^{(2)}) \\
    &= \sup_{x^{(2)}\in \mb R^p} \left\{\dprod{ x^{(2)}, y^{(2)}} - h_{k}^{\ast}(x^{(2)})\right\} \\
    &= \sup_{x^{(2)}\in \mb R^p} \left\{\dprod{ x^{(2)}, y^{(2)}} - \sup_{Z^{(2)}\in \mb R^p} \left\{ \dprod{ x^{(2)}, Z^{(2)} }- h_{k}(Z^{(2)}) \right\}\right\} \\
    &\leq \sup_{x^{(2)}\in \mb R^p} \left\{\dprod{ x^{(2)}, y^{(2)}} - \dprod{ x^{(2)}, y^{(2)}} + h_{k}(y^{(2)})\right\} \\
    &= h_{k}(y^{(2)}) = h(y).
\end{align*}
Based on the above discussion, the problem (DPm) can be formulated as a problem with $K$ convex functions $h_k(y^{(2)})$ for $k\in[K]$, i.e.,
\begin{align*}
    \mbox{(DPm')} \quad & C_{\mu,\, \pi_n} - 2 \cdot \inf_{f,\,h} \left\{\int_{\mb R^r\times \mb R^p} f(x) \,\dd \mu(x) + \int_{\{b_k\}_{k=1}^K \times \mb R^p} h(y) \,\dd \pi_n(y)\right\}, \\
    &\quad  \mbox{s.t.} \quad h \in L^1(\pi_n) \quad \mbox{and} \quad h_k(y^{(2)}):=h(b_k,y^{(2)}) \ \mbox{is convex over} \ \mb R^p \ \mbox{for} \ k\in[K] \\
    &\qquad \mbox{and} \quad f(x) := \max_{k\in[K]}~\left\{\kappa\langle x^{(1)}, b_k\rangle + h_k^{\ast}(x^{(2)})\right\}.
\end{align*}
According to the duality equality between (KPm) and (DPm) and the equivalence between (DPm) and (DPm'), it is satisfied that if point $(x, y)$ belongs to the support of the optimal transport plan $\gamma$ between $\mu$ and $\pi_n$ minimizing (KPm) and $(f,\, h)$ is a pair of functions maximizing (DPm'), then
\begin{align*}
    f(x) + h(y) = \kappa \dprod{ x^{(1)}, y^{(1)}} + \dprod{ x^{(2)}, y^{(2)}}.
\end{align*}
This implies that for $(x,y)\in \gamma$ with $y = (b_\tau, \zeta)$, it  satisfies 
\begin{align}
\label{eq:optimality_equal}
    & \kappa \dprod{ x^{(1)}, b_\tau} + \dprod{ x^{(2)}, \zeta} - h_\tau(\zeta) \nonumber \\
    =& \max_{k\in[K]}~\left\{\kappa\dprod{ x^{(1)}, b_k}+ h_k^{\ast}(x^{(2)})\right\} \nonumber \\
    =& \max_{k\in[K]}~\left\{\kappa \dprod{ x^{(1)}, b_k} + \sup_{Z^{(2)}\in \mb R^p}\left\{ \dprod{ x^{(2)}, Z^{(2)}} - h_{k}(Z^{(2)}) \right\}\right\} \nonumber \\
    \geq& \left\{\kappa \dprod{ x^{(1)}, b_\tau} + \dprod{ x^{(2)}, \zeta} - h_\tau(\zeta)\right\}
\end{align}
where the first equation is by the definition of $f$.
Therefore, the equality is achieved in the last step. According to equation~\eqref{eq:optimality_equal}, the discrete random variable $\tau\in[K]$ associated with $b_\tau$ is then defined as
\begin{align*}
    \tau = \argmax_{k\in[K]} \left\{ \kappa \dprod{ x^{(1)}, b_k} + h_k^{\ast}(x^{(2)})\right\}.
\end{align*}
Then, by plugging the definition of $\tau$ into \eqref{eq:optimality_equal}, we have 
\begin{align*}
    h^{\ast}_\tau(x^{(2)})=\dprod{ x^{(2)}, \zeta} - h_\tau(\zeta).
\end{align*}
According to the definition of the conjugate function $h_\tau^{\ast}$ of $h_\tau$, the following equation is satisfied.
\begin{align*}
    \nabla_{\zeta} \left[\dprod{ x^{(2)}, \zeta} - h_\tau(\zeta)\right] = 0 \quad \Longrightarrow \quad  x^{(2)} - \nabla h_\tau(\zeta) = 0.
\end{align*}
Therefore, we have
\begin{align*}
    \zeta = (\nabla h_\tau)^{-1}(x^{(2)}) = \nabla h_\tau^{\ast}(x^{(2)}),
\end{align*}
where the second equality is according to Lemma \ref{lem:grad_inv}.

Based on the above relationship between the optimal transport plan $\gamma$ solving (KPm) and the function pair $(f, h)$ solving (DPm'), the optimal transport map $T^*$ that pushforwards $\mu$ to $\pi_n$ can be characterized through $K$ convex functions $\{\phi_k\}_{k=1}^K$ over $\mb R^p$ with $\phi_k(x^{(2)}):= \frac{1}{\kappa} h_k^\ast(x^{(2)})$ for $k\in[K]$. Specifically, $T^*$ maps $(x^{(1)},\,x^{(2)})$ to $(\tau,\,\zeta)$ with
\begin{align}
\label{eq:trans_map}
\left\{\begin{array}{ll}
\tau &= T^{*(1)}\big(x^{(1)},\,x^{(2)}\big) = \argmax_{k\in[K]} \left\{ \dprod{x^{(1)}, \, b_k} + \phi_k(x^{(2)})\right\}, \\
\zeta &= T^{*(2)}\big(x^{(1)},\,x^{(2)}\big) = \kappa \nabla \phi_\tau(x^{(2)}).
\end{array}\right.
\end{align}
In addition, the convex function $u^*$ associated with the optimal transport map $T^*$ is in form of function $f$, i.e.,
\begin{align*}
    u^*\big(x^{(1)},x^{(2)}\big) = \kappa \max_{k\in [K]} \big\{\dprod{ x^{(1)},\, b_k} + \phi_k(x^{(2)}) \big\}.
\end{align*}

\paragraph{\bf Part 2: Map with Form \eqref{eq:dis_trans_map} is Optimal.}

In this part of the proof, we show the following property: suppose there exists a transport map $T_0$ in form of \eqref{eq:trans_map} satisfying $(T_0){\#}(\mu)=\pi_n$, the transport map $T_0$ is then the optimal transport map for the problem (MPm). In addition, the induced transport plan $(id, T_0)_{\#}(\mu)$ is the optimal plan for the Kantorovich problem (KPm) between $\mu$ and $\pi_n=(T_0)_{\#}(\mu)$.

To see this, suppose there exist convex functions $\{\phi_k\}_{k=1}^K$ such that $(T_0)_{\#}(\mu)=\pi_n$. We can define a pair of $(f, h)$ that is eligible for problem (DPm'), i.e.,
\begin{align*}
    f(x^{(1)}, x^{(2)}) :=& \kappa \max_{k\in[K]}~\left\{\dprod{ x^{(1)}, b_k}+ \phi_k(x^{(2)})\right\} \\
    h(b_\tau, \zeta) :=& (\kappa\phi_\tau)^{\ast}(\zeta)\quad \mbox{for} \quad  \tau\in[K],
\end{align*}
where $(\kappa\phi_k)^{\ast}$ is the convex conjugate function of $\kappa \phi_k$ for $k\in[K]$.

For any point $(x, y)$ with $y=(b_\tau, \zeta)$ that belongs to the support $\gamma_0=(id, T_0)_{\#}(\mu)$, we have 
\begin{align*}
    f(x) + h(y) &= \max_{k\in[K]} \kappa \left\{\dprod{x^{(1)}, b_k} + \phi_k(x^{(2)})\right\} + (\kappa\phi_\tau)^{\ast}(\zeta) \\
    &= \kappa \dprod{x^{(1)}, b_\tau} + \kappa \phi_\tau(x^{(2)}) + (\kappa\phi_\tau)^{\ast}(\zeta) \\
    &= \kappa \dprod{ x^{(1)}, b_\tau} + \dprod{x^{(2)}, \zeta},
\end{align*}
where the second equation is by the definition of $\tau$ in \eqref{eq:trans_map} and the third equation is by the property of the convex function $\kappa\phi_\tau$.

Therefore, we have the following inequality.
\begin{align*}
    \min(\mbox{KPm}) &\leq \int \kappa\norm{x^{(1)} - y^{(1)}}^2 + \norm{x^{(2)} - y^{(2)}}^2 \, \dd \gamma_0(x, y) \\
    &= C_{\mu, \pi_n} - 2\int \kappa \dprod{ x^{(1)}, y^{(1)}} + \dprod{ x^{(2)}, y^{(2)}} \, \dd\gamma_0(x, y) \\
    &= C_{\mu, \pi_n} - 2\int f(x) + h(y)\,\dd\gamma_0(x, y) \\
    &= C_{\mu, \pi_n} - 2\left[\int f(x) \,\dd\mu(x) + \int h(x) \,\dd\pi_n(y) \right]  \\
    &\leq \max(\mbox{DPm'}).
\end{align*}
In addition, since $\min(\mbox{KPm}) \geq \max(\mbox{DPm'})$ by the inequality between the primal and dual problems, we have $\min(\mbox{KPm}) = \max(\mbox{DPm'})$. Therefore $T_0$ is the optimal transport map that pushforwards $\mu$ to $\pi_n$ and $(id, T_0)_{\#}(\mu)$ is the optimal plan between $\mu$ and $\pi_n=(T_0)_{\#}(\mu)$.

\subsection{Proof of \Cref{lem:mixture_KL}}\label{sec:mixture_KL_pf}

Let us denote the intermediate joint density of $(\tau, X^{(2)})$ as $\wt \nu(\tau, X^{(2)}) := P(\tau\mid X^{(2)})\mu_2(X_2)$. Then we can rewrite the KL divergence in \eqref{eqn:KL_Form} as
\begin{align*}
d_\text{KL} (T_\# \mu || \pi_n) &= \int \wt \nu(\wt T^{-1}(\tau, \zeta))\abs{\det J_{\wt T^{-1}}(\tau, \zeta)}\log \frac{\wt \nu(\wt T^{-1}(\tau, \zeta))\abs{\det J_{\wt T^{-1}}(\tau, \zeta)}}{\pi_n(\tau, \zeta)} \dd \tau \dd \zeta \\
& = \int \wt \nu(\tau, X^{(2)}) \log \frac{\wt \nu(\tau, X^{(2)})\abs{\det J_{\wt T}(\tau, X^{(2)})}^{-1}}{\pi_n (\wt T(\tau, X^{(2)}))} \dd \tau \dd X^{(2)} \\
& = \int \wt \nu(\tau, X^{(2)}) \log \frac{P(\tau\mid X^{(2)})\mu_2(X^{(2)}) \abs{\det J_{\wt T}(\tau, X^{(2)})}^{-1}}{\pi_n (\wt T(\tau, X^{(2)}))} \dd \tau \dd X^{(2)} \\
&=   \int \sum_{k=1}^K P(k\mid X^{(2)})\mu_2(X^{(2)})\log \frac{P(k\mid X^{(2)})\mu_2(X^{(2)})\abs{\det J_{\wt T}(k, X^{(2)})}^{-1}}{\pi_n (\wt T(k, X^{(2)}))}  \dd X^{(2)}.
\end{align*}
Since we know given $X^{(1)}, X^{(2)}$, $\tau$ is deterministic (we abuse the notation as $\tau(X^{(1)}, X^{(2)})$)
\[
P(k \mid X^{(2)}) = \int P(k \mid X^{(1)}, X^{(2)}) \mu(X^{(1)}\mid X^{(2)}) \dd X^{(1)} =  \int 1_{\tau(X^{(1)}, X^{(2)}) =k }\mu(X^{(1)}\mid X^{(2)})\dd X^{(1)},
\]
then we can proceed the KL divergence derivation as
\begin{align*}
d_\text{KL} (T_\# \mu || \pi_n) & =   \int \sum_{k=1}^K 1_{\tau(X^{(1)}, X^{(2)}) =k }\mu(X)\log \frac{P(k\mid X^{(2)})\mu_2(X^{(2)})\abs{\det J_{\wt T}(k, X^{(2)})}^{-1}}{\pi_n (\wt T(k, X^{(2)}))}  \dd X^{(1)} \dd X^{(2)} \\
& =\int \mu(X) \log \frac{P(\tau\mid X^{(2)})\mu_2(X^{(2)})\abs{\det J_{\wt T}(\tau, X^{(2)})}^{-1}}{\pi_n (\wt T(\tau, X^{(2)}))}  \dd X^{(1)} \dd X^{(2)}\\
& =\int \mu(X) \log \frac{P(\tau\mid X^{(2)})\abs{\det J_{\wt T}(\tau, X^{(2)})}^{-1}}{\pi_n (\wt T(\tau, X^{(2)}))}  \dd X^{(1)} \dd X^{(2)} +\text{const}.
\end{align*}

{
\section{Sensitivity Analysis}\label{app:sen}

\subsection{Choice of the embeddings}\label{sec:embedding_choice} In \Cref{sec:mixed_case}, we propose to embed the discrete parameter $\tau$ into an embedding space $\{b_k\}_{k=1}^K \subset \R^r$ to encode the categorical information. Here we provide some discussions on the choice of the embedding space.

As we mention in \Cref{sec:mixed_case}, we choose different embedding strategies based on the nature of the discrete variable $\tau$. When $\tau$ is a nominal variable, we embed the categories as the vertices of a regular polyhedron. By a regular polyhedron, we mean a convex polytope whose vertices form a finite set of points in Euclidean space with identical pairwise Euclidean distances. This construction ensures that all categories are treated symmetrically. One convenient realization is the one-hot encoding in $\R^K$, namely $b_k = e_k$ for $k \in [K]$, where $e_k = (0,\ldots,0,1,0,\ldots,0)^\top$ has a single nonzero entry in the $k$th coordinate. Under this embedding, the pairwise distance between any two distinct embeddings is $\norm{e_k - e_l}_2 = \sqrt{2}$.
Similarly, when $\tau$ is an ordinal variable, we embed the categories as an increasing sequence on the real line. A simple choice is to set $b_k = k$ for $k \in [K]$, which naturally reflects the ordering and quantifies the gap between different levels of the ordinal variable.

The embedding choice is not unique and we propose the above strategies for their simplicity and interpretability. To provide some reassurance on the embedding choice, here we conduct a sensitivity analysis on the scale of the embedding vectors. We focus on the categorical variable case and consider the mixture of Gaussians example in \Cref{sec:mixture_normal_simu} with $d=5$ and $K=10$. We set the embeddings as $b_k = b\cdot e_k$ for $k\in[K]$ where $b$ can be any constant $b>0$. Under this notation, our transport cost in (MPm) can be written as
\[
\E_{\mu} \Big[\kappa \underbrace{\norm{x^{(1)}-b e_{T^{(1)}}(x) }^2}_\text{discrete part cost}+\underbrace{\norm{x^{(2)}-T^{(2)}(x)}^2}_\text{continuous part cost}\Big].
\]

Certainly, $\kappa$ and $b$ are playing different roles here. While $\kappa$ is controlling the relative weight on the discrete part cost, $b$ controls the distance between different categories and thus how sensitive the loss to different categories. When $b$ is small, categories are closer and the gradient $\nabla\phi_\tau$ are small and less discriminative. When $b$ is large, categories are further away and the gradient $\nabla\phi_\tau$ are big. However, it is hard to disentangle its impact on the contribution of the discrete part cost and the magnitude of the gradient $\nabla\phi_\tau$ from the mathematical representations of the solutions.

Empirically, we do not observe significant difference in the posterior distribution when we change $b$. For example, we consider the Gaussian mixture model with $\Delta=2$ case in Section 5.3 and compute the OT posterior with varying $b$ with $\log(b)$ from -10 to 4, as shown in \Cref{fig:GMM_different_b}. We find the posteriors are very similar except only in the case when $b=e^4=54.60$ is really large. Thus, our OT map is not very sensitive to the scaling of the discrete embedding.
\begin{figure}[!ht]
\centering
\includegraphics[width=0.8\textwidth]{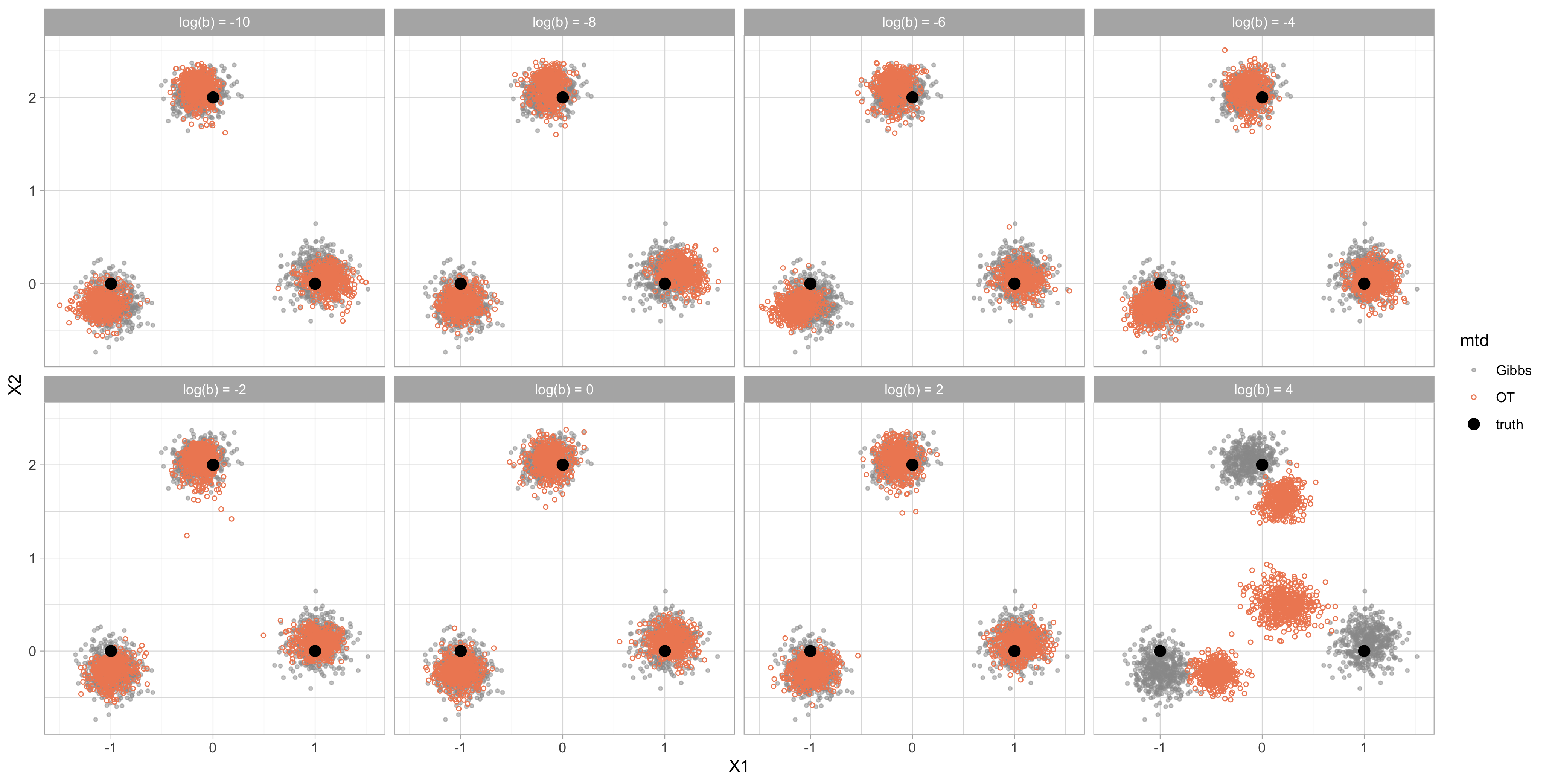}\label{fig:GMM_different_b}
\caption{OT posterior under the GMM model when $b$ is different.}
\end{figure}

\subsection{Choice of the number of local potential functions $L$}\label{sec:choice_L}
When the target distribution is highly multi-modal and the modes are well-separated, \Cref{lem:regularity_OT_2} suggests that the number of local potential functions $L$ should at least match the number of modes $K$ to approximate the optimal transport map well. In practice, the expressiveness of the transport map class is jointly determined by the number of local potential functions $L$ and the complexity of each local potential function (e.g., the number of hidden units). Therefore, one can choose a smaller $L$ if each local potential function is sufficiently complex. When $L>K$, it adds more flexibility to the transport map class, which may help improve the approximation accuracy. In our previous simulation results in \Cref{sec:mixture_normal_simu}, we show that when $L>K$, the OT map can still approximate the target distribution well without overfitting issues. Additionally, the number of well-separated modes $K$ can be smaller than $K^*$, the true number of mixture components, and choosing $L<K^*$ may still yield satisfactory performance. For example, in the Banana distribution in \Cref{sec:additional_quantile}, we set $L=1$ while the true distribution is a mixture of three Gaussians.

To provide a concrete example, we consider the mixture of Gaussians example in \Cref{sec:mixture_normal_simu} with $d=5$ and $K=10$. We examine a list of values for $L$, 
ranging from underestimating to overestimating the number of components. We provide a visualization of the estimated posteriors when $L=1,3,5,7,10,15$ in \Cref{fig:posterior_vs_L}. We observe that even though the true number of components is $K=10$, the OT map with $L=5$ is already able to capture the main structure of the posterior distribution. Increasing $L$ further improves the approximation quality, but the improvement becomes marginal after $L=10$.

\begin{figure}[!ht]
    \centering
    \subfloat[$L=1$, nhid$=16$]{
    \includegraphics[width=0.22\textwidth]{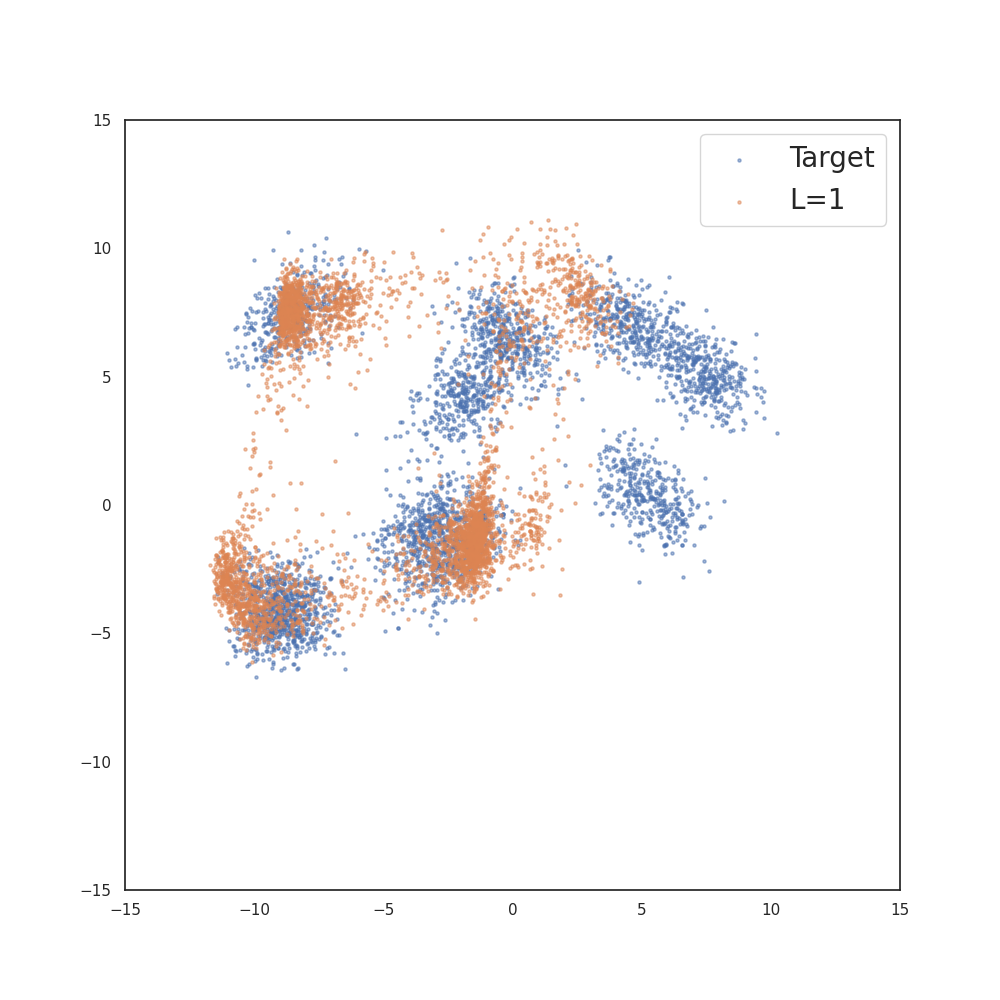}
    }
    \subfloat[$L=3$, nhid$=16$]{
    \includegraphics[width=0.22\textwidth]{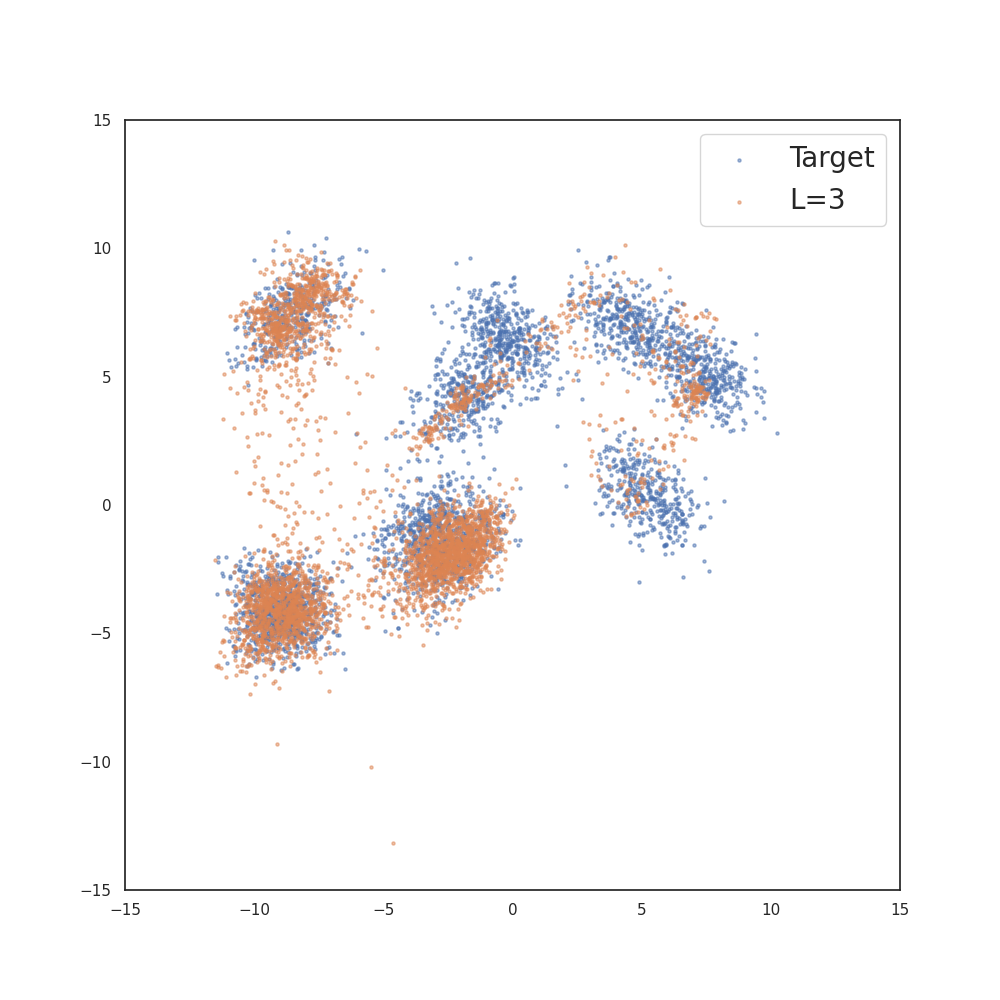}
    }
    \subfloat[$L=5$, nhid$=16$]{
    \includegraphics[width=0.22\textwidth]{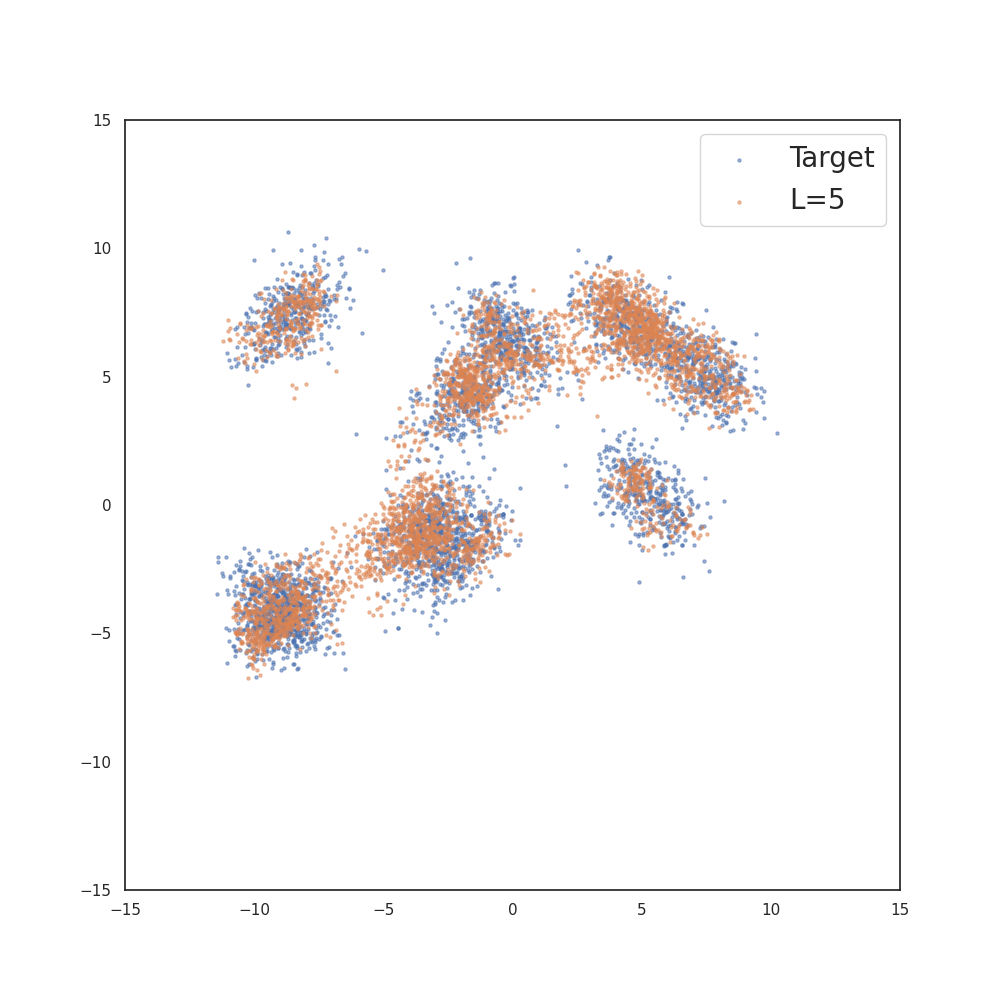}
    }
    \subfloat[$L=7$, nhid$=16$]{
    \includegraphics[width=0.22\textwidth]{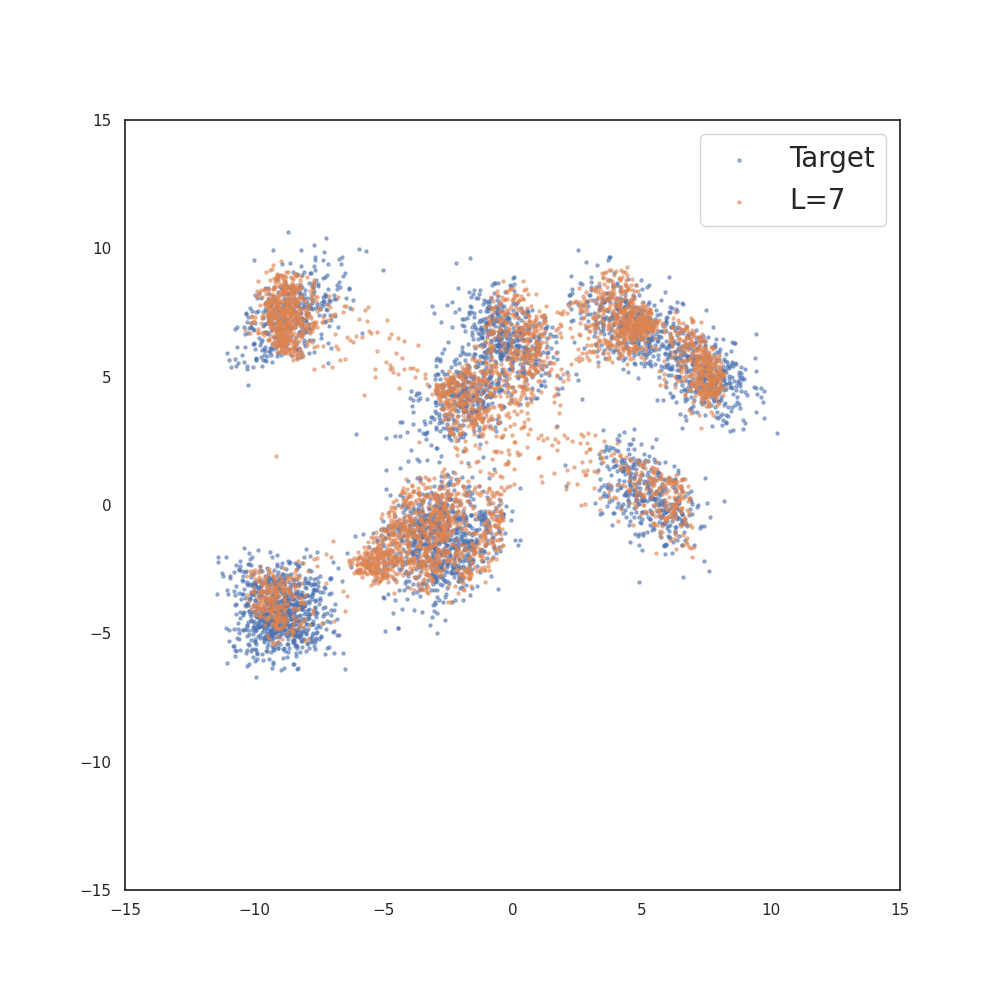}
    }
    \\
    \subfloat[$L=10$, nhid$=16$]{
    \includegraphics[width=0.22\textwidth]{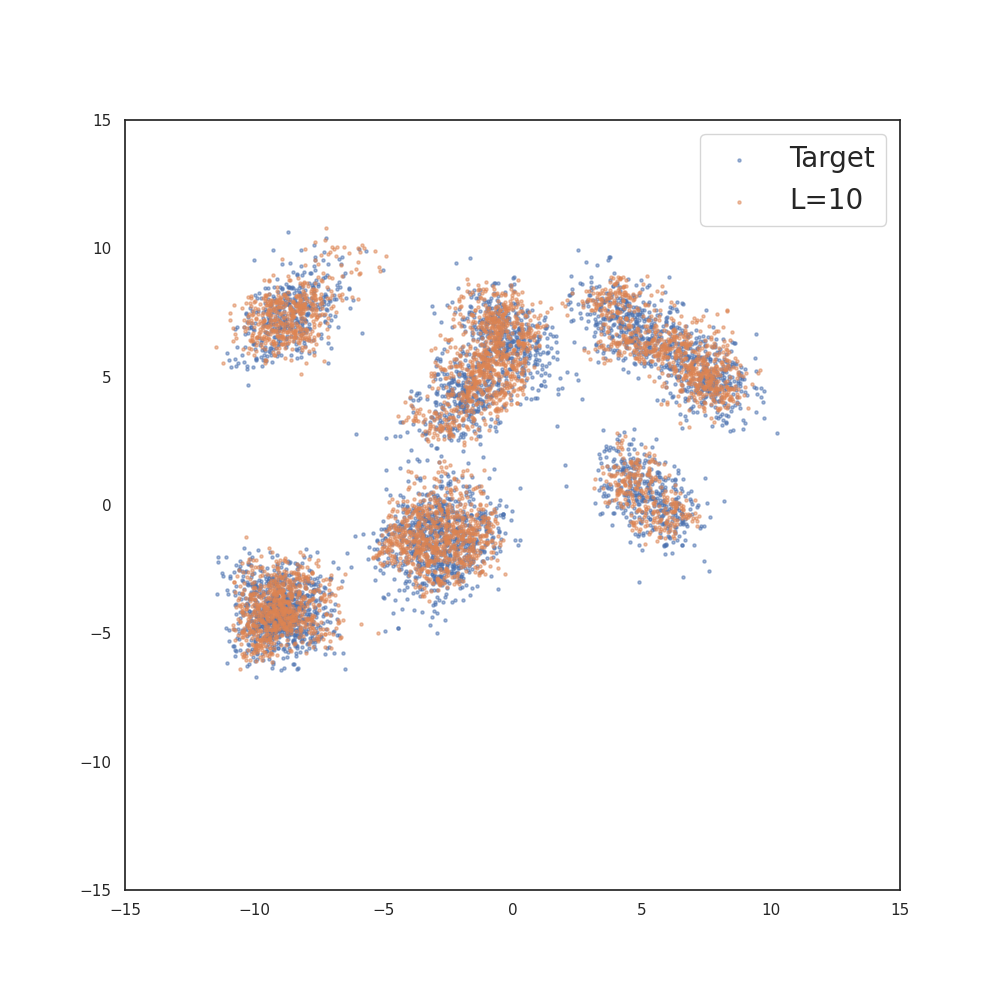}
    }
    \subfloat[$L=15$, nhid$=16$]{
    \includegraphics[width=0.22\textwidth]{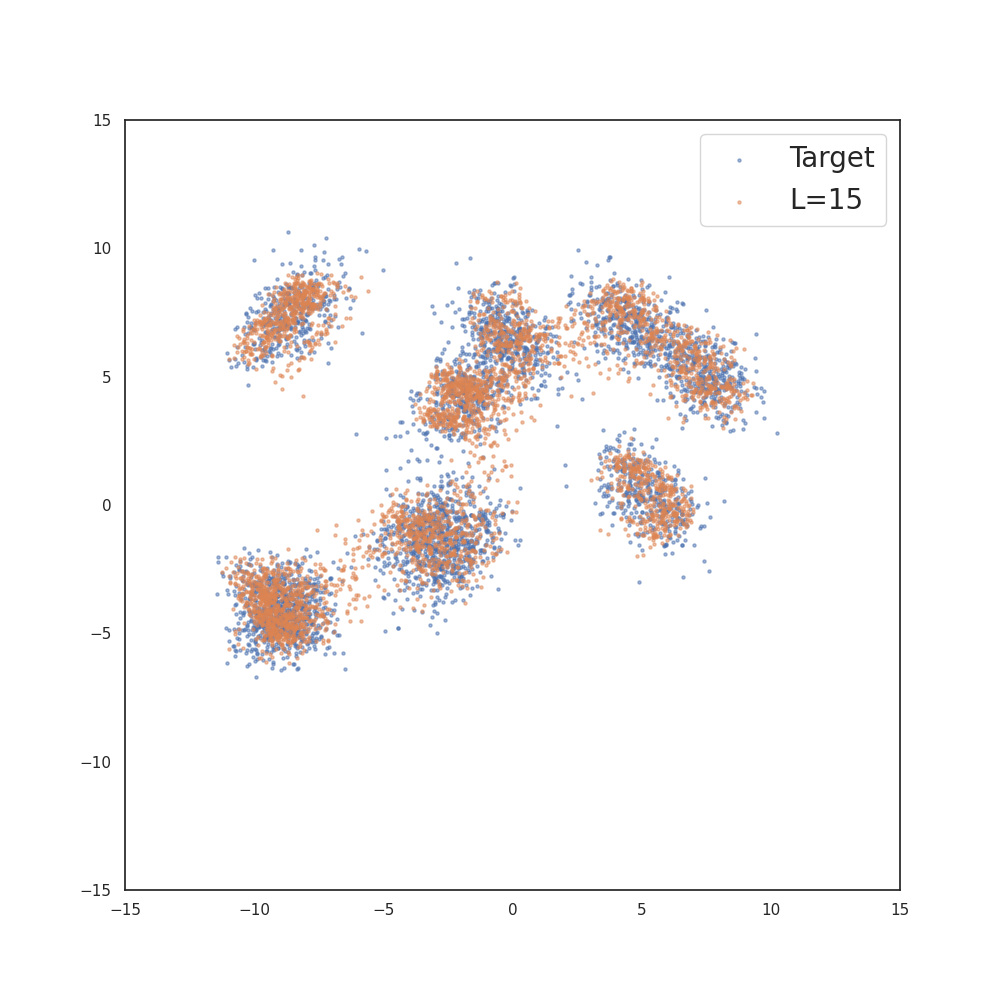}
    }
    \subfloat[$L=3$, nhid$=32$]{
        \includegraphics[width=0.22\textwidth]{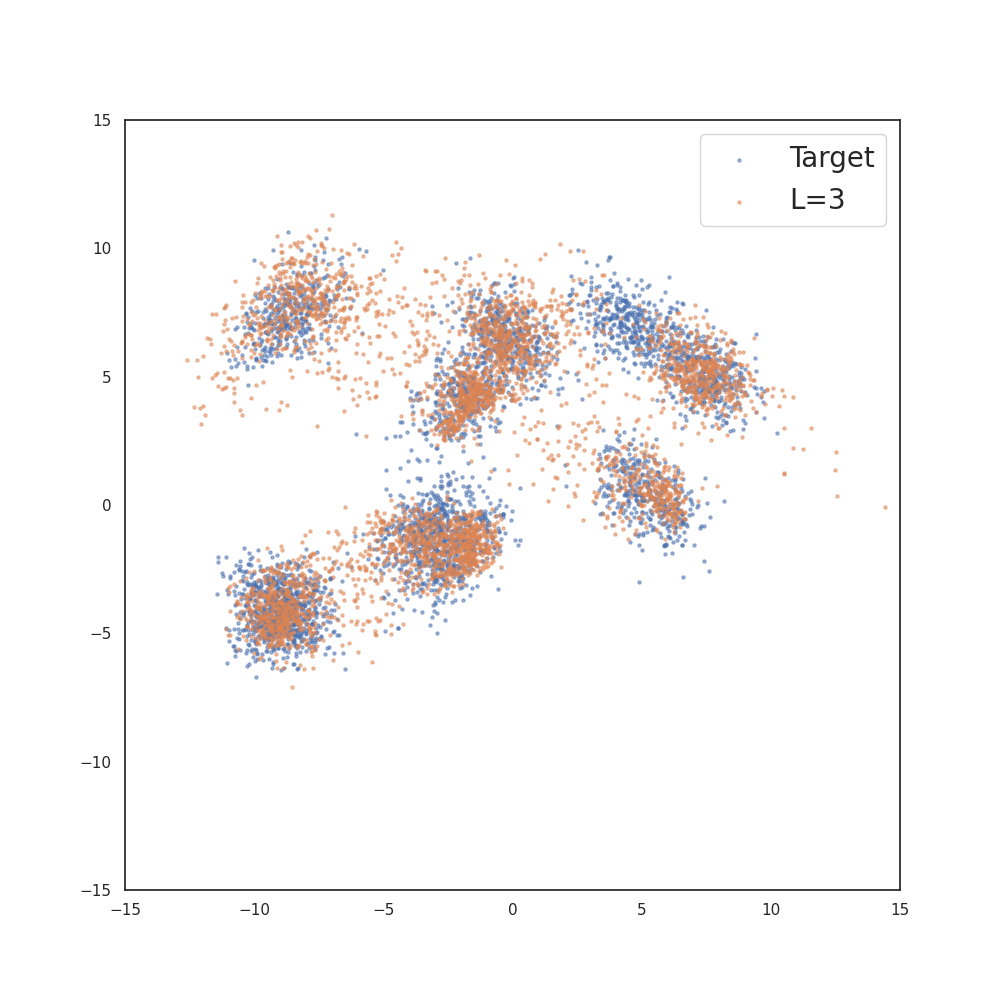}
    }
    \subfloat[$L=5$, nhid$=32$]{
        \includegraphics[width=0.22\textwidth]{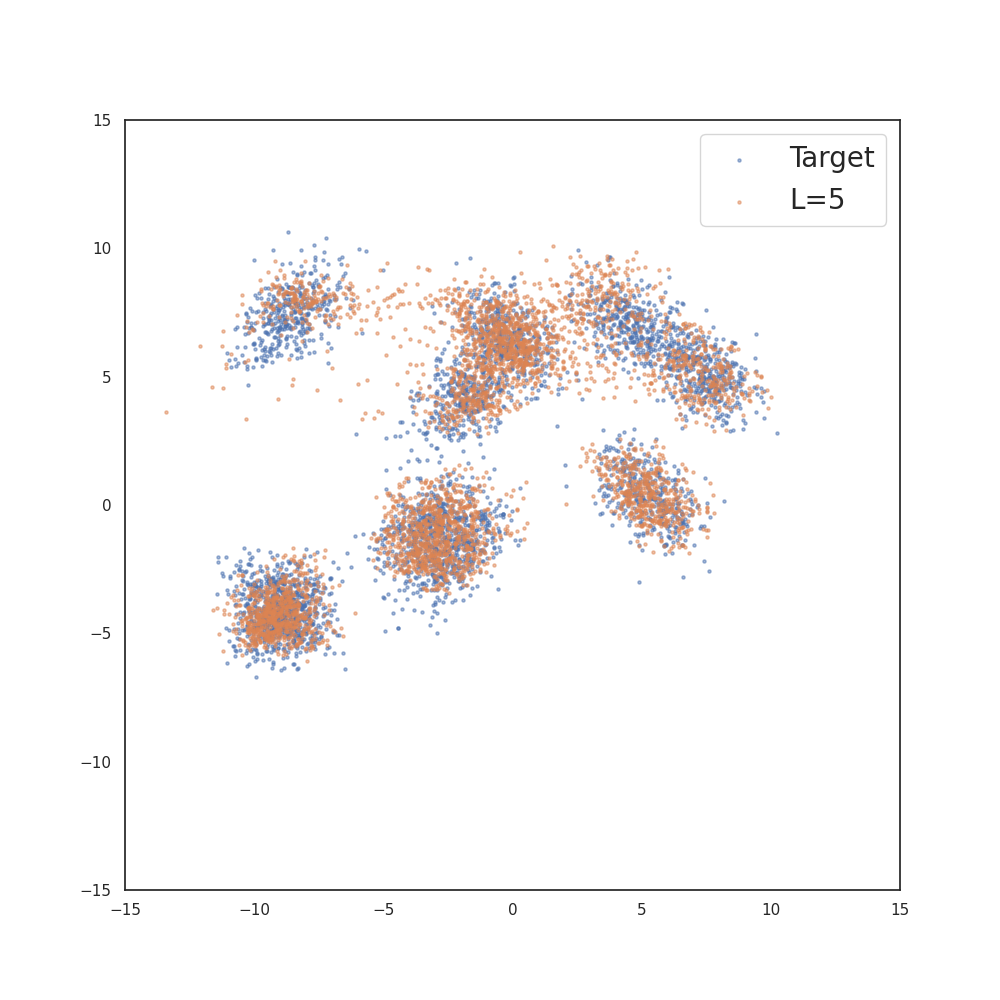}
    }
    \caption{Approximated posterior distributions with varying $L$ in the mixture of Gaussian example with $d=5$ and $K=10$. All models are trained with $15\,000$ iterations with batch size = 512.}\label{fig:posterior_vs_L}
\end{figure}

Additionally, one can select $L$ in a data-driven way in practice. For example, we check the change in training loss with increasing $L$ and we obtain a plot similar to the scree plot in PCA, shown in \Cref{fig:loss_vs_L}. The ``elbow point" in this plot is around $L=7$, which suggests that choosing $L=7$ is sufficient to capture the main structure of the posterior distribution in this example. This is consistent with our finding from the posterior visualization. From both \Cref{fig:posterior_vs_L} and \Cref{fig:loss_vs_L}, we see when we increase the number of hidden units from 16 to 32, the training loss decreases for all choices of $L$ and the posterior draws improve even when $L<K$, which indicates that increasing the complexity of each local potential function also helps improve the expressiveness of the transport map class.

\begin{figure}[!ht]
    \centering
    \includegraphics[width=0.6\textwidth]{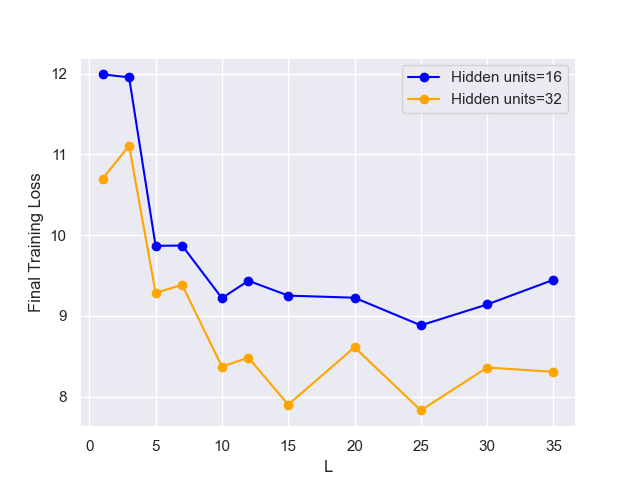}
    \caption{Training loss versus $L$ in the mixture of Gaussian example with $d=5$ and $K=10$.  All models are trained with $15\,000$ iterations with batch size = 512.}
    \label{fig:loss_vs_L}
\end{figure}

}
\end{document}